\newcommand{\Uu}{\mathcal{U}}
\newcommand{\Cc}{\mathscr{C}}
\newcommand{\N}{\mathbb{N}}
\renewcommand{\phi}{\varphi}
\renewcommand{\epsilon}{\varepsilon}
\newcommand{\strA}{\str{A}}
\renewcommand{\mid}{~:~}
\newcommand{\from}{\colon}
\renewcommand{\leq}{\leqslant}
\renewcommand{\geq}{\geqslant}
\renewcommand{\le}{\leqslant}
\renewcommand{\ge}{\geqslant}
\newcommand{\set}[1]{\{#1\}}
\newcommand{\setof}[2]{\left\{#1 \,\mid\, #2 \right\}}
\renewcommand{\subset}{\subseteq}
\newcommand{\DKT}{Dvo\v r\'ak, Kr\'al', and Thomas\xspace}
\newcommand{\pow}[1]{{\mathscr P}(#1)}
\newcommand{\str}[1]{\mathbf #1}
\newcommand{\class}[1]{\mathscr #1}
\newcommand{\interp}[1]{\mathsf #1}
\renewcommand{\cal}[1]{\mathcal{#1}}
\newcommand{\dom}{\mathsf{dom}}
\newcommand{\partto}{\rightharpoonup}
\newcommand{\wh}[1]{\widehat{#1}}
\newcommand{\CCC}{\class{C}}
\newcommand{\DDD}{\class{D}}
\newcommand{\PPP}{\class{P}}
\newcommand{\FFF}{\class{F}}
\newcommand{\TTT}{\class{T}}
\newcommand{\LLL}{\mathcal{L}}
\definecolor{blue}{rgb}{0.1,0.2,0.5}
\definecolor{brown}{rgb}{0.6,0.6,0.2}
\theoremstyle{plain}
\newtheorem{theorem}{Theorem}
\newcommand{\newtheoremwithcrefformat}[2]{%
  \newtheorem{#1}[theorem]{#2}%
  \crefformat{#1}{##2\MakeUppercase#1~##1##3}%
  \Crefformat{#1}{##2\MakeUppercase#1~##1##3}%
}
\newcommand{\newseptheoremwithcrefformat}[2]{%
  \newtheorem{#1}{#2}%
  \crefformat{#1}{##2\MakeUppercase#1~##1##3}%
  \Crefformat{#1}{##2\MakeUppercase#1~##1##3}%
}
\theoremstyle{nonumberplain}
\newtheorem{proof}{Proof}
\newtheorem{clproof}{Proof}
\def\cqedsymbol{\ifmmode$\lrcorner$\else{\unskip\nobreak\hfil
\penalty50\hskip1em\null\nobreak\hfil$\lrcorner$
\parfillskip=0pt\finalhyphendemerits=0\endgraf}\fi}
\newcounter{tmp}
\title{First-order interpretations  of  bounded expansion classes
\footnote{J. Ne{\v s}et{\v r}il and P. Ossona~de~Mendez are 
supported by CE-ITI P202/12/G061 of GACR and European 
Associated Laboratory (LEA STRUCO). 
J.\ Gajarsk\'y and S.\ Kreutzer are supported by the
European Research Council (ERC) under the European Union's Horizon
2020 research and innovation programme (ERC Consolidator Grant
DISTRUCT, grant agreement No 648527).
M.\ Pilipczuk and S.\ Siebertz are supported by the National 
Science Centre of Poland (NCN) via POLONEZ grant agreement
UMO-2015/19/P/ST6/03998, which has received funding from 
the European Union's Horizon 2020 research and innovation 
programme (Marie Sk\l odowska-Curie grant agreement No.\ 665778).
\newline
Sz.\ Toru{\'n}czyk is supported by the NCN  grant 
2016/21/D/ST6/01485.}}
\author[1]{Jakub~Gajarsk\'y}
\author[1]{Stephan~Kreutzer}
\author[2]{Jaroslav~Ne{\v s}et{\v r}il}
\author[2,3]{Patrice~Ossona~de~Mendez}
\author[4]{Micha{\l}~Pilipczuk}
\author[4]{Sebastian~Siebertz}
\author[4]{Szymon~Toru\'nczyk}
\affil[1]{Technical University Berlin, Germany, \texttt{\{jakub.gajarsky,stephan.kreutzer\}@tu-berlin.de}}
\affil[2]{Charles University, Prague, Czech Republic, \texttt{\{nesetril,patrice\}@kam.mff.cuni.cz}}
\affil[3]{CAMS (CNRS, UMR 8557), Paris, France, \texttt{pom@ehess.fr}}
\affil[4]{University of Warsaw, Poland, \texttt{\{michal.pilipczuk,siebertz,szymtor\}@mimuw.edu.pl}}
\date{}
\begin{document}
\maketitle
\begin{abstract}
  The notion of bounded expansion captures uniform sparsity of graph
  classes and renders various algorithmic problems that are hard in
  general tractable. In particular, the model-checking problem for
  first-order logic is fixed-parameter tractable over such graph
  classes. With the aim of generalizing such results to dense graphs,
  we introduce classes of graphs with \emph{structurally bounded
    expansion}, defined as first-order interpretations of classes of
  bounded expansion. As a first step towards their algorithmic
  treatment, we provide their characterization analogous to the
  characterization of classes of bounded expansion via low treedepth
  decompositions, replacing treedepth by its dense analogue called
  shrubdepth.
\end{abstract}

\begin{textblock}{5}(12, 13.8)
\includegraphics[width=38px]{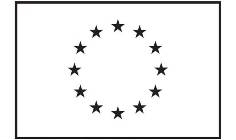}%
\end{textblock}

\section{Introduction}
The interplay of methods from logic and graph theory has led to many
important results in theoretical computer science, notably in
algorithmics and complexity theory.  The combination of logic and
algorithmic graph theory is particularly fruitful in the area of
\emph{algorithmic meta-theorems}.  Algorithmic meta-theorems are
results of the form: \emph{every computational problem definable in a
logic \(\LLL\) can be solved efficiently on any class of structures satisfying a property~\(\PPP\).} 
In other words, these theorems show that the \emph{model-checking
problem} for the logic \(\LLL\) on any class \(\CCC\) satisfying 
\(\PPP\) can be solved efficiently, where \emph{efficiency} usually 
means \emph{fixed-parameter tractability}.

The archetypal example of an algorithmic meta-theorem is Courcelle's
theorem~\cite{Courcelle1,Courcelle2}, which states that model-checking
a formula~$\phi$ of \emph{monadic second-order logic} can be solved in
time $f(\phi)\cdot n$ on any graph with $n$ vertices which comes from
a fixed class of graphs of \emph{bounded treewidth}, for some
computable function $f$. 
Seese~\cite{Seese1996} proved an analogue of Courcelle's result for
the model-checking problem of first-order logic on any class of graphs
of bounded degree. Following this result, the complexity of
first-order model-checking on specific classes of graphs has been
studied extensively in the literature. See
e.g.~\cite{GroheK11,DKT2,GroheKS17,Kazana2013,Dawar2007a,Flum2001,Frick2001,Seese1996,
  eickmeyer2017map,kazana2013enumeration,segoufin2011first,grohe2001generalized,durand2007first,durand2014enumerating,ganian2013fo,hlineny2017}.
One of the main goals of this line of research is to find a structural
property \(\PPP\)
which precisely defines those graph classes \(\CCC\)
for which model checking of first-order logic is tractable.

So far, research on algorithmic meta-theorems has focused
predominantly on \emph{sparse} classes of graphs, such as classes of
bounded \emph{treewidth}, \emph{excluding a minor} or which have
\emph{bounded expansion} or are \emph{nowhere dense}.
The concepts of \emph{bounded expansion} and \emph{nowhere denseness}
were introduced by Ne{\v s}et{\v r}il and Ossona~de~Mendez with the
goal of capturing the intuitive notion of \emph{sparseness}. See
\cite{Sparsity} for an extensive cover of these notions. The large
number of equivalent ways in which they can be defined using either
notions from combinatorics, theoretical computer science or logic,
indicate that these two concepts capture some very natural limits of
``well-behavedness'' and algorithmic tractability.  For instance,
Grohe et al.~\cite{GroheKS17} proved that if \(\CCC\)
is a class of graphs closed under taking subgraphs then model checking
first-order logic on $\CCC$ is tractable if, and only if, \(\CCC\)
is nowhere dense (the lower bound was proved in~\cite{DKT2}). As far
as algorithmic meta-theorems for fixed-parameter tractability of
first-order model-checking are concerned, this result completely
solves the case for graph classes which are \emph{closed under taking
  subgraphs}, which is a reasonable requirement for sparse but not for
dense graph classes.

Consequently, research in this area has shifted towards studying the
dense case, which is much less understood.
While there are several examples of algorithmic meta-theorems on dense
classes, such as for monadic second-order logic on classes of bounded
\emph{cliquewidth} \cite{Courcelle2000a} or for first-order logic on
\emph{interval graphs}, \emph{partial orders}, classes of bounded
\emph{shrubdepth} and other classes, see
e.g.~\cite{ganian2013fo,gajarsky2015fo,Ganian2012,GajarskyHOLR16}, a
general theory of meta-theorems for dense classes is still
missing. Moreover, unlike the sparse case, there is no canonical
hierarchy of dense graph classes similar to the sparse case which
could guide research on algorithmic meta-theorems in the dense world.

Hence, the main research challenge for dense model-checking is not
only to prove tractability results and to develop the necessary
logical and algorithmic tools. It is at least as important to define
and analyze promising candidates for 
``structurally simple'' classes of graph classes which are not
necessarily sparse. This is the main motivation for the research in
this paper.  Since bounded expansion and nowhere denseness form the
limits for tractability of certain problems in the sparse case, any
extension of the theory should provide notions which collapse to
bounded expansion or nowhere denseness, under the additional
assumption that the classes are closed under taking subgraphs.
Therefore, a natural way of seeking such notions is to base them on
the existing notions of bounded expansion or nowhere denseness.

In this paper, we take bounded expansion classes as a starting point
and study two different ways of generalizing them towards dense graph
classes preserving their good properties.  In particular, we define
and analyze classes of graphs obtained from bounded expansion classes
by means of first-order interpretations and classes of graphs obtained
by generalizing another, more combinatorial characterization of
bounded expansion in terms of low treedepth colorings into the dense
world. Our main structural result shows that these two very different
ways of generalizing bounded expansion into the dense setting lead to
the same classes of graphs. This is explained in greater detail below.

\medskip
\noindent
\textbf{Interpretations and transductions.} One possible way of
constructing ``well-behaved'' and ``structurally simple'' classes of
graphs is to use logical \emph{interpretations}, or the related
concept of \emph{transductions} studied in formal language and
automata theory. For our purpose, transductions are more convenient
and we will use them in this paper.  Intuitively, a
\emph{transduction} is a logically defined operation which takes a
structure as input and nondeterministically produces as output a
target structure.
In this paper we use \emph{first-order} transductions, which involve
first-order formulas (see \cref{sec:preliminaries} for details).  Two
examples of such transductions are graph complementation, and the
squaring operation which, given a graph $G$, adds an edge between
every pair of vertices at distance $2$ from each other.

We postulate that if we start with a ``structurally simple'' class
\(\CCC\)
of graphs, e.g.~a class of bounded expansion or a nowhere dense class,
and then study the graph classes~\(\DDD\)
which can be obtained from \(\CCC\)
by first-order transductions, then the resulting classes should still
have a simple structure and thus be well-behaved algorithmically as
well as in terms of logic. In other words,~the resulting classes are
interesting graph classes with good algorithmic and logical
properties, and which are certainly not sparse in general.  For
instance, a useful feature of transductions is that they provide a
canonical way of reducing model-checking problems from the generated
classes \(\DDD\)
to the original class~\(\CCC\),
provided that given a graph $H\in \DDD$, we can effectively compute
some graph $G\in\CCC$ that is mapped to $H$ by the transduction.  In
general, this is a hard problem, requiring a \emph{combinatorial}
understanding of the structure of the resulting classes $\DDD$.


The above principle has so far been successfully applied in the
setting of graph classes of bounded treewidth and monadic second-order
transductions: it was shown by Courcelle, Makowsky and
Rotics~\cite{courcelle2000linear} that transductions of classes of
bounded treewidth can be combinatorially characterized as classes of
bounded cliquewidth. This, combined with Oum's
result~\cite{oum2008approximating} gives a fixed-parameter algorithm
for model-checking monadic second-order logic on classes of bounded
cliquewidth.  More recently, the same principle, but for first-order
logic, has been applied to graphs of bounded
degree~\cite{GajarskyHOLR16}, leading to a combinatorial
characterization of first-order transductions of such classes, and to
a model-checking algorithm.

Applying our postulate to bounded expansion classes yields the central
notion of this paper: a class of graphs has \emph{structurally bounded
  expansion} if it is the image of a class of bounded expansion under
some fixed first-order transduction.  This paper is a step towards a
combinatorial, algorithmic, and logical understanding of such
graph~classes.

\bigskip
\noindent\textbf{Low Shrubdepth Covers. }
The method of transductions is one way of constructing complex graphs
out of simple graphs.  A more combinatorial approach is the method of
\emph{decompositions} (or \emph{colorings})~\cite{Sparsity}, which we
reformulate below in terms of \emph{covers}.  This method can be used
to provide a characterization of bounded expansion classes in terms of
very simple graph classes, namely classes of \emph{bounded
  treedepth}. A class of graphs has bounded treedepth if there is a
bound on the length of simple paths in the graphs in the class (see
\cref{sec:preliminaries} for a different but equivalent definition). A
class \(\CCC\)
has \emph{low treedepth covers} if for every number \(p \in \N\)
there is a number \(N\)
and a class of bounded treedepth $\cal T$ such that for every
$G\in\CCC$, the vertex set $V(G)$ can be covered by \(N\)
sets $U_1,\ldots,U_N$ so that every set $X\subset V(G)$ of at most $p$
vertices is contained in some $U_i$, and for each $i=1,\ldots,N$, the
subgraph of $G$ induced by $U_i$ belongs to $\cal T$.  A consequence
of a result by Ne{\v s}et{\v r}il and Ossona~de~Mendez~\cite{POMNI} on
a related notion of \emph{low treedepth colorings} is that a graph
class has bounded expansion if, and only if, it has low treedepth
covers.

The decomposition method allows to lift algorithmic, logical, and
structural properties from classes of bounded treedepth to classes of
bounded expansion. For instance, this was used to show tractability of
first-order model-checking on bounded expansion
classes~\cite{dkt,Grohe2011}.

An analogue of treedepth in the dense world is the concept of
\emph{shrubdepth}, introduced in~\cite{Ganian2012}. Shrubdepth shares
many of the good algorithmic and logical properties of treedepth.
This notion is defined combinatorially, in the spirit of the
definition of cliquewidth, but can be also characterized by logical
means, as first-order transductions of classes of bounded treedepth.
Applying the method of decompositions to the notion of shrubdepth
leads to the following definition.  A class \(\CCC\)
of graphs has \emph{low shrubdepth covers} if for every number
\(p \in \N\)
there is a number~\(N\)
and a class $\cal B$ of bounded shrubdepth such that for every
$G\in\CCC$, there is a \emph{$p$-cover} of $G$ consisting of \(N\)
sets $U_1,\ldots,U_N\subset V(G)$, so that every set $X\subset V(G)$
of at most $p$ vertices is contained in some $U_i$ and for each
$i=1,\ldots,N$, the subgraph of $G$ induced by $U_i$ belongs
to~$\cal B$.  Shrubdepth properly generalizes treedepth and
consequently classes admitting low shrubdepth covers properly extend
bounded expansion classes.

It was observed earlier~\cite{KwonPS17} that for every fixed $r\in\N$
and every class $\Cc$ of bounded expansion, the class of $r$th power
graphs $G^r$ of graphs from $\Cc$ (the $r$th power of a graph is a
simple first-order transduction) admits low shrubdepth colorings.

\bigskip\noindent\textbf{Our contributions.}  Our main result,
\cref{thm:main}, states that the two notions introduced above are the
same: a class of graphs $\class{C}$ has structurally bounded expansion
if, and only if, it has bounded shrubdepth covers.  That is,
transductions of classes of bounded expansion are the same as classes
with low shrubdepth covers (cf. Figure~\ref{fig:main}).  This gives a
combinatorial characterization of structurally bounded expansion
classes, which is an important step towards their algorithmic
treatment.

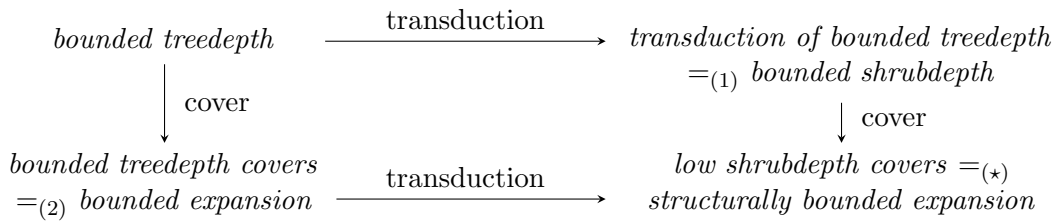
\begin{figure*}
  \centering 
\begin{tikzpicture}

\node (btd) at (0,0) {\parbox{4cm}{\centering\textit{bounded treedepth}}};

\node (ibtd) at (0,-2.1) {\parbox{4.3cm}{\centering\textit{bounded treedepth
      covers}\\\(=_{(2)}\) \textit{bounded expansion}
      \\\phantom{x}
      }};

\node (cbtd) at (9,0) {\parbox{6cm}{\centering\textit{\phantom{I}\\ transduction of bounded
      treedepth}\\\(=_{(1)}\) \textit{bounded shrubdepth}}};

\node (ibe) at (9,-2.1) {\parbox{6cm}{\centering\textit{low shrubdepth covers}
    \(=_{(\star)}\)\\\textit{structurally bounded expansion\\{\phantom{x}}}}};

\draw[->, >=stealth] (btd) -- (ibtd);
\draw[->, >=stealth] (btd) -- (cbtd);
\draw[->, >=stealth] (cbtd) -- (ibe);
\draw[->, >=stealth] (ibtd) -- (ibe);

\node at (0.7,-0.85) {\rotatebox[origin=c]{0}{cover}};

\node at (9.7,-1.05) {\rotatebox[origin=c]{0}{cover}};

\node at (4,0.3) {transduction};

\node at (4,-1.8) {transduction};
\end{tikzpicture}

  \caption{ The nodes in the diagram depict properties of graph
    classes, and the arrows depict operations on properties of graph
    classes.  Equality (1) is by~\cite{Ganian2012}. Equality (2) is
    by~\cite{POMNI}. Equality~($\star$) is the main result of this
    paper, \cref{thm:main}.  }
  \label{fig:main}
\end{figure*}

One of the key ingredients of our proof is a quantifier-elimination
result (\cref{thm:qe-lsc}) for transductions on classes of
structurally bounded expansion. This result strengthens in several
ways similar results for bounded expansion classes due to
\DKT~\cite{dkt}, Grohe and Kreutzer~\cite{Grohe2011} and Kazana and
Segoufin~\cite{kazana2013enumeration}.  Our assumption is more
general, as they assume that $\CCC$ has bounded expansion, and here
$\CCC$ is only required to have low shrubdepth covers. Also, our
conclusion is stronger, as their results provide quantifier-free
formulas involving some unary functions and unary predicates which are
computable algorithmically, whereas our result shows that these
functions can be defined using very restricted transductions.
Quantifier-elimination results of this type proved to be useful for
the model-checking problem on bounded expansion
classes~\cite{dkt,Grohe2011,kazana2013enumeration}, and this is also
the case here.

As explained earlier, the transduction method allows to reduce the
model-checking problem to the problem of finding inverse images under
transductions, which is a hard problem in general and depends very
much on the specific transduction.  On the other hand, as we show, the
cover method allows to reduce the model-checking problem for classes
with low shrubdepth covers to the problem of computing a bounded
shrubdepth cover of a given graph.  In fact, as a consequence of our
proof, in \cref{thm:algo} we show that it is enough to compute a
$2$-cover of a given graph $G$ from a structurally bounded expansion
class, in order to obtain an algorithm for the model-checking problem
for such classes.  We conjecture that such an algorithm exists and
that therefore first-order model-checking is fixed-parameter tractable
on any class of graphs of structurally bounded expansion. We leave
this problem for future work.

\bigskip\noindent\textbf{Organization.} In \cref{sec:preliminaries} we
collect basic facts about logic, transductions, treedepth, shrubdepth
and the notion of bounded expansion. In \cref{sec:sbe} we provide the
formal definitions of structurally bounded expansion classes and
classes with low shrubdepth covers, and state the main results and
their proofs using lemmas which are proved in the following three
sections.
We consider algorithmic aspects in \cref{sec:alg} and conclude in
\cref{sec:conclusion}.  We aim to present an easy to follow proof of
our main result. For this reason, we present proofs of the key lemmas
in the main body of the paper, while rather technical results that
disturb the flow of ideas are presented in full detail in the
appendix.


\section{Preliminaries}\label{sec:preliminaries}
\paragraph*{Basic notation.}  We use standard graph notation. All
graphs considered in this paper are undirected, finite, and simple;
that is, we do not allow loops or multiple edges with the same pair of
endpoints.  We follow the convention that the composition of an empty
sequence of (partial) functions is the identity function.  For an
integer $k$, we denote $[k]=\{1,\ldots,k\}$.

\subsection{Structures, logic, and transductions}\label{sec:prelims-logic}

\paragraph*{Structures and logic.}  A {\em{signature}} $\Sigma$ is a
finite set of relation symbols, each with prescribed arity that is a
non-negative integer, and unary function symbols.  A
\mbox{{\em{structure}}}~$\str A$ over $\Sigma$ consists of a finite
universe $V(\str A)$ and {\em{interpretations}} of symbols from the
signature: each relation symbol $R\in \Sigma$, say of arity $k$, is
interpreted as a $k$-ary relation $R^{\str A}\subseteq V(\str A)^k$,
whereas each function symbol $f$ is interpreted as a {\em{partial}}
function $f^{\str A}\colon V(\str A)\partto V(\str A)$. We drop the
superscipt when the structure is clear from the context, thus
identifying each symbol with its interpretation.  If \(\strA\)
is a structure and \(X\subseteq V(\strA)\)
then we define the \emph{substructure} of \(\strA\)
induced by \(X\)
in the usual way except that a unary function \(f(x)\)
in \(\strA\)
becomes undefined on all \(x \in X\)
for which \(f(x) \not\in X\).
The {\em{Gaifman graph}} of a structure $\str A$ is the graph with
vertex set $V(\str A)$ where two elements $u,v\in \str A$ are adjacent
if and only if either $u$ and $v$ appear together in some tuple in
some relation in $\str A$, or $f(u)=v$ or $f(v)=u$ for some partial
function $f$ in $\str A$.

For a signature $\Sigma$, we consider standard
first-order logic over $\Sigma$.  Let us clarify the usage of function
symbols.  A {\em{term}} $\tau(x)$ is a finite composition of function
symbols applied to a variable $x$.  In a structure $\str A$, given an
evaluation of $x$, the term $\tau(x)$ either evaluates to some element
of $\str A$ in the natural sense, or is {\em{undefined}} if during the
evaluation we encounter an element that does not belong to the domain
of the function that is to be applied next.
In first order logic over $\Sigma$ we allow usage of atomic formulas of the following form:
\begin{itemize}
\item $R(\tau_1(x_1),\ldots,\tau_k(x_k))$ for a relation symbol $R$ of
  arity $k$, terms $\tau_1,\ldots,\tau_k$, and variables
  $x_1,\ldots,x_k$;
\item $\tau_1(x_1)=\tau_2(x_2)$ for terms $\tau_1,\tau_2$ and variables $x_1,x_2$; and
\item $\dom_f(\tau(x))$ for term $\tau$ and variable $x$.
\end{itemize}
Here, the predicate $\dom_f(\tau(x))$ checks whether $\tau(x)$ belongs
to the domain of $f$.  The semantics are defined as usual, however an
atomic formula is false if any of the terms involved is undefined.
Based on these atomic formulas, the syntax and semantics of first
order logic is defined in the expected way.

\paragraph*{Graphs, colored graphs and trees.}  Graphs can be viewed
as finite structures over the signature consisting of a binary
relation symbol~$E$, interpreted as the edge relation, in the usual
way.  For a finite label set $\Lambda$, by a {\em{$\Lambda$-colored}}
graph we mean a graph enriched by a unary predicate~\(U_\lambda\)
for every $\lambda\in \Lambda$.  We will follow the convention that if
$\CCC$ is a class of colored graphs, then we implicitly assume that
all graphs in \(\CCC\)
are over the same fixed finite signature. A rooted forest is an
acyclic graph $F$ together with a unary predicate $R\subset V(F)$
selecting one root in each connected component of $F$. A tree is a
connected forest.  The {\em{depth}} of a node $x$ in a rooted forest
$F$ is the distance between $x$ and the root in the connected
component of $x$ in $F$. The depth of a forest is the largest depth of
any of its nodes.  The {\em{least common ancestor}} of nodes $x$ and
$y$ in a rooted tree is the common ancestor of $x$ and $y$ that has
the largest depth.

\paragraph*{Transductions.}  We now define the notion of transduction
used in the sequel. A \emph{transduction} is a special type of
first-order interpretation with set parameters, which we see here
(from a computational point of view) as a nondeterministic operation
that maps input structures to output structures.  Transductions are
defined as compositions of \emph{atomic operations} listed below.

An \textbf{extension} operation is parameterized by a first-order
formula $\phi(x_1,\ldots,x_k)$ and a relation symbol $R$.  Given an
input structure $\str A$, it outputs the structure $\str A$ extended
by the relation $R$ interpreted as the set of $k$-tuples of elements
satisfying $\phi$ in~$\str A$.  A \textbf{restriction} operation is
parameterized by a unary formula $\psi(x)$. Applied to a
structure~$\str A$ it outputs the substructure of~$\str A$ induced by
all elements satisfying \(\psi\).
A \textbf{reduct} operation is parameterized by a relation symbol
\(R\), and results in removing the relation~\(R\)
from the input structure.  \textbf{Copying} is an operation which,
given a structure~$\str A$ outputs a disjoint union of two copies of
$\str A$ extended with a new unary predicate which marks the newly
created vertices, and a symmetric binary relation which connects each
vertex with its copy.  A \textbf{function extension} operation is
parameterized by a binary formula $\varphi(x,y)$ and a function symbol
$f$, and extends a given input structure by a partial function $f$
defined as follows: $f(x)=y$ if $y$ is the unique vertex such that
$\varphi(x,y)$ holds.  Note that if there is no such $y$ or more than
one such $y$, then $f(x)$ is undefined.  Finally, suppose $\sigma$ is
function that maps each structure $\str A$ to a nonempty family
$\sigma(\str A)$ of subsets of its universe.  A \textbf{unary lift}
operation, parameterized by $\sigma$, takes as input a
structure~$\str A$ and outputs the structure~$\str A$ enriched by a
unary predicate~$X$ interpreted by a nondeterministically chosen set
$U\in\sigma(\str A)$.

We remark that function extension operations can be simulated by
extension operations, defining the graphs of the functions in the
obvious way.  They are, however, useful as a means of extending the
expressive power of transductions in which only quantifier-free
formulas are allowed, as defined below.

\emph{Transductions} are defined inductively: every atomic
transduction is a transduction, and the composition of two
transductions $\interp I$ and $\interp J$ is the transduction
$\interp I;\interp J$ that, given a structure $\str A$, first applies
$\interp I$ to $\strA$ and then $\interp J$ to the output
$\interp I(\strA)$.  A transduction is {\em{deterministic}} if it does
not use unary lifts. In this case, for every input structure there is
exactly one output structure.  A transduction is \emph{almost
  quantifier-free} if all formulas that parameterize atomic operations
comprising it are quantifier-free\footnote{We use the adverb
  ``almost'' to indicate that such transductions still can access
  elements that are not among its free variables via functions.}, and
is \emph{deterministic almost quantifier-free} if it additionally does
not use unary lifts.

\pagebreak 

If $\CCC$ is a class of structures, we write $\interp I(\CCC)$ for the
class which contains all possible outputs~$\interp I(\strA)$ for
$\strA\in \CCC$.  We say that two transductions~$\interp I$ and
$\interp J$ are \emph{equivalent} on a class~$\CCC$ of structures if
every possible output of $\interp I(\str A)$ is also a possible output
of $\interp J(\str A)$, and vice versa, for every $\str A \in \CCC$.

\medskip It may happen that an atomic operation $\interp I$ is
undefined for a given input structure $\str A$.  For example, for an
extension operation parametrized by a first order formula $\phi$ using
a relation symbol $R$, if the input structure $\str A$ does not carry
the symbol $R$, then $\interp I(\str A)$ is undefined according to the
above definition.  This will never occur in our constructions.
However, for completeness, we may define $\interp I(\str A)$ as a
fixed structure $\bot$ in such situations.

When considering a composition of atomic operations, we avoid
overriding symbols by later operations, i.e., we always assume that
subsequent atomic operations create relation symbols which are
distinct from previously created relations symbols and also from
symbols in the original signature.  Since every transduction
$\interp I$ is a composition of finitely many atomic operations, the
result of $\interp I$ applied to a structure over a finite signature
$\Sigma$ will be again a structure over a finite signature $\Gamma$,
which depends on $\Sigma$ and~$\interp I$ only (unless the result is
undefined).

\medskip

\begin{example}\label{ex:transduction}
  Let $\CCC$ be the class of rooted forests of depth at most $d$, for
  some fixed $d\in\N$.  We describe an almost quantifier-free
  transduction which defines the \emph{parent function} in $\CCC$.
  First, using unary lifts introduce $d+1$ unary predicates
  $D_0,...,D_d$, where~$D_i$ marks the vertices of the input tree
  which are at distance $i$ from a root. Next, using a function
  extension, define a partial function $f$ which maps a vertex $v$ in
  the input tree to its parent, or is undefined in case of a
  root. This can be done by a quantifier-free formula, which selects
  those pairs $x,y$ such that $x$ and $y$ are adjacent and $D_i(x)$
  implies $D_{i-1}(y)$.
\end{example}

It will sometimes be convenient to work with the encoding of
bounded-depth trees and forests as node sets endowed with the parent
function, rather than graphs with prescribed roots. As seen in
\cref{ex:transduction}, these two encodings can be translated to each
other by means of almost quantifier-free transductions, which render
them essentially equivalent.

\paragraph*{Normal forms.}  It will sometimes be useful to assume a
certain normal form of transductions.  We will need two similar, yet
slightly different normal forms: one for general transductions and one
for almost quantifier-free transductions. The proofs are standard, for
completeness, we give them in the appendix.

\begin{lemma}[$\star$]\label{lem:normal}
  Let $\interp I$ be a transduction. Then $\interp I$ is equivalent to
  a transduction of the form
  $$\interp L;\interp C;\interp F;\interp E;\interp X;\interp R,$$
  where
  \begin{itemize}
  \item $\interp L$ is a sequence of unary lifts;
  \item $\interp C$ is a sequence of copying operations;
  \item $\interp F$ is a sequence of function extension operations,
    one for each function on the output;
  \item $\interp E$ is a sequence of extension operations, one for
    each relation on the output;
  \item $\interp X$ is a single restriction operation; and
  \item $\interp R$ is a sequence of reduct operations.
  \end{itemize}
  Moreover, formulas parameterizing atomic operations in
  $\interp F;\interp E;\interp X$ use only relations and functions
  that appeared originally on input or were introduced by
  $\interp L;\interp C$.  In particular, none of these formulas uses
  any function or relation introduced by an atomic operation in
  $\interp F;\interp E$.
\end{lemma}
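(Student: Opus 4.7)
The plan is to start from an arbitrary expression of $\interp I$ as a composition of atomic operations and push it into the required shape by a finite sequence of semantics-preserving local rewrites, each either commuting two adjacent atomic operations (with suitably adjusted parameters) or combining/inlining them. After the operations are in the correct order, a final inlining pass eliminates all dependencies between formulas in $\interp F;\interp E;\interp X$.

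First I would push all reducts to the very end. A reduct of a symbol $R$ removes $R$ from the signature, so no later operation in the original composition can mention $R$; thus a reduct can always be swapped past the operation that follows it. Next I would push all unary lifts to the beginning. If an operation $O$ preserves the universe (extension, function extension, or reduct), a lift with family $\sigma$ applied after $O$ is equivalent to a lift with family $\sigma'\colon \str A\mapsto \sigma(O(\str A))$ applied before $O$, since the universes of $\str A$ and $O(\str A)$ coincide. For a restriction, the same trick works up to taking preimages along the inclusion. The delicate case is a copying operation: a single post-copy lift that chooses $S\subseteq V\sqcup V'$ is simulated by two pre-copy lifts choosing $A,B\subseteq V$ together with one post-copy extension defining the predicate as $(A\cap\neg\text{copied})\cup(B'\cap\text{copied})$ using the predicate that marks copied vertices; the extra extension is absorbed into $\interp E$. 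After these moves, all copying operations can be grouped immediately after the lifts, forming $\interp L;\interp C$.

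Then I would push every restriction past every extension and function extension via relativization: a restriction by $\psi$ followed by an extension by $\phi$ is equivalent to an extension by the relativization $\phi^\psi$ (with $\exists y$ replaced by $\exists y.\,\psi(y)\wedge\ldots$ and $\forall y$ by $\forall y.\,\psi(y)\to\ldots$) followed by the restriction; the same holds for function extensions once the uniqueness condition is reinterpreted within the $\psi$-elements. Multiple restrictions can then be conjoined into a single $\interp X$. At this point the composition has the form $\interp L;\interp C;\interp O;\interp X;\interp R$, where $\interp O$ is an interleaved block of extensions and function extensions. I then perform inlining: scanning the operations in $\interp O$ and the formula of $\interp X$ from left to right, I replace any occurrence of a symbol introduced earlier in $\interp O$ by its definition, rewriting $R(\tau_1,\ldots,\tau_k)$ to $\phi_R(\tau_1,\ldots,\tau_k)$ and each atomic formula containing a term $f(t)$ to an existential encoding $\exists y.\,\varphi_f(t,y)\wedge\forall z.(\varphi_f(t,z)\to z=y)\wedge\ldots$ that evaluates to false exactly when $f(t)$ is undefined. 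After exhaustive inlining, the formulas in $\interp F;\interp E;\interp X$ mention only the original signature and symbols from $\interp L;\interp C$, so the operations inside $\interp O$ are pairwise independent and can be freely reordered: first all function extensions to form $\interp F$, then all extensions to form $\interp E$, one per output symbol.

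The main obstacle is the correct treatment of partial functions. Both during the commutation with restrictions and during the inlining step, the semantic convention of the preliminaries makes any atomic formula false as soon as one of its terms is undefined, so the rewritten formulas must reproduce this behaviour by explicit existence-and-uniqueness encodings, with careful use of $\dom_f$ for nested terms. A secondary technical point is verifying that the simulation of a post-copy unary lift by two pre-copy lifts and a subsequent extension produces exactly the original family of nondeterministic choices, which requires checking that both the mapping from pairs $(A,B)$ to $S$ and its range agree with the original $\sigma$.
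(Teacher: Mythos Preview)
Your proposal is correct and follows the same approach as the paper: reorder the atomic operations by local swapping rules, then inline definitions to eliminate dependencies among the formulas in $\interp F;\interp E;\interp X$. The one step you skip that the paper does address is the claim that ``all copying operations can be grouped immediately after the lifts'': this still requires swapping a copy leftward past any extensions, function extensions, and restrictions that precede it, which is not automatic---the paper handles it by modifying each parameterizing formula to act on each copy separately---but you assert it without argument.
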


\begin{lemma}[$\star$]\label{lem:normal-qf}
  Every almost quantifier-free transduction is equivalent to an almost
  quantifier-free transduction that first applies a sequence of unary
  lifts and then applies a deterministic almost quantifier-free
  transduction.
\end{lemma}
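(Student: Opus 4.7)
The plan is to prove the lemma by induction on the number of atomic operations in $\interp I$, systematically pushing all unary lifts to the front of the composition. The base case of zero operations is trivial. For the inductive step, write $\interp I = \interp I'; o$ and apply the induction hypothesis to $\interp I'$ to obtain an equivalent decomposition $\interp L; \interp D$, where $\interp L$ is a sequence of unary lifts and $\interp D$ is a deterministic almost quantifier-free transduction. If the final operation $o$ is itself deterministic, absorb it into $\interp D$ and we are done. If $o$ is a unary lift, we must swap it past $\interp D$ and append it to $\interp L$; since $\interp D$ is itself a composition of single deterministic atomic operations, this reduces to iterating a one-step swap lemma.

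The core technical claim to establish is: for any single deterministic atomic operation $d$ and any unary lift $\ell$, the composition $d; \ell$ is equivalent to $\ell'; d'$ for some sequence $\ell'$ of unary lifts and some deterministic almost quantifier-free transduction $d'$. I would prove this by case analysis on $d$. For extension, reduct, and function extension, the universe is preserved, so if $\ell$ has parameter $\sigma$, simply take $\sigma'(\str A) = \sigma(d(\str A))$ and $d' = d$: the two orders produce the same family of output structures. For restriction parameterized by a quantifier-free formula $\psi(x)$, define $\sigma'(\str A) = \{U \subseteq V(\str A) : U \cap \psi(\str A) \in \sigma(d(\str A))\}$; every $U' \in \sigma(d(\str A))$ is reachable by taking $U = U'$ in $\sigma'$, and every $U$ chosen from $\sigma'$ becomes a legal choice from $\sigma(d(\str A))$ after restriction.

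The main obstacle is the copying case, because here the universe grows from $V(\str A)$ to $V(\str A) \sqcup V'(\str A)$, and a single unary lift on $V(\str A)$ cannot directly select an arbitrary subset of the enlarged universe. The idea is to exploit the unique decomposition $X = X_0 \cup X_1'$ of any such subset into its parts on the two copies, and to simulate $\ell$ by two coordinated unary lifts on $\str A$ introducing predicates $Y_0, Y_1 \subseteq V(\str A)$. The first uses parameter $\sigma'_0(\str A) = \{X_0 : \exists X_1 \subseteq V(\str A), \ X_0 \cup X_1' \in \sigma(d(\str A))\}$; the second is applied to the structure already carrying $Y_0$ and uses $\sigma'_1 = \{X_1 : Y_0 \cup X_1' \in \sigma(d(\str A))\}$, which is a valid unary-lift parameter since $\sigma'_1$ is allowed to inspect the current structure. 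After these two lifts, the deterministic transduction $d'$ performs copying, then a quantifier-free extension defining the target predicate $X(x) \equiv (\neg C(x) \wedge Y_0(x)) \vee (C(x) \wedge Y_1(x))$, where $C$ is the predicate marking the second copy, and finally reducts that discard $Y_0$ and $Y_1$ to match the output signature. Iterating this one-step swap across the operations of $\interp D$ finishes the inductive step and yields the desired normal form.
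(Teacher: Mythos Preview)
Your proposal is correct and follows essentially the same approach as the paper: both arguments push unary lifts to the front by repeatedly swapping a deterministic atomic operation followed by a lift, and both handle the copying case by replacing the single lift with two auxiliary lifts (one per copy) followed by a quantifier-free extension and reducts. The paper phrases this as ``apply the swapping rules exhaustively'' rather than as an explicit induction, but the content is the same.
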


\subsection{Treedepth and shrubdepth}\label{subsec:tdsd}
The \emph{treedepth} of a graph $G$ is the minimal depth of a rooted
forest $F$ with the same vertex set as $G$, such that for every edge
$uv$ of $G$, $u$ is an ancestor of $v$, or $v$ is an ancestor of $u$
in $F$.  A class $\CCC$ of graphs has \emph{bounded treedepth} if
there is a bound $d \in \N$ such that every graph in $\CCC$ has
treedepth at most $d$. Equivalently, $\Cc$ has bounded treedepth if
there is some number $k$ such that no graph in $\CCC$ contains a
simple path of length~$k$~\cite{Sparsity}.  The notion of treedepth
lifts to structures: a class $\CCC$ of structures has bounded
treedepth if the class of their Gaifman graphs has bounded treedepth.

\paragraph*{Shrubdepth.}  The following notion of \emph{shrubdepth}
has been proposed in~\cite{Ganian2012} as a dense analogue of
treedepth.  Originally, shrubdepth was defined using the notion of
\emph{tree-models}. We present an equivalent definition basing on the
notion of \emph{connection models}, introduced in~\cite{Ganian2012}
under the name of {\em{$m$-partite cographs}} of bounded depth.

\smallskip
A {\em connection model} with labels from $\Lambda$ is a rooted
labeled tree $T$ where each leaf $x$ is labeled by a label
$\lambda(x)\in \Lambda$, and each non-leaf node $v$ is labeled by a
(symmetric) binary relation $C(v)\subset \Lambda\times \Lambda$.  Such
a model defines a graph $G$ on the leaves of $T$, in which two
distinct leaves $x$ and $y$ are connected by an edge if and only if
$(\lambda(x),\lambda(y))\in C(v)$, where~$v$ is the least common
ancestor of $x$ and $y$.  We say that $T$ is a \emph{connection model}
of the resulting graph $G$.

\begin{example}
  Fix $n\in\N$, and let $G_n$ be the bi-complement of a matching of
  order $n$, i.e., the bipartite graph with nodes $a_1,\ldots,a_n$ and
  $b_1,\ldots,b_n$, such that $a_i$ is adjacent to~$b_j$ if and only
  if $i\neq j$.  A connection model for $G_n$ is shown below:
\begin{center}
	\includegraphics[height=19mm]{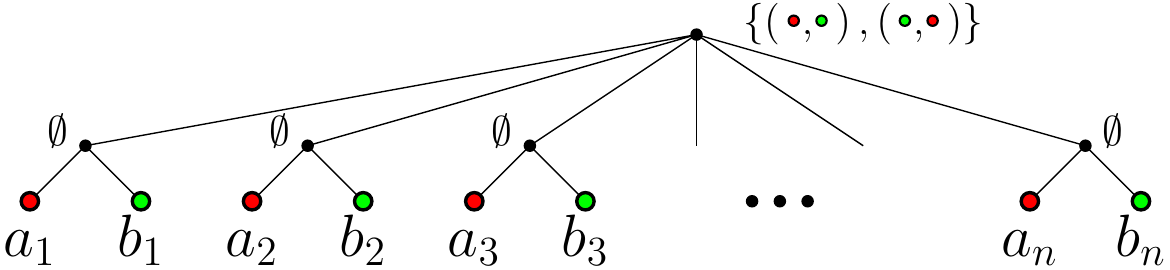}
\end{center}  
 \end{example}

 We can naturally extend the definition above to structures with unary
 functions by regarding each unary function by a binary relation
 selecting all $(\mathrm{argument},\mathrm{value})$ pairs.

\medskip
A class of graphs $\class{C}$ has {\em bounded shrubdepth} if there is
a number $h\in \N$ and a finite set of labels $\Lambda$ such that
every graph $G\in \CCC$ has a connection model of depth at most $h$
using labels from $\Lambda$.

%

\medskip Shrubdepth can be equivalently defined in terms of another
graph parameter, as follows.
%
%
Given a graph $G$ and a set of vertices $W\subset V(G)$, the graph
obtained by \emph{flipping the adjacency within $W$} is the graph $G'$
with vertices $V(G)$ and edge set which is the symmetric difference of
the edge set of $G$ and the edge set of the clique on $W$.
  
\smallskip The \emph{subset-complementation depth}, or
\emph{SC-depth}, of a graph is defined inductively as follows:
\begin{itemize}
\item a graph with one vertex has SC-depth $0$, and
\item a graph $G$ has SC-depth at most $d$, where $d\ge 1$, if there
  is a set of vertices $W\subset V(G)$ such that in the graph obtained
  from $G$ by flipping the adjacency within $W$ all connected
  components have SC-depth at most $d-1$.
\end{itemize}
  
\begin{example}
  A star has SC-depth at most $2$: flipping the adjacency within the
  set consisting of the vertices of degree $1$ yields a clique, which
  in turn has SC-depth at most~$1$.
\end{example}
  
The notion of SC-depth leads to a natural notion of decompositions.
An \emph{SC-decomposition} of a graph $G$ of SC-depth at most $d$ is a
rooted tree $T$ of depth $d$ with leaf set~$V(G)$, equipped with unary
predicates $W_0,\ldots,W_d$ on the leaves.  Each child $s$ of the root
in $T$ corresponds to a connected component $C_s$ of the graph $G'$
obtained from~$G$ by flipping the adjacency within $W_0$, such that
the subtree of $T$ rooted at $s$, together with the unary predicates
$W_1,\ldots,W_d$ restricted to $V(C_s)$, form an SC-decomposition of
$C_s$.

\medskip We will make use of the following properties, where the first
one follows from the definition of shrubdepth, and the remaining ones
follow from~\cite{Ganian2012}.

\begin{proposition}
  \label{prop:sd_properties}Let $\CCC$ be a class of graphs. Then:
  \begin{enumerate}
  \item\label{SD:1} If $\CCC$ has bounded shrubdepth then the class of
    all induced subgraphs of graphs from~$\CCC$ also has bounded
    shrubdepth.
  \item\label{SD:2} $\CCC$ has bounded shrubdepth if and only if for
    some $d\in \N$ all graphs in $\CCC$ have SC-depth at most $d$.
  \item\label{SD:3} If $\CCC$ has bounded treedepth then $\CCC$ has
    bounded shrubdepth.
  \item\label{SD:5} If $\CCC$ has bounded shrubdepth and $\interp I$
    is a transduction that outputs colored graphs, then
    $\interp I(\CCC)$ has bounded
    shrubdepth.
\end{enumerate}
\end{proposition}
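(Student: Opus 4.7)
The plan is to verify each item separately, treating (1) directly from the definition, giving explicit constructions for (2) and (3), and for (4) invoking the equivalent characterization of bounded shrubdepth from~\cite{Ganian2012}. For (1), given a connection model $T$ of depth $h$ with label set $\Lambda$ for $G\in\CCC$ and an induced subgraph $H=G[S]$, I would keep only the leaves of $T$ lying in $S$ and iteratively suppress internal nodes that have no descendants left or only one child, inheriting the existing labels and connection relations; the result is a connection model for $H$ of depth at most $h$ with labels in $\Lambda$, so the induced-subgraph class has bounded shrubdepth with the same parameters.

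For (3), given a treedepth decomposition $F$ of depth $d$ of $G$, I would mirror the shape of $F$ in a new tree $T$ while splitting each $F$-node $w$ into an internal node $w^\ast$ of $T$ carrying an extra leaf $\widehat w$ that represents the vertex $w$. Labeling $\widehat w$ by the pair $(\mathrm{depth}_F(w), S_w)$, where $S_w\subseteq[d]$ records the depths of those $F$-ancestors of $w$ which are $G$-neighbors of $w$, makes the LCA in $T$ of two leaves $\widehat u,\widehat v$ with $u$ a proper $F$-ancestor of $v$ equal to $u^\ast$. At $w^\ast$ I would take the relation defined by ``one endpoint has first coordinate equal to $\mathrm{depth}_F(w)$ and the other has $\mathrm{depth}_F(w)$ in its second coordinate'', which picks out exactly the $G$-edges incident to $w$ in its subtree, while for LCAs arising from $F$-incomparable pairs the relation can be empty, since such pairs are $G$-nonadjacent. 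This yields a connection model of depth $d+O(1)$ over the label set $[d]\times 2^{[d]}$.

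For (2) I would proceed by induction in both directions. For the ``SC-depth $\Rightarrow$ shrubdepth'' direction, the inductive step takes a flipping set $W$ witnessing SC-depth $\leq d$, assembles the connection models of the components of the flipped graph $G'$ below a fresh root, augments each label $\ell$ into $(\ell,0)$ or $(\ell,1)$ according to membership in $W$, and XORs every internal connection relation with the clause ``both endpoints carry second coordinate $1$'' so that the flip of $W$ is cancelled inside each component; the root's relation is simply ``both endpoints carry second coordinate $1$'', which restores the $W$-clique edges between distinct components of $G'$. For the converse, given a connection model of depth $h$ with label set $\Lambda$, I would peel off the root by simulating each pair in its connection relation by a short sequence of flips on unions of label classes that erase all cross-subtree edges without permanently disturbing the subtrees, and then recurse into each subtree; bookkeeping bounds SC-depth by a function of $h$ and $|\Lambda|$.

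Finally, for (4), I would invoke the characterization from \cite{Ganian2012} that a class has bounded shrubdepth if and only if it is the image of some class of labeled rooted forests of bounded depth under a single first-order transduction. Given this, $\CCC=\interp J(\mathcal D)$ for some bounded-treedepth class $\mathcal D$ and transduction $\interp J$ implies $\interp I(\CCC)=(\interp I;\interp J)(\mathcal D)$, which is again a transduction of a bounded-treedepth class and therefore has bounded shrubdepth. The main obstacle in the proposition is precisely this characterization: starting from an arbitrary connection model, one must recover a bounded-depth tree together with enough unary predicates so that a single FO formula can read off all edges of the represented graph. I would rely on the tree-model machinery of \cite{Ganian2012} for this step rather than reprove it here.
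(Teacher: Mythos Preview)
The paper does not actually prove this proposition: it simply states that item~(1) follows from the definition and that items~(2)--(4) follow from~\cite{Ganian2012}. So there is no detailed argument to compare against; you are supplying proofs where the paper defers to the literature.

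Your sketches are essentially sound. Item~(1) is exactly what the paper means by ``follows from the definition'' (in fact you need not suppress degree-one internal nodes; pruning childless ones suffices). Your construction in~(3) is the standard encoding and is correct. In~(2), the SC-depth~$\Rightarrow$~shrubdepth direction is fine; the converse is the right idea but your description of ``a short sequence of flips \dots\ without permanently disturbing the subtrees'' glosses over the bookkeeping: flipping within a union of label classes \emph{does} alter within-subtree adjacencies, and what saves you is that each subtree still admits a connection model of depth $h-1$ over an enlarged label set (augmented by which flips touched each vertex), so the recursion goes through with a bound depending on $h$ and $|\Lambda|$. This is worth stating explicitly.

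For~(4) your strategy is exactly the one the paper alludes to (see the caption of Figure~1, equality~(1)), but note that in the paper's convention $\interp I;\interp J$ means ``apply $\interp I$ first, then $\interp J$''. Since $\CCC=\interp J(\mathcal D)$ and you then apply $\interp I$, the composite should be written $\interp J;\interp I$, not $\interp I;\interp J$.
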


It is well-known (see~\cite{Gurski2000}) that in the absence of large
bi-cliques (complete bipartite graphs) a graph of bounded cliquewidth
has in fact bounded treewidth.  The same holds also for shrubdepth and
treedepth. The lemma is proved by an easy induction on the depth of
the connection models.

\begin{lemma}[$\star$]\label{lem:bicliques}
  A class of graphs $\CCC$ has bounded treedepth if and only if graphs
  in $\CCC$ have bounded shrubdepth and exclude some fixed bi-clique
  as a subgraph.
\end{lemma}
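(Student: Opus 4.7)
For the forward direction, if $\CCC$ has bounded treedepth then it has bounded shrubdepth by \cref{prop:sd_properties}\cref{SD:3}. Moreover, a graph of treedepth at most~$d$ contains no path on more than $2^d-1$ vertices, and since $K_{t,t}$ contains a Hamiltonian path on $2t$ vertices, any class of bounded treedepth excludes $K_{t,t}$ for all sufficiently large~$t$.

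For the backward direction, I proceed by induction on the depth $h$ of a connection model, with the label set $\Lambda$ of size~$\ell$ and the bi-clique exclusion parameter~$t$ fixed, aiming to bound the treedepth of any $K_{t,t}$-free graph $G$ with such a connection model by some $f(h,\ell,t)$. The case $h=0$ is trivial. For $h\ge 1$, let $r$ be the root of the connection model of~$G$ and $V_1,\dots,V_k$ the leaf sets of the child subtrees; by the inductive hypothesis each $G[V_i]$ has treedepth at most $D:=f(h-1,\ell,t)$.

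The combinatorial heart of the argument is an analysis of the root. Call a label $\alpha$ \emph{active} if $(\alpha,\beta)\in C(r)$ for some $\beta$, and say that $V_i$ is \emph{rich} in $\alpha$ if it contains at least $t$ leaves labelled~$\alpha$. If two distinct subtrees $V_i,V_j$ were rich in labels $\alpha,\beta$ with $(\alpha,\beta)\in C(r)$, the $\alpha$-leaves of $V_i$ and the $\beta$-leaves of $V_j$ would yield a $K_{t,t}$ subgraph; hence at most one subtree is rich in any given active label, so there are at most $\ell$ \emph{central} subtrees, whose union is denoted~$V_C$. A refinement applied to the connected components of the graph on active labels induced by $C(r)$ yields a \emph{cover} set $Y\subseteq V(G)\setminus V_C$ of size $O(t^2\ell)$ such that every edge of $G$ between two non-central subtrees has an endpoint in $V_C\cup Y$. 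Concretely, $Y$ consists of the $O(t\ell)$ active leaves outside $V_C$ belonging to components containing a rich subtree, together with all leaves of ``bounded'' uncentered labels $\alpha$ (those with $\sum_i n_\alpha^i<2t^2$); the $K_{t,t}$-free assumption rules out that two distinct ``unbounded'' uncentered labels are $C(r)$-adjacent, because both would have leaves scattered across at least $2t$ subtrees, and disjoint choices of $t$ subtrees on each side would produce~$K_{t,t}$.

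Given this structural claim, I build an elimination tree of~$G$ in three layers: at the top, linearly stack the elimination trees of the at most~$\ell$ central subtrees, contributing depth~$\le\ell D$; in the middle, linearly stack the cover vertices of~$Y$, contributing depth~$O(t^2\ell)$; and at the bottom, for each non-central subtree $V_j$, attach the elimination tree of the residual $G[V_j\setminus(V_C\cup Y)]$ (of treedepth at most~$D$) below the lowest middle-layer leaf belonging to~$V_j$. All edges of~$G$ become ancestor-descendant in this tree: edges incident to $V_C$ are covered by the top, inter-subtree edges among non-centrals pass through~$Y$ in the middle, and intra-subtree edges are handled by the corresponding bottom subtree. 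This yields the recurrence $f(h,\ell,t)\le(\ell+1)\cdot f(h-1,\ell,t)+O(t^2\ell)$, and hence a bound depending only on $h,\ell,t$. The main technical obstacle, which I would write out most carefully, is the case analysis for uncentered components establishing the cover bound $|Y|=O(t^2\ell)$.
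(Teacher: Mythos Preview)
Your forward direction is fine and matches the paper. For the backward direction, your approach---directly bounding treedepth by induction on the height of a connection model via a small separator---is genuinely different from the paper's, which instead bounds the length of paths (using that bounded treedepth is equivalent to excluding long paths as subgraphs, \cref{lem:td-path}). The paper partitions a hypothetical long path into maximal within-subtree \emph{blocks}, bounds block length by induction, and bounds the number of blocks by assigning each a ``signature'' (parity plus the pair of labels at the block boundary) and showing, via a short probabilistic colouring argument, that too many blocks of the same signature force a $K_{s,s}$.

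Your direct approach is plausible in spirit but has two genuine gaps as written.

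First, the inference ``hence at most one subtree is rich in any given active label'' does not follow from the preceding sentence. What that sentence establishes is only: if $V_i$ is rich in $\alpha$, $V_j$ is rich in $\beta$, $(\alpha,\beta)\in C(r)$, and $i\ne j$, then $K_{t,t}$ appears. This does \emph{not} preclude many subtrees being rich in the same active label $\alpha$ when $(\alpha,\alpha)\notin C(r)$. For instance, take $(\alpha,\beta)\in C(r)$ with a single $\beta$-leaf in the whole graph and arbitrarily many subtrees each containing $t$ leaves labelled $\alpha$: no $K_{t,t}$ arises, yet every such subtree is ``central'' by your definition, so the bound of $\ell$ on their number fails.

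Second, and more seriously, the step ``linearly stack the elimination trees of the at most~$\ell$ central subtrees, contributing depth~$\le\ell D$'' is not a valid treedepth bound. Having $\ell$ parts each of treedepth $\le D$ does not yield treedepth $\le \ell D$ for their union once there are edges between the parts: $K_{n,n}$ is the union of two independent sets (each of treedepth~$1$) but has treedepth $n{+}1$. To make edges between central subtrees ancestor--descendant you would need the top layer to be a \emph{path} through all of $V_C$, i.e.\ depth $|V_C|$, which is unbounded; alternatively you would need a bounded-size hitting set for these cross-edges, which is precisely the delicate analysis you defer for~$Y$. Without this, the recurrence $f(h,\ell,t)\le(\ell+1)f(h-1,\ell,t)+O(t^2\ell)$ is unjustified.

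Both issues look repairable with substantially more casework (roughly: for each $(\alpha,\beta)\in C(r)$, one of the two label classes has few vertices in total, or few outside a single subtree, and one can iterate), but as stated the argument does not go through. The paper's path-length route sidesteps all of this.
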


\subsection{Bounded expansion}

A graph $H$ is a \emph{depth-$r$ minor} of a graph $G$ if $H$ can be
obtained from a subgraph of $G$ by contracting mutually disjoint
connected subgraphs of radius at most $r$. A class $\CCC$ of graphs
has \emph{bounded expansion} if there is a function
$f: \mathbb{N} \to \mathbb{N}$ such that
\(\frac{|E(H)|}{|V(H)|}\leq f(r)\)
for every $r \in \mathbb{N}$ and every depth-$r$ minor \(H\)
of a graph from $\CCC$.  Examples include the class of planar graphs,
or any class of graphs with bounded maximum degree.


\medskip We will use the following lemma.

\begin{lemma}\label{lem:lex}
  Let $\CCC$ be a class of (colored) graphs of bounded expansion and
  let $\interp C$ be a copy operation. Then $\interp C(\CCC)$ is a
  class of colored graphs of bounded expansion.
\end{lemma}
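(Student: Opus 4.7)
The plan is to observe that the Gaifman graph of $\interp C(G)$ is exactly the Cartesian product $G \square K_2$: two disjoint copies of $G$ joined by a perfect matching between corresponding vertices. The added unary predicate marking copies contributes nothing to the Gaifman graph, and bounded expansion (even in the colored setting) depends only on the Gaifman graph. So it suffices to prove that the class of Cartesian products $\{G \square K_2 \sth G\in \CCC\}$ has bounded expansion.

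I would establish this via the weak coloring number characterization of bounded expansion: a class $\CCC$ has bounded expansion if and only if $\wcol_r$ is bounded on $\CCC$ for every $r \in \N$. Fix $r$, take $G \in \CCC$, and fix a linear order $\le$ on $V(G)$ witnessing $\wcol_r(G, \le) \le f(r)$. Extend $\le$ to an order $\le'$ on $V(G \square K_2) = V(G) \times \{0,1\}$ lexicographically: set $(u, a) \le' (v, b)$ iff $u < v$, or $u = v$ and $a \le b$. The crucial observation is that the projection $\pi \colon V(G \square K_2) \to V(G)$, $(x, l) \mapsto x$, is an edge-preserving map (sending edges to edges or to ``loops'' which can be dropped), is compatible with the orders in the sense that $x \le' y$ implies $\pi(x) \le \pi(y)$, and is at most $2$-to-$1$.

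Given these properties, for any vertex $w = (v, j)$ of $G \square K_2$, an element $u^*$ weakly $r$-reachable from $w$ with respect to $\le'$ has a witnessing path $P$ of length at most $r$ in $G \square K_2$, all of whose vertices are $\ge' u^*$. Projecting $P$ by $\pi$ yields a walk of length at most $r$ in $G$ from $\pi(u^*)$ to $v$ whose vertices all lie at or above $\pi(u^*)$ in the order $\le$, so $\pi(u^*) \in \WReach_r[G, \le, v]$. Combined with $\pi$ being at most $2$-to-$1$, this gives $\wcol_r(G \square K_2, \le') \le 2\wcol_r(G, \le) \le 2f(r)$, which is uniformly bounded over $G\in\CCC$. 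Hence $\interp C(\CCC)$ has bounded expansion.

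Main difficulty: there is no real obstacle; the argument is a routine projection/lifting of a witness for the weak coloring number. If one prefers to argue directly from the grad-based definition used in the paper, one can alternatively inspect a depth-$r$ minor of $G \square K_2$ and note that the $\pi$-images of its branch sets are connected subgraphs of $G$ of radius at most $r$, covering $V(G)$ with multiplicity at most $2$; a short case analysis on the three edge types of $G \square K_2$ (within each copy and matching edges) then delivers a linear bound on the number of edges of the minor in terms of its vertex count, matching the conclusion above.
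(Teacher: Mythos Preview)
Your proof is correct. You identify the Gaifman graph of $\interp C(G)$ precisely as the Cartesian product $G \square K_2$ and then bound its weak coloring numbers by a direct projection argument, obtaining $\wcol_r(G\square K_2)\le 2\,\wcol_r(G)$ and hence bounded expansion for $\interp C(\CCC)$ via the weak-coloring-number characterization.

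The paper takes a shorter, citation-based route: it only observes that the Gaifman graph of $\interp C(G)$ is a \emph{subgraph} of the lexicographic product $G\bullet K_2$ (each vertex replaced by two clones, with complete bipartite adjacency between the clone pairs of adjacent vertices), and then invokes the known fact from the Sparsity monograph that taking lexicographic products with any fixed graph preserves bounded expansion. Your approach is more self-contained and even identifies the Gaifman graph exactly rather than embedding it in a denser product; the paper's argument is essentially a one-line appeal to the literature. Both are entirely adequate for this auxiliary lemma.
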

\begin{proof}
  Let $G\in \CCC$.  The Gaifman graph of $\interp C(G)$ is a subgraph
  of the so-called {\em{lexicographic product of $G$ with $K_2$}},
  i.e., it is constructed from the latter by replacing every vertex
  with two clones of it. It is known that if a class of graphs $\CCC$
  has bounded expansion, then the class of lexicographic products of
  graphs from $\CCC$ with any fixed graph $H$ also has bounded
  expansion; see e.g.,~\cite[Proposition 4.6]{Sparsity}.
\end{proof}

\smallskip The connection between treedepth and graph classes of
bounded expansion can be established via \emph{$p$-treedepth
  colorings}. For an integer $p$, a function $c: V(G) \to C$ is a
$p$-treedepth coloring if, for every $i\le p$ and set $X\subset V(G)$
with $|c(X)|=i$, the induced graph $G[X]$ has treedepth at most $i$. A
graph class $\CCC$ has \emph{low treedepth colorings} if for every
$p \in \mathbb{N}$ there is a number $N_p$ such that for every
$G \in \CCC$ there exists a $p$-treedepth coloring $c\from V(G)\to C$
with $|C|\le N_p$.

\begin{theorem}[\cite{POMNI}]\label{thm:beltc}
  A class of graphs $\CCC$ has bounded expansion if, and only if, it
  has low treedepth colorings.
\end{theorem}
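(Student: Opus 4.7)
The plan is to prove the two implications separately. The backward direction is a direct density-counting argument; the forward direction requires substantially heavier combinatorial machinery.

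For the direction $(\Leftarrow)$, I assume $\CCC$ admits $p$-treedepth colorings with $N_p$ colors and bound the edge density of shallow minors. The crucial ingredient is that a graph of treedepth at most $k$ has at most $(k-1)n$ edges, since every edge goes between an ancestor and a descendant in the witnessing forest. Hence any union of at most $p$ color classes induces a subgraph with $O(n)$ edges. Now fix $r$ and let $H$ be a depth-$r$ shallow minor of some $G\in\CCC$ via a model $\mu$. Each edge of $H$ is realized by a walk of length at most $2r+1$ in $G$ passing through two branch sets of radius at most $r$. Setting $p=2r+2$ and charging each edge of $H$ to the set of at most $p$ colors appearing along its realizing walk, a union bound over the at most $\binom{N_p}{p}$ possible color palettes gives $|E(H)| = O(|V(H)|)$, with the hidden constant depending only on $r$. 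This yields bounded expansion.

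For the direction $(\Rightarrow)$, the plan proceeds in three steps. First, establish that a class of bounded expansion has uniformly bounded \emph{weak $r$-coloring numbers} $\wcol_r$ for every $r$: there is a linear ordering of $V(G)$ such that each vertex has only few ``weakly $r$-reachable'' predecessors through paths respecting the order. Second, use such an ordering to construct a \emph{$p$-centered coloring} of $G$ with boundedly many colors, i.e., a coloring in which every connected subgraph either uses more than $p$ colors or has some color appearing exactly once. Third, verify that any $p$-centered coloring is a $p$-treedepth coloring: if $G[X]$ uses $i\leq p$ colors, pull out the singleton ``center'' color class as roots of a treedepth decomposition and recurse on the connected components, yielding depth at most $i$.

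The principal obstacle is the first step of the forward direction, i.e., controlling weak coloring numbers uniformly. This is classically handled by \emph{transitive fraternal augmentations}: one inductively adds arcs to $G$ to encode certain directed paths of bounded length, using the bounded expansion assumption to keep the augmented digraphs sparse, and ultimately extracts an ordering witnessing bounded $\wcol_p$ in terms of $p$ and the grad function of $\CCC$. The passage from such an ordering to a $p$-centered coloring is then a greedy coloring argument processing vertices in the reverse order and assigning each vertex a color distinct from those of its weakly $p$-reachable predecessors, and the translation of $p$-centered colorings to $p$-treedepth decompositions is a routine induction on the number of colors used.
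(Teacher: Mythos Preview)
The paper does not actually prove this theorem; it is stated in the preliminaries as a cited result of Ne\v{s}et\v{r}il and Ossona~de~Mendez and used as a black box throughout. So there is no proof in the paper to compare yours against.

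Your outline is essentially the standard proof from the cited work and its companion literature. The forward direction via transitive fraternal augmentations $\to$ bounded weak coloring numbers $\to$ $p$-centered colorings $\to$ $p$-treedepth colorings is exactly the classical chain (the middle step is sometimes attributed to Zhu), and your description of each step is accurate.

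One place that deserves a sentence more of care is the backward direction. Charging each edge of $H$ to the palette of its realizing walk is the right idea, but the phrase ``a union bound gives $|E(H)|=O(|V(H)|)$'' hides the actual reason. What you need is that for each fixed palette $S$ of size at most $p$, the edges of $H$ whose realizing walks live entirely in $G[S]$ themselves form a \emph{minor} of $G[S]$: the portions of the branch sets traversed by those walks are connected (they all pass through the chosen center), so they serve as branch sets in $G[S]$. Since $G[S]$ has treedepth at most $p$ and treedepth is minor-monotone, that sub-minor has at most $(p-1)|V(H)|$ edges. Summing over the $\binom{N_p}{p}$ palettes then gives the bound. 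Without this observation the ``union bound'' step is just a label, not an argument.
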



\section{Main results}\label{sec:sbe}

In this section we introduce two notions which generalize the concept
of bounded expansion. Then we state the main results and outline the
proof.  First, we introduce classes of \emph{structurally bounded
  expansion}. This notion arises from closing bounded expansion graph
classes under transductions.

\begin{definition}
  A class $\CCC$ of graphs has {\em{structurally bounded expansion}}
  if there exists a class of graphs $\DDD$ of bounded expansion and a
  transduction $\interp I$ such that $\CCC\subseteq \interp I(\DDD)$.
\end{definition}

The second notion, \emph{low shrubdepth covers}, arises from the low
treedepth coloring characterisation of bounded expansion (see
\cref{thm:beltc}) by replacing treedepth by its dense counterpart,
shrubdepth. For convenience, we formally define this in terms of
\emph{covers}.


\begin{definition}\label{def:cover}
  A \emph{cover} of a graph $G$ is a family $\cal U_{G}$ of subsets of
  $V(G)$ such that $\bigcup \cal U_{G}= V(G)$.  A cover $\cal U_{G}$
  is a \emph{$p$-cover}, where $p \in \N$, if every set of at most $p$
  vertices is contained in some $U \in \cal U_{G}$.
  %
  If $\CCC$ is a class of graphs, then a \mbox{($p$-)}cover of $\CCC$
  is a family $\cal U=(\cal U_{G})_{G\in \CCC}$, where $\cal U_{G}$ is
  a ($p$-)cover of $G$.  The cover $\cal U$ is \emph{finite} if
  $\sup\set{|\cal U_{G}|\colon G\in \CCC}$ is finite.  Let
  $\CCC[\cal U]$ denote the class of graphs
  $\set{G[U]\colon G\in\CCC, U\in \cal U_{G}}$.  We say that the
  cover~$\cal U$ has \emph{bounded treedepth} (respectively, bounded
  shrubdepth) if the class $\CCC[\cal U]$ has bounded treedepth
  (respectively, shrubdepth).
\end{definition}

%

\begin{example}
  Let $\TTT$ be the class of trees and let $p\in\N$. We construct a
  finite $p$-cover~$\cal U$ of $\TTT$ which has bounded treedepth.
  Given a rooted tree $T$, let $\cal U_T=\set{U_0,\ldots,U_{p}}$,
  where~$U_i$ is the set of vertices of $T$ whose depth is not
  congruent to~$i$ modulo $p+1$. Note that $T[U_i]$ is a forest of
  height $p$, and that $\cal U_T$ is a $p$-cover of $T$.  Hence
  $\cal U=(\cal U_T)_{T\in\TTT}$ is a finite $p$-cover of $\TTT$ of
  bounded treedepth.
\end{example}

In analogy to low treedepth colorings, we can now characterize graph
classes of bounded expansion using covers.  We say that a class $\CCC$
of graphs has \emph{low treedepth covers} if for every $p\in \N$ there
is a finite $p$-cover $\cal U$ of $\CCC$ with bounded treedepth.  The
following lemma follows easily from \cref{thm:beltc}.

\begin{lemma}[$\star$]\label{lem:ltd-lsd-covers}
  A class of graphs has bounded expansion if, and only if, it has low
  treedepth covers.
\end{lemma}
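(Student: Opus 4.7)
The plan is to pass between covers and colorings by direct constructions in both directions, with \cref{thm:beltc} exchanging the coloring characterization for bounded expansion itself.

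For the $(\Rightarrow)$ direction, I would assume $\CCC$ has bounded expansion and fix $p$. \cref{thm:beltc} provides $N_p$ and, for every $G\in\CCC$, a $p$-treedepth coloring $c_G\from V(G)\to C_G$ with $|C_G|\leq N_p$. The cover assigned to $G$ would be $\cal U_G = \set{c_G^{-1}(I) : I\subseteq C_G,\ |I|\leq p}$, of size at most $\binom{N_p}{p}$. Any subset of $V(G)$ of size at most $p$ uses at most $p$ colors and so is contained in one of these sets, while every $G[c_G^{-1}(I)]$ has treedepth at most $|I|\leq p$ by the defining property of a $p$-treedepth coloring.

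For the $(\Leftarrow)$ direction, I would assume $\CCC$ has low treedepth covers and fix $p$. I take a $p$-cover $\cal U_G=\set{U_1,\dots,U_N}$ of $G$ of size bounded by some $N=N(p)$ with each $G[U_i]$ of treedepth at most some $d=d(p)$, and fix a treedepth decomposition $T_i$ of $G[U_i]$ of depth at most $d$, with depth function $\delta_i\from U_i\to\set{0,\dots,d}$. The candidate coloring is
\[
  c(v) := \bigl(I(v),\,(\delta_i(v))_{i\in I(v)}\bigr), \qquad I(v)=\set{i\in[N] : v\in U_i},
\]
using a bounded number of colors, depending only on $p$. The aim is then to verify that $c$ is a $p$-treedepth coloring, so that \cref{thm:beltc} yields bounded expansion of $\CCC$.

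To verify the coloring property, take $X\subseteq V(G)$ with $|c(X)|=k\leq p$ and pick one representative $v_j$ of each color, obtaining $k$ vertices. The $p$-cover property places $\set{v_1,\dots,v_k}$ in a common $U_i$, so $i\in I(v_j)$ for every $j$; as $I(\cdot)$ is constant on color classes, this forces $X\subseteq U_i$. Moreover, the vertices of $X$ realize at most $k$ distinct values of $\delta_i$, one per color. Restricting $T_i$ to $X$ yields a valid treedepth decomposition of $G[X]$: for any $v\in X$, its ancestors in $T_i|_X$ are exactly the $X$-ancestors of $v$ in $T_i$, and each such ancestor lies at one of the at most $k$ depth levels realized by $X$, so $v$ has at most $k$ ancestors in $T_i|_X$ (including itself). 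Hence $T_i|_X$ has depth at most $k$, giving treedepth $\leq k$ for $G[X]$. The one step that actually needs care is this depth count on the restricted decomposition; everything else is bookkeeping.
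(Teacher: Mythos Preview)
Your proof is correct and follows essentially the same approach as the paper: both directions go through \cref{thm:beltc}, converting between $p$-treedepth colorings and finite $p$-covers of bounded treedepth. In the $(\Leftarrow)$ direction the paper builds the coloring in two stages (first $\chi(v)=\{U\in\cal U_G: v\in U\}$, then a refinement by depth information from treedepth decompositions), whereas you merge these into a single coloring $c(v)=(I(v),(\delta_i(v))_{i\in I(v)})$; the underlying idea---record cover membership plus depth, then use the $p$-cover property to trap any $k\le p$ color classes inside a single $U_i$ and bound treedepth by the number of realized depths---is identical, and your one-step version is slightly cleaner.
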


We now define the second notion generalizing the concept of bounded
expansion. The idea is to use low shrubdepth covers instead of low
treedepth covers.

\begin{definition}\label{def:lsdc}
  A class $\CCC$ of graphs has \emph{low shrubdepth covers} if, and
  only if, for every $p\in \N$ there is a finite $p$-cover $\cal U$ of
  $\CCC$ with bounded shrubdepth.
\end{definition}

It is easily seen that \cref{lem:ltd-lsd-covers} together with
\cref{prop:sd_properties}(\ref{SD:3}) imply that every class of
bounded expansion has low shrubdepth covers.  Our main result is the
following theorem.

 \begin{theorem}\label{thm:main}
   A class of graphs has structurally bounded expansion if, and only
   if, it has low shrubdepth covers.
 \end{theorem}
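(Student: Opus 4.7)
The plan is to prove the two implications separately: the direction from low shrubdepth covers to structurally bounded expansion is a direct construction, while the converse relies on a quantifier-elimination result for transductions on bounded expansion classes, which is the main technical ingredient and will be isolated in the subsequent sections.

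For the direction from low shrubdepth covers to structurally bounded expansion, I would fix $p = 2$, so that every edge of $G \in \CCC$ is contained in some piece $U \in \cal U_G$ of the $2$-cover. By definition of shrubdepth, each induced subgraph $G[U]$ admits a connection model of bounded depth $h$ over a fixed finite label set $\Lambda$, and these connection models, viewed as labeled rooted trees, have bounded treedepth. To encode the whole graph $G$ in a single bounded expansion structure, I would take the disjoint union of the $N$ connection models for the pieces together with, for each ordered pair of pieces $(i,j)$, a partial function pairing leaves that represent the same vertex of $G$. By iterated application of \cref{lem:lex}, the class of structures $\DDD$ obtained in this way has bounded expansion. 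An almost quantifier-free transduction $\interp I$ would then (i) use a unary lift to select one canonical representative per equivalence class of leaves identified by the pairing functions, (ii) define edges among representatives by stating that for some piece $i$ the pair of images under the pairing functions into $T_i$ has a least common ancestor in $T_i$ whose label permits the edge (the LCA of two nodes in a bounded-depth tree is definable by a quantifier-free formula using iterated parent functions, as in \cref{ex:transduction}), and (iii) restrict to the representatives. This yields $\CCC \subseteq \interp I(\DDD)$.

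For the direction from structurally bounded expansion to low shrubdepth covers, I would assume $\CCC \subseteq \interp I(\DDD)$ with $\DDD$ of bounded expansion, and aim, given $p \in \N$, to build a finite $p$-cover of $\CCC$ of bounded shrubdepth. The crucial reduction is to replace $\interp I$ by an equivalent almost quantifier-free transduction (this is essentially the content of \cref{thm:qe-lsc} specialized to the case when the source class has bounded expansion): up to enlarging $\DDD$ by finitely many unary predicates and unary functions that are themselves definable by simple transductions preserving bounded expansion, any transduction on $\DDD$ becomes almost quantifier-free. Assuming this reduction and invoking \cref{lem:normal-qf}, I would write $\interp I$ as a sequence of unary lifts followed by a deterministic almost quantifier-free transduction $\interp J$. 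Using \cref{lem:ltd-lsd-covers}, I would take a $p'$-cover $\cal V$ of $\DDD$ with bounded treedepth for $p'$ chosen large enough to accommodate both the copying factor and the finite number of iterated applications of unary functions appearing in $\interp J$. For each $V \in \cal V_G$, closing $V$ under these unary functions yields a set $V^+$ still lying in a class of bounded treedepth, and $\interp J(G[V^+])$ contains $\interp I(G)$ restricted to $V$; by \cref{prop:sd_properties}(\ref{SD:5}) this piece has bounded shrubdepth. Collecting these pieces over $V \in \cal V_G$ produces the desired finite $p$-cover of $\interp I(G)$ of bounded shrubdepth.

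The main obstacle is the quantifier-elimination step in the harder direction. Without it, the transduction $\interp I$ may use formulas of arbitrary quantifier rank, so the value of $\interp I(G)$ near a vertex $v$ depends on $G$ globally, and restricting $G$ to any bounded-treedepth piece breaks the correspondence with $\interp I(G)$. Moreover, the enrichment of $\DDD$ by auxiliary functions must be done in a controlled way, so that (a) bounded expansion — and indeed the existence of low treedepth covers closed under these auxiliary functions — is preserved, and (b) the resulting almost quantifier-free transduction computes the same output as $\interp I$. Handling this interaction between covers and definable functions is where the technical heart of the paper lies, and will be deferred to dedicated lemmas in the forthcoming sections.
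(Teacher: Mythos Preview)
Your outline for the second direction (structurally bounded expansion $\Rightarrow$ low shrubdepth covers) is essentially the paper's argument: use quantifier elimination (\cref{pro:qe-be0}) to replace $\interp I$ by an almost quantifier-free transduction, and then exploit locality to push a low treedepth cover of $\DDD$ forward to a low shrubdepth cover of $\interp I(\DDD)$. This part is fine and matches the paper's \cref{lem:lsc0}.

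The first direction, however, has a genuine gap. Your appeal to \cref{lem:lex} is misplaced: that lemma concerns the copy operation (duplicating a single structure), not the operation of taking $N$ \emph{different} connection models and adding pairing functions. More seriously, the claim that your class $\DDD$ has bounded expansion is \emph{false} for arbitrary choices of connection models. Take $\CCC$ to be the class of edgeless graphs, $N=2$, $U_1=U_2=V(G)$, $|V(G)|=m^2$, and choose depth-$2$ connection models $T_1,T_2$ whose depth-$1$ nodes induce two ``orthogonal'' partitions of $V(G)$ into $m$ blocks of size $m$ each (say, rows versus columns of an $m\times m$ grid indexing). Then in the Gaifman graph of your structure, taking each depth-$1$ node together with its leaf children as a branch set yields a $1$-shallow minor containing $K_{m,m}$, so $\nabla_1$ is unbounded.

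The paper avoids this by \emph{not} taking arbitrary connection models. Its key combinatorial lemma (\cref{lem:key}, relying on \cref{lem:connectome}) shows that an SC-decomposition of each piece $G[U_i]$ can be produced by an \emph{almost quantifier-free} transduction $\interp T$ from $G[U_i]$ itself. This is exactly what makes the argument go through: because the tree-building transduction $\interp S$ is almost quantifier-free, \cref{lem:lsc0} transfers low shrubdepth covers from $\CCC$ to $\interp S(\CCC)$; since $\interp S(\CCC)$ is also $K_{N+1,N+1}$-free (as a union of $N$ trees), \cref{lem:bicliques} upgrades each bounded-shrubdepth piece to bounded treedepth, and bounded expansion follows. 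In other words, the ``direct construction'' you describe is not direct at all --- it hides precisely the combinatorial core of the paper, namely that bounded-depth tree encodings of bounded-shrubdepth graphs are definable by almost quantifier-free transductions.
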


 As a byproduct of our proof of \cref{thm:main} we obtain the
 following quantifier-elimination result, which we believe is of
 independent interest.

 \begin{theorem}\label{thm:qe-lsc}
   Let $\CCC$ be a class of colored graphs which has low shrubdepth
   covers. Then every transduction $\interp I$ is equivalent to some
   almost quantifier-free transduction $\interp J$ on~$\CCC$.
 \end{theorem}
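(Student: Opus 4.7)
The plan is to eliminate quantifiers by induction on quantifier rank, reducing to bounded shrubdepth via a covering argument and relying on quantifier elimination for bounded shrubdepth as the base case. Using \cref{lem:normal} I may assume $\interp I = \interp L;\interp C;\interp F;\interp E;\interp X;\interp R$, where the formulas parameterizing $\interp F;\interp E;\interp X$ use only symbols present after $\interp L;\interp C$. The intermediate class $\CCC' = (\interp L;\interp C)(\CCC)$ still has low shrubdepth covers: unary lifts merely add colors, and copying preserves bounded shrubdepth by \cref{prop:sd_properties}(\ref{SD:5}). Hence it suffices to replace each such formula by an equivalent quantifier-free one, possibly at the cost of inserting additional unary lifts and function extensions driven by quantifier-free formulas after $\interp L;\interp C$.

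I would isolate the following claim and prove it by induction on the quantifier rank $q$ of $\phi$: for every class $\DDD$ of colored graphs with low shrubdepth covers and every first-order formula $\phi(\tup{x})$, there is an almost quantifier-free transduction $\interp K$, composed of unary lifts and function extensions with quantifier-free formulas, and a quantifier-free formula $\phi^*(\tup{x})$, such that $\str A \models \phi(\tup{x})$ iff $\interp K(\str A) \models \phi^*(\tup{x})$ for every $\str A \in \DDD$ and every tuple $\tup{x}$ over $V(\str A)$. In the inductive step, write $\phi = \exists y.\,\psi(\tup{x}, y)$ with $\psi$ of quantifier rank $q{-}1$ (the universal case is dual). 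Apply the inductive hypothesis to $\psi$ on $\DDD$, obtaining a preprocessing $\interp K_\psi$ and a quantifier-free $\psi^*(\tup{x}, y)$. Since almost quantifier-free preprocessing preserves low shrubdepth covers (bounded shrubdepth is closed under transductions by \cref{prop:sd_properties}(\ref{SD:5})), the class $\DDD' = \interp K_\psi(\DDD)$ still has low shrubdepth covers. Set $k = |\tup{x}|$, take a $(k{+}1)$-cover $\cal U$ of $\DDD'$ of bounded shrubdepth and bounded size $N$, and introduce predicates $X_1,\ldots,X_N$ naming the cover pieces via unary lifts. The key observation is that any witness $y$ together with $\tup{x}$ fits in some $X_i$, and since $\psi^*$ is quantifier-free it is preserved under taking induced substructures, giving
\[
\str A \models \phi(\tup{x}) \;\iff\; \bigvee_{i=1}^N \Bigl(\bigwedge_{j=1}^k X_i(x_j) \;\wedge\; \interp K_\psi(\str A)[X_i] \models \exists y.\,\psi^*(\tup{x}, y)\Bigr).
\]

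For each $i$ the structure $\interp K_\psi(\str A)[X_i]$ lies in a bounded shrubdepth class, and the remaining task is to eliminate $\exists y$ from $\exists y.\,\psi^*(\tup{x}, y)$ on such a class, with $\psi^*$ already quantifier-free. I would prove this as a standalone base case using connection models: using unary lifts to record the label and the sequence of depth-indexed ancestors of each leaf, and function extensions to materialize a parent function on the associated bounded-depth tree (in the style of \cref{ex:transduction}), one can express adjacency, and more generally existence of a partner leaf with given atomic properties, as a quantifier-free condition involving iterated parent application and comparisons of labels at least common ancestors. The resulting quantifier-free formula $\chi_i(\tup{x})$ replaces each disjunct $\exists y.\,\psi^*(\tup{x}, y)$ evaluated in $\interp K_\psi(\str A)[X_i]$, turning the whole expression into a quantifier-free formula in the enriched signature and closing the induction.

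The main obstacle lies in verifying that almost quantifier-free preprocessing preserves low shrubdepth covers, and that the quantifier-free formulas produced by earlier inductive steps retain their meaning when their evaluation is restricted to the cover pieces $X_i$. The covering step exploits precisely the fact that quantifier-free conditions are preserved between a structure and its induced substructures, which is why the inductive hypothesis must deliver a \emph{quantifier-free} formula rather than one of merely reduced quantifier rank. A secondary technical challenge is the base case: although quantifier elimination on bounded-depth colored trees is classical, its faithful transfer through the connection-model encoding to arbitrary bounded-shrubdepth colored structures requires carefully staging the unary lifts — one per depth level and label class — and the function extensions so that every intermediate formula remains quantifier-free.
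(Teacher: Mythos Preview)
Your strategy mirrors the paper's proof of \cref{pro:qe-be0} (quantifier elimination for bounded expansion via treedepth covers), but applied directly with shrubdepth covers rather than first reducing to bounded expansion via \cref{pro:bi-def0}, which is how the paper actually proves this theorem. That is a legitimate alternative route in spirit, but the proposal has a genuine gap at its core.

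The claim ``since $\psi^*$ is quantifier-free it is preserved under taking induced substructures'' is false once $\psi^*$ involves function symbols, which it must after the preprocessing~$\interp K_\psi$. If a term such as $f(g(x_1))$ occurs in $\psi^*$ and $g(x_1)\notin X_i$, then in $\interp K_\psi(\str A)[X_i]$ the term becomes undefined even though $x_1\in X_i$, so the atomic subformula changes truth value. A $(k{+}1)$-cover is therefore insufficient; you need a cover large enough to contain all intermediate term values --- compare the paper's \cref{lem:qf-covers}, which requires a $c\cdot p$-cover with $c$ the number of terms. Moreover, that cover must be taken relative to the Gaifman graph of the enriched class $\DDD'$, and you justify that $\DDD'$ has low shrubdepth covers by invoking \cref{prop:sd_properties}(\ref{SD:5}); but that item, like \cref{lem:lsc0}, applies only to transductions that \emph{output colored graphs}, whereas $\interp K_\psi$ produces structures with function symbols. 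You list both issues as ``obstacles'' in the final paragraph, but they are not peripheral: they are exactly where the displayed equivalence is currently unjustified.

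The paper avoids these difficulties by first passing, via \cref{pro:bi-def0}, to a bounded expansion class, where every introduced function can be kept \emph{guarded} by the original edge relation; then low treedepth covers of the original graph automatically cover every intermediate structure, and the restriction-to-a-piece argument goes through (this is the content of \cref{lem:qe-formulas} and \cref{lem:star}). Your direct approach can probably be salvaged --- e.g.\ by applying \cref{lem:lsc0} to the almost quantifier-free transduction that outputs the Gaifman graph of $\interp K_\psi(\cdot)$, and then taking a $c\cdot(k{+}1)$-cover of that graph class --- but this requires real work beyond what you have written, and your base case for bounded shrubdepth with function symbols essentially presupposes the content of \cref{lem:key}.
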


We now outline the proof of \cref{thm:main} and \cref{thm:qe-lsc}.
Both theorems follow easily from \cref{pro:bi-def0} and
\cref{pro:qe-be0} stated below.  These are proved in subsequent
sections.

We start with the following lemma, which intuitively shows that covers
commute with almost quantifier-free transductions.

\begin{lemma}\label{lem:lsc0}
  If a class of graphs $\CCC$ has low shrubdepth covers and
  $\interp I$ is an almost quantifier-free transduction that outputs
  colored graphs, then $\interp I(\CCC)$ also has low shrubdepth
  covers.
\end{lemma}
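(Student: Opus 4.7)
The plan is to lift a low shrubdepth $p$-cover of $G\in\CCC$ through the atomic operations comprising $\interp I$. By \cref{lem:normal-qf} followed by \cref{lem:normal} applied to the resulting deterministic part, I may assume $\interp I = \interp L;\interp C;\interp F;\interp E;\interp X;\interp R$ is in normal form, where each formula parameterizing $\interp F;\interp E;\interp X$ is quantifier-free and refers only to symbols from the input or introduced by $\interp L;\interp C$. Since at that stage no function symbols yet exist, these formulas contain no function terms, and the truth of any atomic formula on a tuple depends solely on the substructure induced by its entries.

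Fix $p\in\N$ and let $\cal U_G$ be a $p$-cover of $G$ of bounded shrubdepth. For each output $H=\interp I(G)$ with $V(H)\subseteq V(G)\times\{0,1\}^k$ (where $k$ is the number of copying operations in $\interp C$), I would define
\[ \cal U_H := \set{(U\times\{0,1\}^k)\cap V(H) \sth U\in\cal U_G}. \]
A direct projection argument shows that $\cal U_H$ is a $p$-cover, and $|\cal U_H|\le|\cal U_G|$ gives finiteness. What remains is to argue that each induced substructure $H[(U\times\{0,1\}^k)\cap V(H)]$ has bounded shrubdepth uniformly in $G$ and~$U$.

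I would prove this by tracing the induced substructure through each atomic operation in turn. The unary lifts in $\interp L$ merely add unary predicates restricted to $U$, which preserves bounded shrubdepth up to enlarging the label set. Copying commutes with induced substructure, so bounded shrubdepth is maintained by \cref{prop:sd_properties}(\ref{SD:5}). Because the defining formulas of $\interp E$ and $\interp X$ are quantifier-free and involve no function terms, their truth on tuples inside $U\times\{0,1\}^k$ agrees with their truth in the full structure, so these operations commute with restriction and again preserve bounded shrubdepth. Reducts are trivial.

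The delicate step, which I expect to be the main obstacle, is the function extension $\interp F$. A function defined by $f_i(x)=y\iff\phi_i(x,y)$ demands that $y$ be \emph{globally} unique in $V(G_2)$, and restriction to $U' := U\times\{0,1\}^k$ can break this: an element may admit several witnesses globally but only one inside~$U'$. Consequently $G_3[U']$ need not equal $\interp F(G_2[U'])$. I would circumvent this by enriching $G_2[U']$ with an auxiliary unary predicate $P_i$ for each function $f_i$, where $P_i(x)$ marks precisely those $x$ for which $f_i(x)$ is defined in the global structure $G_3$ and lies in~$U'$. Adding such predicates preserves bounded shrubdepth, and the function extension parameterized by the quantifier-free formula $P_i(x)\wedge\phi_i(x,y)$ then correctly recovers $f_i$ on the substructure: when $P_i(x)$ holds, global uniqueness inside~$U'$ forces local uniqueness, and when $P_i(x)$ fails, $f_i(x)$ is undefined on both sides. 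A final invocation of \cref{prop:sd_properties}(\ref{SD:5}) yields bounded shrubdepth of each $H[(U\times\{0,1\}^k)\cap V(H)]$.
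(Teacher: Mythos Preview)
Your normal-form step contains the essential gap. You invoke \cref{lem:normal-qf} and then \cref{lem:normal}, and assert that the resulting formulas in $\interp F;\interp E;\interp X$ are simultaneously quantifier-free \emph{and} use only symbols introduced before $\interp F$. But \cref{lem:normal} does not preserve quantifier-freeness: in its proof, whenever a formula uses a function $f$ introduced by an earlier function-extension (via $\varphi'(x,y)$), that usage is replaced by \emph{existentially quantifying} the image through $\varphi'$. So after normalisation the formulas in $\interp E$ and $\interp X$ generally contain quantifiers. Conversely, if you insist on keeping the formulas quantifier-free, you cannot eliminate the function terms, and then your locality claim (``the truth of any atomic formula on a tuple depends solely on the substructure induced by its entries'') fails outright, since a term $f(x)$ may evaluate to an element outside~$U$. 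Either way your argument that $H[(U\times\{0,1\}^k)\cap V(H)]$ coincides with a transduction of $G[U]$ breaks down. The entire expressive power of \emph{almost} quantifier-free transductions resides in these function terms; a normal form that simultaneously removes functions and avoids quantifiers would collapse the class to purely relational quantifier-free transductions.

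The paper handles this differently and does not attempt such a normal form. It keeps the deterministic almost quantifier-free transduction $\interp J$ intact and proves a locality lemma (\cref{lem:dep}): for every output element $v$ there is a \emph{support} set $S_v\subseteq V(\str A)$ of size bounded by a constant~$c$ (containing $v$, all values of terms applied to $v$, and for each function-extension the witnesses certifying uniqueness or non-uniqueness), such that whenever $U\supseteq\bigcup_{v\in W}S_v$ one has $\interp J(\str A)[W]=\interp J(\str A[U])[W]$. One then takes a $(c\cdot p)$-cover $\cal U$ of bounded shrubdepth (not a $p$-cover), and defines the output cover by $W_U=\{v:S_v\subseteq U\}$. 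Your $P_i$-predicate trick captures a fragment of this support idea for a single function-extension in isolation, but does not cope with compositions of function terms or with later operations that use the introduced functions---which is exactly where the factor $c$ and the enlarged cover come from.
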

\begin{proof}[sketch]
  The idea is that for any almost quantifier-free transduction
  $\interp I$
  there is a constant $c$ such any induced substructure of
  $\interp I(G)$ on $p$ elements depends only on an induced
  substructure of $G$ of size $p\cdot c$. In particular, a
  $(p\cdot c)$-cover of $G$ induces a $p$-cover of $\interp I(G)$.
  Moreover, as having bounded shrubdepth is preserved by
  transductions, a low shrubdepth cover of $\CCC$ induces a low
  shrubdepth cover of $\interp I(\CCC)$.  The details are presented in
  \cref{sec:transductions-and-covers}.
\end{proof}

\medskip The main novel ingredient in our proof of \cref{thm:main} and
\cref{thm:qe-lsc} is the following result, which intuitively states
that classes with low shrubdepth covers are bi-definable with classes
of bounded expansion, using almost quantifier-free transductions.

\begin{proposition}\label{pro:bi-def0}
  Suppose $\CCC$ is a class of graphs with low shrubdepth covers. Then
  there is a pair of transductions $\interp S$ and $\interp I$, where
  $\interp S$ is almost quantifier-free and $\interp I$ is
  deterministic almost quantifier-free, such that $\interp S(\CCC)$ is
  a class of colored graphs of bounded expansion and
  $\interp I(\interp S(G))=\{G\}$ for each $G\in \CCC$.
\end{proposition}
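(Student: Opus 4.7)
The plan is to build $\interp S(G)$ as a colored graph that carries the original vertex set $V(G)$, some extra vertices serving as internal nodes of bounded-depth decomposition trees (one tree per piece of a suitable cover of $G$), and unary predicates carrying all remaining information. Because a union of constantly many bounded-depth trees is of bounded expansion and unary colors do not affect expansion, $\interp S(\CCC)$ will be of bounded expansion. The inverse transduction $\interp I$ then reads each edge of $G$ off these decomposition trees by a constant-depth tree walk, which is expressible quantifier-freely using function terms obtained from a parent function.

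Concretely, I would fix $p=2$ and invoke the low shrubdepth covers assumption to obtain, uniformly in $G \in \CCC$, a finite $2$-cover $\{U_1,\ldots,U_N\}$ such that each $G[U_i]$ has SC-depth at most a constant~$d$ (by \cref{prop:sd_properties}(\ref{SD:2})). Since $p=2$, every edge of $G$ lies inside some $G[U_i]$. For each $i$, fix an SC-decomposition of $G[U_i]$: a rooted depth-$d$ tree $T_i$ with leaf set $U_i$ and flip-predicates $W_0^{(i)},\ldots,W_d^{(i)}$. The transduction $\interp S$ would then: (a) apply $\lceil\log_2(Nd+1)\rceil$ copy operations to provide enough spare vertices to host the internal nodes of all the $T_i$'s; (b) apply a bounded sequence of unary lifts that nondeterministically guess the cover sets $U_i$, the flip-predicates $W_j^{(i)}$, the role of each spare vertex (which tree, which level, or unused), and auxiliary labels making parent-child pairs of each $T_i$ quantifier-freely definable; (c) install a single edge relation $E'$ as the union of all such pairs via an extension operation; (d) restrict to used vertices; and (e) reduct away the original edge relation and the intermediate symbols. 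The Gaifman graph of the output is a union of $N$ bounded-depth trees on a linearly-sized vertex set, hence of bounded expansion.

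The inverse $\interp I$ would be deterministic almost quantifier-free. First, function extensions turn $E'$ into parent functions $\parent_1,\ldots,\parent_N$, one per tree; uniqueness follows from the level labels. Second, an extension operation defines the new edge relation by a constant-size quantifier-free formula: for each of the constantly many cover indices $i$ with $u,v \in U_i$ and each possible LCA level $i^* \in \{0,\ldots,d-1\}$, check that $\parent_i^{d-i^*}(u) = \parent_i^{d-i^*}(v)$ and $\parent_i^{d-i^*-1}(u) \neq \parent_i^{d-i^*-1}(v)$, and if so declare $uv$ an edge iff $\bigoplus_{j=0}^{i^*} \bigl(W_j^{(i)}(u) \wedge W_j^{(i)}(v)\bigr)$ equals~$1$; this is the standard SC-decomposition edge rule. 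Finally, restrict to leaf vertices, yielding $V(G)$, and reduct away remaining symbols. By construction $\interp I(\interp S(G)) = \{G\}$.

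The main obstacle I anticipate is steps~(b)--(c) of $\interp S$: defining the parent-child pairs of each $T_i$ quantifier-freely, because the parent of an internal node is a specific vertex in a potentially large bag and cannot be singled out by unary predicates alone. The resolution I would pursue is to use the binary copy-relations introduced by the copying operations as navigation: additional lifts would attach level and ``slot'' markers to each spare vertex, so that the parent of a spare vertex at level~$j$ of $T_i$ is pinned down as the unique spare vertex at the appropriate position in the copy-chain, at level~$j{-}1$ of $T_i$, and bearing matching slot labels --- a condition expressible quantifier-freely. An alternative, avoiding internal nodes altogether, is to encode each $T_i$ implicitly through canonical-representative functions $\sigma_j^{(i)} \colon U_i \to U_i$ whose Gaifman footprint is a disjoint union of stars; the bounded-expansion bound for the output still holds, and both $\interp S$ and $\interp I$ simplify.
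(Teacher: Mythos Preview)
Your architecture matches the paper's: fix a finite $2$-cover of bounded shrubdepth, attach a bounded-depth SC-decomposition tree to each piece, output the union of these trees with colors recording the cover and the flip sets, and invert by reading every edge off some tree via parent functions and the SC edge rule. Your $\interp I$ is correct (and somewhat more direct than the paper's recursive ``bundling'' construction in \cref{lem:key}). The gap is exactly where you flagged it, in steps~(b)--(c) of $\interp S$, and neither resolution closes it. After copying and unary lifts, the only binary relations available are the edge relation $E$ of $G$ and the copy relations, and the latter connect a vertex only to copies of \emph{itself}. Consequently any quantifier-free formula $\varphi(x,y)$ --- and any function built from such formulas by function-extension --- can navigate from $x$ only by copy-steps (which do not change the underlying original vertex) and by steps of the form ``the unique $E$-neighbor of the current point lying in some fixed unary predicate''. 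Your slot labels on $O(1)$ predicates cannot encode which of unboundedly many candidate parents is the right one unless the parent is reachable by such $E$-steps; and for your alternative $\sigma_j^{(i)}$, the value $\sigma_j^{(i)}(x)$ lies in the same connected component of a \emph{flipped} graph as $x$, so it is in general not $E$-adjacent to $x$ and a single function-extension cannot reach it.

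What is missing is the combinatorial heart of the paper, \cref{lem:connectome} (proved via \emph{guidance systems}, \cref{lem:go-right} and \cref{lem:bfs}): on any class of bounded shrubdepth there is a constant $\ell$ such that, after introducing $\ell$ vertex subsets by unary lifts, the map sending each vertex to a fixed representative of its connected component becomes definable by a quantifier-free function-extension using only $E$ and these subsets. Iterating this across the SC levels is precisely how \cref{lem:key} builds the tree inside an almost quantifier-free transduction, after which the lift to general $\CCC$ proceeds just as in your outline. Your sketch goes through once this lemma is supplied; as written, the existence of ``auxiliary labels making parent-child pairs quantifier-freely definable'' is asserted but not established.
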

Clearly, \cref{pro:bi-def0} implies that $\CCC$ has structurally
bounded expansion, since it can be obtained as a result of
transduction $\interp I$ to a class $\interp S(\CCC)$ of bounded
expansion.  Thus, the right-to-left implication of \cref{thm:main} is
a corollary of the proposition.  The proof of \cref{pro:bi-def0} is
presented in \cref{sec:bi-def}. We sketch the rough idea below.

\begin{proof}[sketch]
  First, in \cref{lem:key} of \cref{sec:key}, we prove the special
  case where~$\CCC$ is a class of graphs of bounded shrubdepth, and
  for those we prove bi-definability with classes of trees of bounded
  depth. In particular, if $\DDD$ is a class of graphs of bounded
  shrubdepth, then there is a pair of almost quantifier-free
  transductions $\interp T,\interp I_0$ such that $\interp T(\DDD)$ is
  a class of colored trees of bounded depth and such that
  $\interp I_0(\interp T(H))=\set H$ for all $H\in \DDD$.
  \cref{lem:key} is the combinatorial core of this paper.

  To prove \cref{pro:bi-def0}, we lift \cref{lem:key} to the general
  case using covers, as follows. Let $\CCC$ be a class with low
  shrubdepth covers and let $\cal U$ be a $2$-cover of $\CCC$ of
  bounded shrubdepth, and let $N$ be such that $|\cal U_G|\le N$ for
  $G\in \CCC$.  We apply the bounded shrubdepth case to the class
  $\DDD=\CCC[\cal U]$, yielding almost quantifier-free transductions
  $\interp T$ and $\interp I_0$ as above. The transduction $\interp S$
  works as follows: given a graph $G\in\CCC$, introduce $N$ unary
  predicates marking the cover $\cal U_G$ of $G$, and for each
  $U\in \cal U_G$, apply $\interp T$ to the induced subgraph $G[U]$ of
  $G$, yielding a colored tree $\interp T(G[U])$.  Define
  $\interp S(G)$ as the union of the trees $\interp T(G[U])$, for
  $U\in \cal U_G$.  As $\cal U_G$ is a $2$-cover of $G$, $G$ is the
  union of the induced graphs $G[U]$ for $U\in \cal U_G$.  As each
  graph $G[U]$ can be recovered from the tree $\interp T(G[U])$ using
  the inverse transduction $\interp I_0$, it follows that $G$ can be
  recovered from the union $\interp S(G)$. This yields the inverse
  transduction $\interp I$ such that $\interp I(\interp S(G))=\set G$.
  As $\interp S$ is almost quantifier-free by construction, it follows
  from \cref{lem:lsc0} that $\interp S(\CCC)$ is a class with low
  shrubdepth covers. Moreover, each graph in $\interp S(\CCC)$ is a
  union of at most $N$ trees, so it does not contain $K_{N+1,N+1}$ as
  a subgraph. It follows from \cref{lem:bicliques} that the low
  shrubdepth cover of $\interp S(\CCC)$ is in fact a low treedepth
  cover. Hence, $\interp S(\CCC)$ has low treedepth covers, i.e., has
  bounded expansion.  
\end{proof}

\bigskip \cref{thm:qe-lsc}, and the remaining implication in
\cref{thm:main} are consequences of the following result.

\begin{proposition}\label{pro:qe-be0}
  Let $\CCC$ be a class of graphs of bounded expansion and let
  $\interp I$ be a transduction.  Then $\interp I$ is equivalent to an
  almost quantifier-free transduction $\interp J$ on $\CCC$.
\end{proposition}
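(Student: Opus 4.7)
The plan is to reduce the claim to a formula-level quantifier-elimination statement and then prove the latter by induction on quantifier rank, using low treedepth colorings (\cref{thm:beltc}) to localise to bounded treedepth and parent pointers to perform the elimination there. First, I would apply \cref{lem:normal} to write $\interp I \equiv \interp L;\interp C;\interp F;\interp E;\interp X;\interp R$. The prefix $\interp L;\interp C$ is almost quantifier-free by construction, and the class $\CCC' := (\interp L;\interp C)(\CCC)$ remains of bounded expansion: unary lifts only add unary predicates and leave the Gaifman graph unchanged, while copying preserves bounded expansion by \cref{lem:lex}. The suffix $\interp R$ consists of reducts, which are also almost quantifier-free. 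The task therefore reduces to the following formula-level statement: for every bounded-expansion class $\CCC'$ of colored graphs and every first-order formula $\phi(\bar x)$ over its signature, there is an almost quantifier-free transduction $\interp K$, built only from unary lifts and function extensions, such that $\phi(\bar x)$ is equivalent on every $\str A \in \CCC'$ to a quantifier-free formula over $\interp K(\str A)$. Replacing each FO formula parameterizing an operation in $\interp F$, $\interp E$, or $\interp X$ by its quantifier-free counterpart---after prepending the appropriate $\interp K$---would then yield the desired $\interp J$.

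\medskip

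I would prove the formula-level claim by induction on quantifier rank. Boolean connectives commute with the construction, so the only interesting case is $\phi(\bar x) = \exists y\, \psi(\bar x, y)$, where by induction $\psi$ is already quantifier-free over a structure $\interp K_0(\str A)$ obtained from some preprocessing. Setting $p := |\bar x| + 1$, I would use \cref{thm:beltc} to guess, via a single unary lift, a $p$-treedepth coloring with a bounded number $N_p$ of color classes $U_1, \ldots, U_{N_p}$, and rewrite $\exists y\, \psi(\bar x, y)$ as $\bigvee_{i=1}^{N_p} \exists y\, (U_i(y) \wedge \psi(\bar x, y))$. In each disjunct $\bar x$ and $y$ lie in at most $p$ color classes, whose union induces a subgraph of treedepth at most $p$. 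On this bounded-treedepth piece, further unary lifts would guess an elimination forest by marking depth layers $D_0, \ldots, D_p$, and \cref{ex:transduction} would then supply the parent function $\mathsf{parent}$ via a single quantifier-free function extension. All ancestors of a vertex $v$ would then be accessible as the terms $\mathsf{parent}^i(v)$ for $i \le p$, and since every edge of a bounded-treedepth graph connects a vertex to one of its ancestors in its elimination forest, adjacency queries become Boolean combinations of term equalities. A standard quantifier-elimination argument for bounded-depth forests---branching over the depth of $y$ and its least common ancestor with each $x_j$, each such choice being a finite disjunction---would then replace $\exists y\, \psi$ by a quantifier-free formula in the parent terms, closing the induction.

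\medskip

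\textbf{Main obstacle.} The hard part is not the existence of a quantifier elimination of this shape, which is already available in \DKT~\cite{dkt} and Grohe--Kreutzer~\cite{Grohe2011}, but rather recasting the enrichment syntactically as an almost quantifier-free transduction: each guessed predicate must originate from a unary lift, and each auxiliary function must be defined from the lifted data by a \emph{quantifier-free} binary formula. Identifying low treedepth colorings and elimination forests as exactly the objects to be guessed via unary lifts, and using \cref{ex:transduction} to realise parent pointers via quantifier-free function extensions, is what bridges the gap. The bookkeeping---tracking the growth of the coloring parameter $p$, the forest depth, and the free-variable count across the induction---will be routine but delicate, essentially amounting to repackaging the DKT argument in the language of transductions.
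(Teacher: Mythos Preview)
Your overall strategy---normalize via \cref{lem:normal}, absorb lifts and copying using \cref{lem:lex}, reduce to a formula-level quantifier-elimination, induct on the formula, and at each $\exists$-step localize to bounded-treedepth pieces via low treedepth colorings---is exactly the paper's route. Two of your steps, however, do not go through as written.

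\medskip

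\textbf{The cover parameter is too small.} You set $p:=|\bar x|+1$ and take a $p$-treedepth coloring. But by induction $\psi$ is quantifier-free over $\interp K_0(\str A)$, hence already contains function terms such as $\mathsf{parent}^k(x_j)$ coming from earlier elimination steps. To make ``$\str A,\bar a,b\models\psi$ iff $\str A[U],\bar a,b\models\psi$'' hold, the piece $U$ must contain not only $\bar a,b$ but all values of these terms; otherwise the partial functions become undefined on $\str A[U]$ and the equivalence fails. The correct parameter is $c\cdot p$ where $c$ is the number of terms occurring in $\psi$, and one must also use that the functions produced so far are guarded by the Gaifman graph of $G$ so that $G[U]$ having bounded treedepth forces $\interp K_0(G)[U]$ to have bounded treedepth. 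The paper isolates exactly this in \cref{lem:qf-covers} and the notion of a guarded transduction.

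\medskip

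\textbf{The forest step does not eliminate $y$.} Two issues. First, an elimination forest guarantees only that every edge joins comparable nodes; it does \emph{not} make adjacency a Boolean combination of term equalities. You additionally need, at each vertex, unary labels recording which of its boundedly many ancestors are actual neighbors (this is what \cref{lem:dfs} encodes). Second, branching over the depth of $y$ and the depths of $\lcd(y,x_j)$ still leaves the case where $y$ is an ancestor of no $x_j$. Then $y$ lies in a subtree hanging off some ancestor $z=\mathsf{parent}^k(x_{j^\star})$ that contains none of the $x_j$'s, and whether a suitable $y$ exists there is a property of that subtree which cannot be expressed by parent-terms applied to $\bar x$. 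You must introduce further unary predicates on the ancestors encoding ``some child subtree not containing any $x_j$ has a $y$ of the required shape''. This is precisely the content of \cref{lem:qe-trees0}, which the paper proves via threshold tree automata; the run of the automaton on the empty valuation supplies exactly these subtree predicates. Your sketch omits this, and without it the inductive step for $\exists y$ does not close.
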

We note that \cref{pro:qe-be0} is a strengthening of similar
statements provided by Dvo\v{r}\'ak et al.~\cite{DKT2} and of Grohe
and Kreutzer~\cite{Grohe2011}, and could be derived by a careful
analysis of their proofs.  In \cref{sec:qe} we provide a
self-contained proof, which we believe is simpler than the previous
proofs, and is sketched below.
 
\begin{proof}[sketch]
  We use the characterization of bounded expansion classes as those
  which have low treedepth covers.  We first prove \cref{pro:qe-be0}
  for forests of bounded depth. This can be handled by a direct
  (although slightly cumbersome) combinatorial argument, similarly as
  in~\cite{DKT2}. In Appendix~\ref{sec:automata} we present an
  argument using tree automata.
 
  The statement for classes of forests of bounded depth then easily
  lifts to classes of bounded treedepth. Here we use the fact that in
  a graph of bounded treedepth it is possible to encode a depth-first
  search forest of bounded depth, by using unary predicates marking
  the depth of each node in the spanning forest.
 
  We then lift the result from classes of bounded treedepth using
  covers. Specifically, suppose for simplicity that the transduction
  $\interp I$ is a single extension operation, parametrized by a
  formula $\psi$. We then proceed by induction on the structure of the
  formula $\psi$ and show that it can be replaced by a quantifier-free
  formula, at the cost of introducing unary functions defined by an almost
  quantifier-free transduction.

  In the inductive step, the only nontrivial case is the one of
  existential quantification, i.e., of formulas of the
  form $$\psi(\bar y)=\exists x.\phi(x,\bar y),$$ where
  $\phi(x,\bar y)$ may be assumed to be a quantifier-free formula
  involving unary functions, by inductive assumption. We consider a
  $p$-cover $\cal U$ of $\CCC$ where $p$ is a constant such that there
  are at most $p$ different terms occurring in $\phi(x,\bar y)$.
  Since $\CCC$ has bounded expansion, we may assume that the cover
  $\cal U$ has bounded treedepth, and that there is a constant
  $N\in \N$ such that $|\cal U_G|\le N$ for all $G\in \CCC$.  For a
  fixed graph $G\in\CCC$, the existentially quantified variable $x$
  must be in one of the sets $U\in \cal U_G$.  Therefore, the formula
  $\psi(\bar y)$ is equivalent to a disjunction of at most $N$
  formulas $\psi_{i}(\bar y)$, for $i=1,\ldots,N$, where each formula
  $\psi_{i}(\bar y)$ performs existential quantification restricted to
  the $i$th set in $\cal U_G$ (where $\cal U_G$ is ordered
  arbitrarily).  By the special case of the proposition proved for
  classes of bounded treedepth, $\psi_i(\bar y)$ is equivalent to a
  quantifier-free formula on $\CCC[\cal U]$ (the quantifier-free
  formula uses unary functions introduced by almost quantifier-free
  transductions). Reassuming, $\psi$ is equivalent on $G$ to a
  disjunction of quantifier-free formulas involving unary functions
  that are introduced by almost quantifier-free transductions. This
  deals with the inductive step. 
\end{proof}

We finally show how to conclude \cref{thm:main} and \cref{thm:qe-lsc}
from \cref{lem:lsc0}, \cref{pro:bi-def0} and \cref{pro:qe-be0}.

\begin{proof}[of \cref{thm:main}]As observed, the right-to-left
  implication of \cref{thm:main} follows from \cref{pro:bi-def0}.  We
  now show the left-to-right implication.

  Let $\CCC$ be a class of bounded expansion and let $\interp I$ be a
  transduction that outputs colored graphs.  We show that
  $\interp I(\CCC)$ has low shrubdepth covers.

  By \cref{lem:ltd-lsd-covers}, $\CCC$ has low treedepth covers.
  Applying \cref{pro:qe-be0} yields an almost quantifier-free
  transduction $\interp J$ such that
  $\interp I(\CCC) = \interp J(\CCC)$.  As $\CCC$ in particular has
  low shrubdepth covers (cf. \cref{prop:sd_properties}~(\ref{SD:3})),
  we may apply \cref{lem:lsc0} to $\interp J$ and $\CCC$ to deduce
  that $\interp J(\CCC)=\interp I(\CCC)$ has low shrubdepth covers.
\end{proof}
  
\begin{proof}[of \cref{thm:qe-lsc}]
  \cref{pro:bi-def0} allows to reduce the theorem to the case of
  classes of bounded expansion, as almost quantifier-free
  transductions are closed under composition. The case of bounded
  expansion classes is handled by \cref{pro:qe-be0}.
\end{proof}

It remains to provide the details of the proofs of \cref{lem:lsc0},
\cref{pro:bi-def0} and \cref{pro:qe-be0}.  This is done in
\cref{sec:transductions-and-covers}, \cref{sec:bi-def} and
\cref{sec:qe}, respectively.  After that, in \cref{sec:alg} we
conclude with a preliminary algorithmic result concerning the
model-checking problem for first-order logic on classes with
structurally bounded expansion.

\section{Proof of \cref{lem:lsc0}\quad
\normalfont\itshape\normalsize(almost quantifier-free transductions commute with covers)}\label{sec:transductions-and-covers}

In this section we prove \cref{lem:lsc0}, which we restate for
convenience.

\setcounter{tmp}{\thetheorem}
\setcounterref{theorem}{lem:lsc0}
\addtocounter{theorem}{-1}
\begin{lemma}\label{lem:lsc}
  If a class of graphs $\CCC$ has low shrubdepth covers and
  $\interp I$ is an almost quantifier-free transduction that outputs
  colored graphs, then $\interp I(\CCC)$ also has low shrubdepth
  covers.
\end{lemma}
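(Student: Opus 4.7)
My plan is to exploit the locality inherent in almost quantifier-free transductions: since quantifier-free formulas cannot reach beyond the variables they mention, each atom in the output depends only on a bounded-size fragment of the input. When the input class consists of plain graphs (no function symbols), every term is a mere variable, so any $k$-ary atom in $\interp I(G)$ depends solely on the $k$ input vertices that feed it. I therefore plan to directly push a cover of $G$ through $\interp I$ by pulling back along the natural projection $V(H) \to V(G)$.

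First I would normalize $\interp I$ using \cref{lem:normal-qf} followed by \cref{lem:normal}, writing it as $\interp L; \interp C; \interp F; \interp E; \interp X; \interp R$. Since the output is a colored graph and hence carries no functions, $\interp F$ can be discarded. Because the inputs in $\CCC$ are graphs, and $\interp L, \interp C$ introduce only unary predicates and a binary copy relation, the quantifier-free formulas parameterizing $\interp E$ and $\interp X$ contain only trivial terms (variables).

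Now fix $p \in \N$ and let $\mathcal{U}$ be a finite $p$-cover of $\CCC$ such that $\CCC[\mathcal{U}]$ has bounded shrubdepth, guaranteed by hypothesis. For each $G \in \CCC$ and $H \in \interp I(G)$ there is a natural projection $\pi \colon V(H) \to V(G)$ obtained by forgetting whether a vertex is an original or a copy introduced by $\interp C$. Set $V_U = \pi^{-1}(U)$ for each $U \in \mathcal{U}_G$ and $\mathcal{V}_H = \{V_U \colon U \in \mathcal{U}_G\}$. Then $|\mathcal{V}_H| \le |\mathcal{U}_G|$ is uniformly bounded, and any $p$ elements of $V(H)$ project to at most $p$ elements of $V(G)$, which lie in some $U$ by the cover property, so the original $p$ elements are contained in $V_U$; hence $\mathcal{V}_H$ is a $p$-cover of $H$.

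The crux is showing that $\interp I(\CCC)[\mathcal{V}]$ has bounded shrubdepth. Let $\interp I'$ arise from $\interp I$ by replacing each set-valued selector $\sigma$ in $\interp L$ with the power-set selector, so that any subset is allowed as a nondeterministic lift choice; $\interp I'$ remains an almost quantifier-free transduction outputting colored graphs. I claim that for fixed $G, H, U$ one has $H[V_U] \in \interp I'(G[U])$, realized by picking the intersections of $H$'s original lift sets with $U$ as the lift choices for $\interp I'(G[U])$. By locality, the quantifier-free formulas of $\interp E$ and $\interp X$, when evaluated on elements of $V_U$, see the same atoms in $\interp L(\interp C(G))$ as they do in $\interp L'(\interp C(G[U]))$: colors and lift predicates agree by construction, copy relations and predicates are determined pointwise on $V_U$, and edge atoms involve only pairs inside $U$ or inside its copy $U'$, inherited correctly from $G[U]$. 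Since $\CCC[\mathcal{U}]$ has bounded shrubdepth, $\interp I'(\CCC[\mathcal{U}])$ does too by \cref{prop:sd_properties}(\ref{SD:5}), finishing the argument.

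The main obstacle is the subtle mismatch in the lift operations: a set chosen by a unary lift in $\interp L(G)$, when intersected with $U$, need not lie in $\sigma(G[U])$, so one cannot directly claim $H[V_U] \in \interp I(G[U])$. Passing to the weaker transduction $\interp I'$ sidesteps this cleanly without affecting the shrubdepth argument, since bounded shrubdepth is preserved under any transduction.
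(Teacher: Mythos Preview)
Your argument has a genuine gap at the normalization step. You write that after applying \cref{lem:normal-qf} and then \cref{lem:normal} you obtain $\interp L;\interp C;\interp F;\interp E;\interp X;\interp R$ with \emph{quantifier-free} formulas in $\interp E$ and $\interp X$, and then discard $\interp F$ because the output carries no functions. But \cref{lem:normal} does not preserve the quantifier-free property: to make the formulas in $\interp E$ and $\interp X$ independent of the functions introduced by $\interp F$, its proof replaces each use of such a function $f$ by an existential quantification over the witness of the defining formula $\varphi_f(x,y)$. An almost quantifier-free transduction may very well contain function-extension operations whose results are consumed by later quantifier-free extensions (this is the whole point of the ``almost''), and once those are unfolded the formulas in $\interp E,\interp X$ are no longer quantifier-free. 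Your locality claim --- that every atom of the output depends only on the input vertices literally appearing as variables --- then collapses, and with it the assertion that a plain $p$-cover of $\CCC$ suffices and that $H[V_U]\in\interp I'(G[U])$.

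Concretely, suppose $\interp I$ first introduces $f$ via a quantifier-free $\varphi_f(x,y)$ and then defines an output colour by $P(f(x))$. For $x\in V_U$ the value $f(x)$ may lie outside $U$, or $f$ may be defined on $G$ but undefined on $G[U]$; either way the induced substructure $H[V_U]$ need not arise from $G[U]$. The paper handles exactly this via \cref{lem:dep}: it keeps the deterministic almost quantifier-free part intact and assigns to each output element a support set $S_v\subseteq V(\str A)$ of size at most a constant $c$ (tracking dependencies through the function terms), then uses a $(c\cdot p)$-cover rather than a $p$-cover and takes $W_U=\{v:S_v\subseteq U\}$. Your power-set-selector trick for the lifts is neat but orthogonal; the paper absorbs the lifts into the class $\DDD=\interp L(\CCC)$ instead. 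What you are missing is precisely the dependency-tracking that accounts for function extensions.
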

 \setcounter{theorem}{\thetmp}

 We start with formulating the following lemma which states that
 almost quantifier-free transductions are, in a certain sense, local.

\begin{lemma}\label{lem:dep}
  For every deterministic almost quantifier-free transduction
  $\interp I$ there is a constant $c\in \N$ such that the following
  holds.  For every structure $\str A$ and every element~$v$ of
  $\interp I(\str A)$ there is a set $S_v\subset V(\str A)$ of size at
  most $c$ such that for any sets $U,W$ with
  $W\subset V(\interp I(\str A))$ and $U\subset V(\str A)$, if
  $U\supseteq\bigcup_{v\in W}S_v$, then
$$\interp I(\str A)[W]=\interp I(\str A[U])[W].$$
\end{lemma}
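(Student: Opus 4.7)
The plan is to prove \cref{lem:dep} by induction on the length of the sequence of atomic operations decomposing $\interp I$. In the base case $\interp I$ is the identity, where $c = 1$ and $S_v = \{v\}$ suffice. For the inductive step, write $\interp I = \interp I'; \alpha$ with $\alpha$ a single atomic operation, and apply the induction hypothesis to $\interp I'$ to obtain a constant $c'$ and supports $S'_v \subseteq V(\str A)$ of size at most $c'$ for each $v \in V(\str B)$, where $\str B := \interp I'(\str A)$. For each element $w$ of $V(\interp I(\str A)) = V(\alpha(\str B))$, I will first identify a bounded set $T_w \subseteq V(\str B)$ that locally determines the effect of $\alpha$ around $w$, and then set $S_w := \bigcup_{v \in T_w} S'_v$. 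Boundedness of $|S_w|$ follows once $|T_w|$ is bounded uniformly in $w$, since $|S'_v| \le c'$.

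The main tool driving the construction of $T_w$ is the locality of quantifier-free formulas over signatures with unary functions: for such a formula $\phi(\bar x)$, the truth value at a tuple $\bar v$ depends only on the elements reachable from $\bar v$ by iterating the finitely many unary function symbols appearing in $\phi$, up to the maximum depth of its terms. This is a set of uniformly bounded size. For the copying, reduct, restriction, and extension operations, letting $T_w$ be the term closure of $w$ in $\str B$ under the terms appearing in the parameterizing formula of $\alpha$ does the job: atomic facts over tuples drawn from $W$ then agree between $\alpha(\str B)$ and $\alpha(\interp I'(\str A[U]))$ whenever the relevant term neighborhoods are preserved.

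The delicate case is function extension by a quantifier-free $\varphi(x,y)$, because the definition ``$f(w) = y$ iff $y$ is the unique witness of $\varphi(w,\cdot)$'' conceals a global uniqueness check. To handle it, I enlarge $T_w$ by a bounded set of \emph{canonical witnesses} for $\varphi(w,\cdot)$ in $\str B$: always include the term closure of $w$; if there is a unique witness $y^\ast$, additionally include $y^\ast$ and its term closure; if there are at least two witnesses, include any two of them together with their term closures (thereby locking in the multiplicity under any restriction containing $T_w$); if there are none, add nothing more. Verifying correctness amounts to a case analysis distinguishing the three possible statuses of $f(w)$ in the output (defined with value in $W$, defined with value outside $W$, or undefined), combined with the observation that the final substructure operation on $W$ converts ``$f(w) = w'$ with $w' \notin W$'' into ``undefined'', which absorbs many otherwise worrying mismatches.

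I expect this function-extension analysis to be the main obstacle. Ensuring that no spurious witness can arise in $\interp I'(\str A[U])$ requires careful accounting of how atoms involving unary-function terms can change truth value when functions become partial upon restriction to an induced substructure; one must argue that including the canonical witnesses together with their term closures is enough to prevent negated atoms from silently flipping. The other atomic operations, and the final compositional bookkeeping $S_w = \bigcup_{v \in T_w} S'_v$, are then essentially routine.
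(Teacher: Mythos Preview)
Your proposal is correct and follows essentially the same approach as the paper: the paper formalizes your sets $T_w$ via an explicit notion of ``$\interp I$-support'' defined inductively over the composition of atomic operations, and handles the function-extension case exactly as you do (term closure of $w$, plus the unique witness or any two witnesses together with their term closures). The paper's inductive definition of support for a composition $\interp I_1;\interp I_2$ is precisely your rule $S_w = \bigcup_{v\in T_w} S'_v$.
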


\newcommand{\depend}{\mathsf{cl}} 

In order to prove the lemma, we define the following notions of
\emph{dependency} and \emph{support}.

\begin{definition}
  Suppose that $\tau(v)=(f_p\circ\cdots \circ f_1)(v)$ is a term.  For
  a structure~$\str A$ carrying partial functions $f_1,\ldots,f_p$, we
  say that an element $v\in V(\str A)$ {\em{$\tau$-depends}} with
  respect to $\tau$ on itself and all elements of the form
  $(f_p\circ\cdots \circ f_i)(v)$ for $i\in [p]$, whenever defined.
  For a quantifier-free formula $\varphi(x_1,\ldots,x_k)$, an element
  $v\in V(\str A)$ {\em{$\varphi$-depends}} on all elements on which
  $v$ $\tau$-depends, for any term $\tau$ appearing in~$\varphi$.  For
  an element $v$, the set of elements on which $v$ $\varphi$-depends
  in $\str A$ will be denoted by $\depend^{\str A}_\varphi(v)$; note
  that the size of this set is always bounded by a constant depending
  only on $\varphi$.  Observe also that given elements
  $v_1,\ldots,v_k$, to check whether $\varphi(v_1,\ldots,v_k)$ holds
  in $\str A$ it suffices to check whether it holds in the
  substructure of~$\str A$ induced by all elements on which
  $v_1,\ldots,v_k$ $\varphi$-depend.
\end{definition}

With the auxiliary notion of dependency defined we can come to the
definition of \emph{support. }

\begin{definition}
  Suppose $\interp I$ is a deterministic almost quantifier-free
  transduction, and let $\str A$ be an input structure.  For an
  element $v\in V(\interp I(\str A))$ and a subset
  $S\subseteq V(\str A)$, we now define what it means that $v$ {\em{is
      $\interp I$-supported}} by $S$.  We first define this for atomic
  operations (note that unary lifts are excluded since $\interp I$ is
  assumed to be deterministic):
  \begin{itemize}
  \item If $\interp I$ is a reduct operation or a copy operation, then
    $v$ is $\interp I$-supported by $S$ if and only if $v\in S$.

  \item If $\interp I$ is a restriction or an extension operation, say
    parameterized by a formula $\varphi$, then $v$ is
    $\interp I$-supported by $S$ if and only if
    $\depend^{\str A}_\varphi(v)\subseteq S$.


  \item Suppose $\interp I$ is a function extension operation, say
    introducing a partial function~$f$ using a binary formula
    $\varphi(x,y)$.  Then $v$ is $\interp I$-supported by $S$ if and
    only if $\depend^{\str A}_\varphi(v)\subseteq S$ and the following
    holds:
    \begin{itemize}
    \item if there exists exactly one $w\in V(\str A)$ for which
      $\varphi(v,w)$ holds, then
      $\depend^{\str A}_\varphi(w)\subseteq S$.
    \item if there are at least two elements $w\in V(\str A)$ for
      which $\varphi(v,w)$ holds, then
      $\depend^{\str A}_\varphi(w)\subseteq S$ for at least two
      distinct such elements $w$.
    \end{itemize}
  \end{itemize}

  Finally, for non-atomic deterministic almost quantifier-free
  transductions the notion of $\interp I$-supporting is defined by
  induction on the structure of the transduction.  Suppose~$\interp I$
  is the composition $\interp I_1; \interp I_2$ of two transductions.
  Then $v\in V(\interp I(\str A))$ is $\interp I$-supported by
  $S\subseteq V(\str A)$ if there exists a subset
  $T\subseteq V(\interp I_1(\str A))$ and, for each $w\in T$, a subset
  $S_w\subseteq S$ such that $v$ is $\interp I_2$-supported by $T$ and
  each $w\in T$ is $\interp I_1$-supported by $S_w$.
\end{definition}

The notion of supporting is trivially closed under taking supersets:
if $v$ is $\interp I$-supported by $S$, then $v$ is also
$\interp I$-supported by any superset of $S$.

\begin{proof}[of \cref{lem:dep}]
  By induction on the definition of an almost quantifier-free
  transduction $\interp I$ it is easy to see that for every
  $v\in V(\interp I(\str A))$ there is a set $S_v\subset V(\str A)$
  such that $v$ is $\interp I$-supported by $S_v$ and $|S_v|$ is
  bounded by a constant, possibly depending on $\interp I$.
	
  By induction we also observe that if $W\subset V(\interp I(\str A))$
  and $U\subset V(\str A)$ are such that every $v\in W$ is
  $\interp I$-supported by $U$ then
  $$\interp I(\str A)[W]=\interp I(\str A[U])[W].$$		
  This proves the lemma.
\end{proof}

\medskip We can now prove \cref{lem:lsc}.
\begin{proof}[of \cref{lem:lsc}]
  Let $\CCC$ be a class with low shrubdepth covers and let $\interp I$
  be an almost quantifier-free transduction that outputs colored
  graphs. We show that $\interp I(\CCC)$ has low shrubdepth covers.
  By normalizing $\interp I$ as described in \cref{lem:normal-qf}, we
  may assume that $\interp I$ is of the form $\interp L;\interp J$,
  where $\interp L$ is a sequence of unary lifts and $\interp J$ is
  deterministic almost quantifier-free.  As~$\CCC$ has low shrubdepth
  covers, the class $\DDD=\interp L(\CCC)$ also has low shrubdepth
  covers (this is implied by \cref{prop:sd_properties}(\ref{SD:5})).
  Moreover, $\interp I(\CCC)=\interp J(\DDD)$. Therefore, it suffices
  to focus on the deterministic almost quantifier-free
  transduction~$\interp J$ applied to the class~$\DDD$.  Note that
  $\DDD$ is a class of colored graphs, i.e., graphs with unary
  predicates on their vertices.

  Let $c$ be the constant provided by \cref{lem:dep} for the
  transduction $\interp J$.  We need to find, for every $p\in \N$, a
  finite $p$-cover of $\interp J(\DDD)$ of bounded shrubdepth, so let
  us fix~$p$.  Let~$\cal U$ be a finite $(c\cdot p)$-cover of $\DDD$
  of bounded shrubdepth.  For a graph $G\in \DDD$ and $U\in \cal U_G$,
  let $W_U\subseteq V(\interp J(G))$ be the set of those elements $v$
  of $\interp J(G)$ such that $S_v\subset U$, where~$S_v$ is as
  obtained from \cref{lem:dep} applied to the deterministic almost
  quantifier-free transduction $\interp J$.

  Define a cover $\cal W=(\cal W_{\interp J(G)})_{G\in \DDD}$ of
  $\interp J(\DDD)$ by letting
  $$\cal W_{\interp J(G)} = \{ W_U\colon U\in \cal U_G\}
  \qquad \textrm{for every graph $G\in \DDD$.}$$
  Clearly $|\cal W_{\interp J(G)}|\leq |\cal U_G|$, so $\cal W$ is
  finite as well. We need to verify that $\cal W$ is a $p$-cover and
  that it has bounded shrubdepth.

  To see that $\cal W$ is a $p$-cover, take any $p$ elements
  $w_1,\ldots,w_p$ of $\interp J(G)$.  Let
  $S=\bigcup_{i=1}^p S_{w_i}$. Then $|S|\leq c\cdot p$, hence there
  exists $U\in \cal U_G$ with $S\subseteq U$.  We conclude that
  $\set{w_1,\ldots,w_p}\subset W_U \in \cal W_G$.

  To see that $\cal W$ is a bounded shrubdepth cover, observe that by
  assumption $\DDD[\cal U]$ has bounded shrubdepth, hence by
  \cref{prop:sd_properties}(\ref{SD:5}) we find that
  $\interp J(\DDD[\cal U])$ also has bounded shrubdepth.  By
  \cref{lem:dep}, for each $G\in \DDD$ and
  $W_U\in \cal W_{\interp J(G)}$, the induced substructure
  $\interp J(G)[W_U]$ is equal to $\interp J(G[U])[W_U]$. Now it
  suffices to note that $\interp J(G[U])\in \interp J(\DDD[\cal U])$,
  hence $\interp J(G)[W_U]$ belongs to the hereditary closure of
  $\interp J(\DDD[\cal U])$, which also has bounded shrubdepth by
  \cref{prop:sd_properties}(\ref{SD:1}).
\end{proof}

\section{Proof of \cref{pro:bi-def0}\quad \normalfont\itshape\normalsize(bi-definability of classes with low shrubdepth covers and classes of bounded expansion)}\label{sec:bi-def}

In this section we prove \cref{pro:bi-def0}, which we repeat for
convenience.

\setcounter{tmp}{\thetheorem}
\setcounterref{theorem}{pro:bi-def0}
\addtocounter{theorem}{-1}
 \begin{proposition}\label{pro:bi-def}
   Suppose $\CCC$ is a class of graphs with low shrubdepth
   covers. Then there is a pair of transductions $\interp S$ and
   $\interp I$, where $\interp S$ is almost quantifier-free and
   $\interp I$ is deterministic almost quantifier-free, such that
   $\interp S(\CCC)$ is a class of colored graphs of bounded expansion
   and $\interp I(\interp S(G))=\{G\}$ for each $G\in \CCC$.
 \end{proposition}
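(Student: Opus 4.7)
The plan is to reduce to the bounded-shrubdepth case handled by Lemma~\ref{lem:key} (proved separately), which asserts that for every class $\DDD$ of bounded shrubdepth there exist an almost quantifier-free transduction $\interp T$ and a deterministic almost quantifier-free transduction $\interp I_0$ such that $\interp T(\DDD)$ consists of colored trees of bounded depth and $\interp I_0(\interp T(H)) = \{H\}$ for every $H \in \DDD$. First I would fix a finite $2$-cover $\cal U$ of $\CCC$ of bounded shrubdepth with uniform bound $|\cal U_G|\le N$, and apply Lemma~\ref{lem:key} to the class $\DDD = \CCC[\cal U]$ to obtain such $\interp T$ and $\interp I_0$.

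Next I would construct $\interp S$ as follows: use $N$ unary lifts to introduce predicates $P_1,\ldots,P_N$ marking the sets of $\cal U_G$, then for each $i \in [N]$ simulate the transduction $\interp T$ ``guarded'' by $P_i$. Because $\interp T$ is almost quantifier-free, each of its atomic operations (unary lifts, copying, function extensions, extensions, the restriction, and reducts) can be localized to the $P_i$-marked region: one inserts $P_i$ into each parameterizing quantifier-free formula, tags freshly introduced relation and function symbols with the index $i$, and performs a preparatory round of copying so that the $N$ parallel runs produce disjoint output universes equipped with unary predicates that identify to which of the $N$ trees each vertex belongs. The result $\interp S(G)$ is then essentially the disjoint union of the colored trees $\interp T(G[U_i])$ for $i\in[N]$. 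The inverse $\interp I$ applies $\interp I_0$ on each of the $N$ trees to recover the induced subgraphs $G[U_i]$ and puts an edge $uv$ in $G$ whenever some recovered $G[U_i]$ contains it; since $\cal U$ is a $2$-cover, every pair $\{u,v\}\subseteq V(G)$ lies in some $U_i$, so $\interp I(\interp S(G))=\{G\}$, and $\interp I$ is deterministic because $\interp I_0$ is.

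It remains to verify that $\interp S(\CCC)$ has bounded expansion. Because $\interp S$ is almost quantifier-free by construction and $\CCC$ has low shrubdepth covers, Lemma~\ref{lem:lsc0} yields that $\interp S(\CCC)$ also has low shrubdepth covers. On the other hand, every graph in $\interp S(\CCC)$ is (up to colors) a forest, being a disjoint union of at most $N$ trees, and hence it and all of its induced subgraphs exclude $K_{2,2}$ as a subgraph. Applied to the induced subgraphs appearing in the shrubdepth cover of $\interp S(\CCC)$, Lemma~\ref{lem:bicliques} then upgrades the bounded-shrubdepth cover to a bounded-treedepth one, so $\interp S(\CCC)$ has low treedepth covers, which by Lemma~\ref{lem:ltd-lsd-covers} is equivalent to bounded expansion.

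The main obstacle is the parallelized simulation of $\interp T$ that underlies the construction of $\interp S$. Conceptually it is clear that an almost quantifier-free transduction can be applied in $N$ copies guarded by disjoint predicates, but making this rigorous requires using the normal form of Lemma~\ref{lem:normal-qf}, a consistent renaming per index $i$ of all relation and function symbols introduced by $\interp T$, and enough preparatory copying to ensure that the $N$ output universes are truly disjoint and can be independently addressed by the inverse transduction. Once this bookkeeping is set up, the rest of the argument is a clean lifting of the combinatorial content of Lemma~\ref{lem:key} through the $2$-cover.
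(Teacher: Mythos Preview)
Your overall strategy matches the paper's: fix a finite $2$-cover $\cal U$ of bounded shrubdepth, apply Lemma~\ref{lem:key} to $\CCC[\cal U]$, build $\interp S$ by running $\interp T$ on each piece, and then argue bounded expansion via Lemma~\ref{lem:lsc0}, Lemma~\ref{lem:bicliques}, and Lemma~\ref{lem:ltd-lsd-covers}. However, there is a genuine tension in your construction that breaks one of the two conclusions.

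You insist on making the $N$ output trees \emph{disjoint} (via preparatory copying) so that $\interp S(G)$ is literally a forest and hence $K_{2,2}$-free. But if the trees are truly disjoint, a vertex $v\in V(G)$ that lies in several $U_i$'s acquires several unrelated copies in $\interp S(G)$, and nothing in the output structure tells $\interp I$ which leaves in different trees represent the same original vertex. Your description of $\interp I$ (``puts an edge $uv$ in $G$ whenever some recovered $G[U_i]$ contains it'') tacitly assumes the recovered pieces $G[U_i]$ live on a \emph{common} vertex set, which they do not in your setup. If instead you retain the copy edges to enable this identification, the output is no longer a bare forest and your $K_{2,2}$-exclusion claim is false. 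Either way one half of the argument fails.

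The paper resolves this by \emph{not} disjointifying: the trees $\interp T(G[U])$ are taken as a union that is disjoint on internal nodes but \emph{shares the leaves}, which are exactly the vertices of $G$. The resulting colored graph is a union of $N$ trees on overlapping leaf sets; it is $N$-degenerate and therefore excludes $K_{N+1,N+1}$ (not $K_{2,2}$), which is all Lemma~\ref{lem:bicliques} needs. With shared leaves, the inverse $\interp I$ can reconstruct each tree's parent function from depth predicates, apply $\interp B$ to each tree, and take the union of the recovered edge sets directly on $V(G)$; the $2$-cover property then guarantees every edge of $G$ is recovered. Adjusting your construction to glue along leaves rather than copy them, and replacing $K_{2,2}$ by $K_{N+1,N+1}$ in the biclique step, closes the gap and makes your argument coincide with the paper's.
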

 \setcounter{theorem}{\thetmp}

 Clearly, \cref{pro:bi-def} implies that $\CCC$ has structurally
 bounded expansion, since it can be obtained as a result of
 transduction $\interp I$ to a class $\interp S(\CCC)$ of bounded
 expansion.  Thus, the right-to-left implication of \cref{thm:main} is
 a corollary of the proposition.

 \medskip The idea of the proof of \cref{pro:bi-def} is as follows.
 We first prove in \cref{lem:connectome} of \cref{sec:def-ccs} that
 connected components in graphs of bounded shrubdepth are definable by
 almost quantifier-free transductions.  We use \cref{lem:connectome}
 to first prove \cref{pro:bi-def} for the special case where $\CCC$ is
 a class of graphs of bounded shrubdepth, and for those we prove
 bi-definability with classes of trees of bounded depth. This is done
 in \cref{lem:key} of \cref{sec:key}.  Then, we conclude the general
 case in \cref{sec:key-general}, by lifting \cref{lem:key} using
 covers.


%
%

\subsection{Defining connected components in graphs of bounded shrubdepth}\label{sec:def-ccs}
The following lemma is the combinatorial core of our proof of \cref{pro:bi-def}.

\begin{lemma}\label{lem:connectome}
  Let $\CCC$ be a class of graphs of bounded shrubdepth.  There is an
  almost quantifier-free transduction $\interp F$ such that for a
  given $G\in \CCC$, every output of $\interp F$ on $G$ is equal to
  $G$ enriched by a function $g\from V(G)\to V(G)$ such that
  $g(v)=g(w)$ if and only if $v$ and $w$ are in the same connected
  component of $G$.
\end{lemma}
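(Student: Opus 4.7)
The plan is to proceed by induction on the SC-depth $d$ bounding $\CCC$, whose existence follows from \cref{prop:sd_properties}(\ref{SD:2}). It is convenient to strengthen the induction hypothesis so that it applies to graphs $H$ equipped with a \emph{per-component} SC-decomposition: unary predicates $W_0,\ldots,W_d$ on $V(H)$ such that for every connected component $C$ of $H$, the tuple $(W_0\cap V(C),\ldots,W_d\cap V(C))$ is an SC-decomposition of $C$ of depth at most $d$. The main lemma then follows by prepending a unary lift whose family $\sigma(G)$ enumerates the valid such decompositions of $G$ (nonempty by hypothesis). The base case $d=0$ is trivial: every component is a single vertex, so $g$ is the identity.

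The combinatorial heart of the inductive step is the following observation: whenever $W_0\cap V(C)\neq\emptyset$ for some component $C$ of $H$, every connected component of the graph obtained from $H[V(C)]$ by flipping adjacency within $W_0\cap V(C)$ is \emph{active}, i.e., meets $W_0$. Indeed, a putative passive sub-component $S\subsetneq V(C)$ disjoint from $W_0$ would have no $H$-edges to $V(C)\setminus S$ (the flip only affects pairs inside $W_0$), contradicting connectedness of $C$. Consequently, sub-components belonging to the same $H$-component are linked through $W_0$-vertices, and identifying $H$-components reduces to grouping together the $W_0$-vertices that share an $H$-component.

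Guided by this, I would apply a quantifier-free extension to materialize the flipped auxiliary graph, and recurse using the strengthened hypothesis on an instance where sub-components play the role of components---each of SC-depth at most $d-1$ with the inherited decomposition $(W_1,\ldots,W_d)$---to obtain a sub-component function $g_1$. An additional unary lift then guesses a transversal $R$ of the $H$-component partition (one representative per component of $H$), and a function extension with a quantifier-free formula combining $g_1$, $W_0$, and $R$ produces the final function $g$, sending every vertex to the $R$-representative of its $H$-component. The active sub-component observation is precisely what turns the test ``the sub-component of $v$ lies in the same $H$-component as $r$'' into a quantifier-free check.

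The main obstacle is the tension between the global flip $H\,\triangle\,K[W_0]$, which can be materialized quantifier-free but introduces spurious cross-component edges merging different $H$-components, and the per-component flip one actually needs, which requires knowing the partition we are computing. The remedy is to treat the nondeterministically guessed transversal $R$ as part of the decomposition data and to use $R$ inside all subsequent formulas to simulate component-aware operations; guesses that are not jointly consistent with a genuine per-component decomposition together with a valid transversal are filtered out by a restriction operation, so that every surviving output of the transduction carries a correct function~$g$.
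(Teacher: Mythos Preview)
Your ``active sub-component'' observation is correct, but the induction has a genuine circularity that the proposed remedy does not resolve. To recurse you must hand the inductive hypothesis a graph whose connected components are precisely the sub-components, equipped with the inherited decomposition $(W_1,\dots,W_d)$. That graph is the per-component flip, whose edge relation is $E(u,v)\oplus\bigl(W_0(u)\wedge W_0(v)\wedge\text{``$u,v$ lie in the same $H$-component''}\bigr)$, and the last conjunct is exactly the relation you are computing. The only flip you can materialize quantifier-free is the global one $H\triangle K[W_0]$, but there all $H$-components touching $W_0$ collapse into a single component, which need not have SC-depth at most $d-1$ and does not carry $(W_1,\dots,W_d)$ as a per-component decomposition; so the hypothesis does not apply and no $g_1$ is produced. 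Guessing a transversal $R$ does not break the cycle: using $R$ to make the flip component-aware would already require knowing, for each vertex, which $r\in R$ represents it---again $g$. And a restriction operation narrows the \emph{universe} of a single output structure; it does not discard an entire output, so it is not a mechanism for filtering inconsistent guesses (restricting the lift $\sigma$ to consistent tuples is legitimate but leaves the circularity above untouched).

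The paper avoids this by not inducting on SC-depth for this lemma. It introduces \emph{guidance systems}: a function is $\ell$-guidable if each vertex is sent to its unique neighbour in one of $\ell$ fixed vertex sets. The key step (\cref{lem:go-right}, proved by induction on the height of a connection model) shows that whenever every vertex of $A$ has a neighbour in $B$, some choice function $A\to B$ is boundedly guidable. Applied across BFS layers from one root per component, this yields a guidable parent function $f$ of a bounded-depth spanning forest (\cref{lem:bfs}), and then $g=f^r$ is produced by an almost quantifier-free transduction via \cref{lem:local}. No intermediate flipped graph is ever needed, so the circularity simply does not arise. (SC-depth induction does appear later, in \cref{lem:key}, but with the present lemma already available as a black box.)
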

The rest of \cref{sec:def-ccs} is devoted to the proof of \cref{lem:connectome}.
\medskip

\paragraph{Guidance systems.}
We first introduce the notions of {\em{guidance systems}} and of
functions {\em{guided}} or {\em{guidable}} by them.  This is a
combinatorial abstraction for functions computable by almost
quantifier-free transductions.

Let $G$ be a graph. A {\em{guidance system}} in $G$ is any family
$\Uu$ of subsets of the vertex set of $G$.  The {\em{size}} of a
guidance system $\Uu$ is the cardinality of the family $\Uu$.  We say
that a partial function $f\colon V(G)\partto V(G)$ is {\em{guided}} by
the guidance system $\Uu$ if for every $x\in V(G)$ for which $f(x)$ is
defined and different than $x$, there is some $U\in\Uu$ such
that~$f(x)$ is the unique neighbor of $v$ in $U$.  Finally, a partial
function $f\from V(G)\partto V(G)$ is \emph{$\ell$-guidable}, where
$\ell\in\N$, if there is a guidance system $\Uu$ of size at most
$\ell$ in $G$ that such that $f$ is guided by $\Uu$.

Observe that an $\ell$-guidable partial function maps each vertex $v$
from its domain to a vertex in the same connected component as $v$.
The following lemmas will be useful for operating on guidable
functions.

\begin{lemma}[$\star$]\label{lem:glue-directed}
  Let $G$ be a graph and suppose $g\from V(G)\partto V(G)$ is a
  partial function such that the restriction $g|_C$ of $g$ to each
  connected component $C$ of $G$ is $\ell$-guidable.  Then $g$ is
  $\ell$-guidable.
\end{lemma}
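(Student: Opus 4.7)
The proof should be a direct gluing argument that exploits the fact that different connected components of $G$ have no edges between them. The plan is to combine the per-component guidance systems into a single one by taking componentwise unions of their members under a common enumeration.

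Concretely, for each connected component $C$ of $G$, fix a guidance system $\Uu_C$ of size at most $\ell$ that guides $g|_C$. I would enumerate the members of each $\Uu_C$ as $U_1^C,\ldots,U_\ell^C$, padding with the empty set whenever $|\Uu_C|<\ell$; this padding is harmless since the empty set can never provide a neighbor to anyone. Then, for each index $i\in[\ell]$, define
$$U_i \;=\; \bigcup_{C} \bigl(U_i^C \cap C\bigr),$$
where the union ranges over all connected components $C$ of $G$, and set $\Uu=\{U_1,\ldots,U_\ell\}$. This is a guidance system of size at most $\ell$, and the claim is that it guides $g$.

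The verification rests on a single observation: if $x$ lies in the connected component $C$, then every neighbor of $x$ lies in $C$, so for any set $S\subseteq V(G)$ the neighbors of $x$ in $S$ coincide with the neighbors of $x$ in $S\cap C$. Since the pieces $U_i^{C'}\cap C'$ for $C'\neq C$ are disjoint from $C$, we get $U_i\cap C = U_i^C\cap C$, and hence the unique neighbor of $x$ in $U_i$ (if any) coincides with the unique neighbor of $x$ in $U_i^C$. Consequently, if $g|_C$ is guided at $x$ by $U_i^C\in\Uu_C$, then $g$ is guided at $x$ by $U_i\in\Uu$. Applying this for each $x$ in each component yields that $\Uu$ guides $g$, establishing $\ell$-guidability.

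There is no real obstacle here; this is essentially bookkeeping once one notices the right combinatorial fact. The only mild subtlety is the ``$\cap C$'' restriction in the definition of $U_i$, which is needed so that an element of $U_i^{C'}$ that happens to belong to a different component $C\neq C'$ does not accidentally show up as a spurious neighbor of some vertex of $C$ and destroy uniqueness.
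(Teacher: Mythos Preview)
Your proposal is correct and follows essentially the same approach as the paper's own proof: enumerate the per-component guidance systems, restrict each set to its own component, and take indexwise unions. The paper phrases the restriction step as ``we may assume $U_i^C\subseteq V(C)$'' while you write the explicit intersection $U_i^C\cap C$, but the content is identical.
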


\begin{lemma}[$\star$]\label{lem:glue-directable}
  Let $G$ be a graph and let $g_1,\ldots,g_s\from V(G)\partto V(G)$ be
  partial functions, where $g_i$ is $\ell$-guidable for each
  $i\in [s]$.  If $g\from V(G)\partto V(G)$ is a partial function such
  that for every $x\in V(G)$ there is some $i\in[s]$ such that
  $g(x)=g_i(x)$, then $g$ is $(\ell\cdot s)$-guidable.
\end{lemma}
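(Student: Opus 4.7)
The plan is to simply take the union of the guidance systems witnessing $\ell$-guidability of each $g_i$. For each $i\in[s]$, let $\Uu_i$ be a guidance system in $G$ of size at most $\ell$ that guides $g_i$, and set
\[
\Uu \;=\; \Uu_1\cup\Uu_2\cup\cdots\cup\Uu_s.
\]
Then $|\Uu|\le \ell\cdot s$, so it remains to verify that $\Uu$ guides $g$. Take any $x\in V(G)$ for which $g(x)$ is defined and $g(x)\ne x$. By assumption there exists some $i\in[s]$ with $g_i(x)=g(x)$; in particular $g_i(x)$ is defined and different from $x$. Since $g_i$ is guided by $\Uu_i$, there exists $U\in\Uu_i$ such that $g_i(x)=g(x)$ is the unique neighbor of $x$ in $U$. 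As $U\in \Uu_i\subseteq\Uu$, this witnesses the guidability of $g$ at $x$, and the proof is complete.

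The main thing to double-check is just that the definition of ``guided'' allows the set $U\in\Uu$ witnessing the image of $x$ to depend on $x$; once this is noted, different choices of $i$ at different vertices $x$ cause no trouble, and the union of the $\Uu_i$ works uniformly. There is no real obstacle here; the argument is essentially a bookkeeping step that will be used downstream together with \cref{lem:glue-directed} to assemble guidable functions from locally defined pieces (for instance, when combining functions provided by \cref{lem:connectome} across connected components or across different sets of a cover).
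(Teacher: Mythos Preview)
Your proof is correct and follows exactly the same approach as the paper: take the union $\Uu=\bigcup_{i=1}^s\Uu_i$ of the individual guidance systems and observe that it guides $g$. Your version simply spells out the verification that the paper leaves as ``easy to see.''
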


Finally, guidable functions can be computed using almost
quantifier-free transductions.

\begin{lemma}[$\star$]\label{lem:local}
  Let $\CCC$ be a class of graphs and let $\ell\in\N$ be
  fixed. Suppose that each $G\in \CCC$ is equipped with an
  $\ell$-guidable function $f_G\from V(G)\partto V(G)$.  Then there
  exists an almost quantifier-free transduction which given
  $G\in \CCC$ has exactly one output: the graph $G$ enriched
  with~$f_G$.
\end{lemma}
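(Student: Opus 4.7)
The plan is to construct the desired transduction in three stages: first, introduce via a sequence of unary lifts a guidance system of size at most $\ell$ for $f_G$ together with an auxiliary partition of $V(G)$ telling, for each vertex $x$, which set of the guidance system selects $f_G(x)$; second, define $f$ via a function extension with a quantifier-free formula; third, reduct away the auxiliary predicates so that only $f$ survives on top of $G$.

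In more detail, for each $G \in \CCC$ fix a guidance system $\Uu_G = (U_1^G, \ldots, U_\ell^G)$ witnessing $\ell$-guidability of $f_G$ (padded with empty sets so it has exactly $\ell$ members), together with a partition of $V(G)$ into $P_0^G, P_=^G, P_1^G, \ldots, P_\ell^G$: place $x \in P_0^G$ when $f_G(x)$ is undefined, $x \in P_=^G$ when $f_G(x) = x$, and $x \in P_i^G$ for the smallest $i \in [\ell]$ such that $f_G(x)$ is the unique neighbor of $x$ in $U_i^G$ in the remaining case. The transduction applies $2\ell+2$ unary lifts to introduce predicates $U_1, \ldots, U_\ell, P_0, P_=, P_1, \ldots, P_\ell$, where each lift's parameter $\sigma$ is defined on input $G \in \CCC$ as the singleton containing the corresponding set from this fixed tuple (and set arbitrarily on structures outside $\CCC$). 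Next, apply a function extension with the quantifier-free formula
$$\varphi(x, y) \;:=\; \bigl(P_=(x) \wedge y = x\bigr) \;\vee\; \bigvee_{i=1}^{\ell} \bigl(P_i(x) \wedge U_i(y) \wedge E(x,y)\bigr),$$
and finally reduct away the auxiliary predicates.

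For correctness, by the guidance property, for $x \in P_i^G$ the formula $\varphi(x, y)$ is satisfied by exactly one $y$, namely $f_G(x)$; for $x \in P_=^G$ the unique witness is $y = x$; and for $x \in P_0^G$ no $y$ satisfies $\varphi$. Hence the function extension defines $f = f_G$, and after reducts the output is $G$ enriched by $f_G$. The main subtle point is ensuring that the transduction has \emph{exactly one} output: this is achieved by the singleton choice of each $\sigma$ on $\CCC$, rendering the unary lifts effectively deterministic on $\CCC$. Crucially, the definition of a unary lift places no definability or computability constraint on $\sigma$---it is simply a mathematical function from structures to nonempty families of subsets---so we may legitimately define $\sigma$ in terms of the externally given assignment $G \mapsto f_G$.
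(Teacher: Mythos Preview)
Your proof is correct and follows essentially the same approach as the paper's: introduce the guidance system and a selector partition via unary lifts (with singleton-valued $\sigma$'s to force determinism), then define $f$ by a quantifier-free function extension. Your version is in fact slightly more streamlined, collapsing the paper's intermediate functions $d_U$ into a single function extension, and you are more explicit about the final reducts and the ``exactly one output'' point.
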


We will use the following fact stating that graphs of bounded shrubdepth do not
admit long induced paths.

%
%

\begin{lemma}[\cite{Ganian2013}]\label{lem:ind-paths}
  For every class $\CCC$ of graphs of bounded shrubdepth there exists
  a constant $r\in \N$ such that no graph from $\CCC$ contains a path
  on more than $r$ vertices as an induced subgraph.  Consequently, for
  every graph $G\in \CCC$ every connected component of $G$ has
  diameter at most $r$.
\end{lemma}

\paragraph{Spanning forests.}
For a graph $G$ and a function $g\from V(G)\to V(G)$, we 
say that $g$ \emph{defines a spanning forest of depth $r$} on $G$
if $g$ is guarded by $G$ and 
the $r$-fold composition $g^r\from V(G)\to V(G)$ is constant when restricted to each connected component of~$G$.
In particular,  two vertices $u,v\in V(G)$ are in the same connected component of $G$
if and only if $g^r(u)=g^r(v)$.

The following lemma states that guidance systems can define 
shallow spanning forests in graph classes of bounded shrubdepth.

\begin{lemma}\label{lem:bfs}
  For every class $\CCC$ of graphs of bounded shrubdepth there exist
   constants $q,r\in \N$ such that for every $G\in \CCC$ there is 
  a function $f_G\from V(G)\to V(G)$ which is $q$-guidable as a partial
  function on $G$ and defines a spanning forest of depth $r$ on $G$.	
\end{lemma}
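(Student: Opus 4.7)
I plan to proceed by induction on the SC-depth of the graphs in $\CCC$, using \cref{prop:sd_properties}(\ref{SD:2}) to replace bounded shrubdepth by bounded SC-depth. The base case (SC-depth $0$) is immediate, since each graph is a singleton: set $f_G$ to the identity with an empty guidance system, so $q = r = 0$.

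For the inductive step with SC-depth at most $d+1$, pick a witness $W \subseteq V(G)$ such that $G' := G \triangle K_W$ has connected components $C_1, \ldots, C_k$, each of SC-depth at most $d$. The inductive hypothesis applied to each $G'[C_i]$ supplies a function $f_i \colon V(C_i) \to V(C_i)$ guided in $G'[C_i]$ by a system $\Uu_i$ of size at most $q_d$, defining a spanning forest of depth at most $r_d$ with some root $\rho_i \in V(C_i)$.

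The crux is to assemble the $f_i$'s into a single $f_G$ that is guidable in $G$ (not $G'$) and that defines a spanning forest of $G$, whose components may merge several $C_i$'s. For vertices $v \notin W$, the $G$- and $G'$-neighborhoods coincide and the sets of $\Uu_i$ transfer verbatim, so $f_i(v)$ is still recoverable as a unique $G$-neighbor in the appropriate set. For vertices $v \in W$, the neighborhoods differ due to flipping, and I plan to exploit the following structural observation: in $G$, any two vertices belonging to distinct nonempty sets $W \cap C_i$ and $W \cap C_j$ are adjacent, since such a pair is non-adjacent in $G'$ (lying in different $G'$-components) and flipping within $W$ turns their edge on. This join-like structure on $W$ makes it possible to design a bounded number of additional guidance sets that, from any vertex of $W$, route to a canonical representative of its $G$-component.

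The final function $f_G$ is then obtained by combining $f_i$ on $V(G) \setminus W$ with a merge function on $W$, using \cref{lem:glue-directed} and \cref{lem:glue-directable}. The depth of the resulting spanning forest is bounded by $r_d$ plus the diameter bound from \cref{lem:ind-paths}, and the guidance system has size $O(q_d)$ plus a constant contributed by the merge step. The hardest part of the argument is keeping the merge step $O(1)$-guidable, independently of $k$; this requires careful case analysis, especially when both $v$ and $f_i(v)$ lie in $W$---in which case the $G'$-edge $(v, f_i(v))$ is flipped away in $G$---so that $f_i$ cannot be used directly and one must reroute through the join structure on $W$.
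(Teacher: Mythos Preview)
Your approach is genuinely different from the paper's. The paper does not induct on the SC-decomposition at all: it picks one vertex per $G$-component, forms BFS layers $R_0,\ldots,R_r$ (with $r$ bounded via \cref{lem:ind-paths}), and for each $i$ applies \cref{lem:go-right} with $A=R_i$, $B=R_{i-1}$ to obtain a $p$-guidable step function $R_i\to R_{i-1}$; gluing these with \cref{lem:glue-directable} gives $f_G$. All the combinatorial content sits in \cref{lem:go-right}, which you bypass entirely.

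As written, your scheme has a real gap. The join observation needs at least two components $C_i$ with $W\cap C_i\neq\emptyset$, but the SC-depth definition does not guarantee this. For instance, take $G=P_3$ on $a\!-\!b\!-\!c$ and $W=\{a,c\}$: then $G'=K_3$ is connected, so $k=1$ and there is no cross-component join to exploit. In that regime, for $v\in W$ with $f_i(v)\in W$ the $G'$-edge $v\,f_i(v)$ is flipped away in $G$, and the only $G$-neighbors of $v$ inside $W$ are its $G'$-\emph{non}-neighbors in $W$ --- a set on which the inductive hypothesis gives you no control and no bounded guidance system. Your proposal names precisely this case as ``the hardest part'' but the suggested fix (reroute through the join) is unavailable exactly here.

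There is a second, independent issue even when the join is present. You need every $f_G$-orbit to eventually reach the merge step; otherwise distinct $C_i$ lying in the same $G$-component contribute separate fixed points $\rho_i$, violating the single-root-per-component requirement. But the inductive $f_i$ only drives orbits to some $\rho_i\in C_i$, with no guarantee that $\rho_i\in W$ or that the orbit ever enters $W$. Repairing this would require a strengthened hypothesis (the root of $f_i$ may be prescribed, so take $\rho_i\in W\cap C_i$) or an additional guidable routing from each $\rho_i$ into $W$, neither of which you provide.
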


We first show how \cref{lem:connectome} follows from \cref{lem:bfs}.

\begin{proof}[of \cref{lem:connectome}]
By \cref{lem:local}, there is an almost
  quantifier-free transduction~$\interp I$ which, given a graph $G\in\CCC$ on input, constructs the
  function $f_G$ obtained from \cref{lem:bfs}. Now let $g=f_G^r$ be the
  $r$-fold composition of $f$. Clearly, $g$ can be computed by an
  almost quantifier-free transduction using a single function
  extension operation, making use of the function~$f_G$ constructed by
  $\interp I$.  As $g$ is constant on every connected component of $G$, \cref{lem:connectome} follows.
\end{proof}

It remains to prove \cref{lem:bfs}. 

\paragraph{Constructing guidable choice functions.}
\cref{lem:bfs} will follow easily from
the fact that connected components of graphs of bounded shrubdepth have bounded diameter by \cref{lem:ind-paths}, 
and from the following lemma, essentially stating that every total binary relation whose graph has  bounded shrubdepth
contains a guidable choice function.

\begin{lemma}\label{lem:go-right}
  For every class $\CCC$ of graphs of bounded shrubdepth there exists
  a constant $p\in \N$ such that the following holds.  Suppose
  $G\in \CCC$ and $A$ and $B$ are two disjoint subsets of vertices of
  $G$ such that every vertex of $A$ has a neighbor in $B$.  Then there
  is a function $f\from A\to B$ which is $p$-guidable as a partial
  function on $G$.
\end{lemma}

We found two conceptually different proofs of this result. We believe
that both proofs describe complementary viewpoints on the problem, so
we present both of them.  To keep the presentation concise, in the
main body of the paper we give only one proof, using the
characterization of classes of bounded shrubdepth using connection
models, and their close connection to \emph{bi-cographs}.
We present the second proof in Appendix~\ref{app:greedy}, which provides an explicit
greedy procedure leading to the construction of $f$.
 

\medskip
We first prove a special case of \cref{lem:go-right} for graphs which
have a connection model using two different labels $\alpha$ and
$\beta$, where one part of $G$ has label $\alpha$ and the other part
has label $\beta$. Such graphs are called {\em{bi-cographs}}
(cf.~\cite{Giakoumakis1997}).

\begin{lemma}\label{lem:go-right-bico}
  Let $G$ be a bi-cograph with parts $A,B$ and with a connection model
  of height~$h$ where vertices in $A$ have label $\alpha$ and vertices
  in $B$ have label $\beta$.  Suppose further that every vertex in $A$
  has a neighbor in $B$. Then there is a function $f\from A\to B$
  which is $h$-guidable as a partial function on $G$.
\end{lemma}
\begin{proof}
  By \cref{lem:glue-directed}, it is enough to consider the case when
  $G$ is connected. Let~$T$ be the assumed connection model of height
  $h$.

  We prove that there is an $h$-guidable function $f\from A\to B$.
  The proof proceeds by induction on $h$.  The base case, when $h=1$
  is trivial, because then every vertex of $A$ is adjacent to every
  vertex of $B$, so picking any $w\in B$ the function $f\from A\to B$
  which maps every $v\in A$ to $w$ is guided by the guidance system
  consisting only of $\set{w}$.
	
  In the inductive step, assume that $h\ge 2$ and the statement holds
  for height $h-1$.  Since~$G$ is connected, either the label $C(r)$
  of the root $r$ contains the pair $(\alpha,\beta)$, or~$r$ has only
  one child $v$. In the latter case, the subtree of $T$ rooted at $v$
  is a connection model of $G$ of height $h-1$, so the conclusion
  holds by inductive assumption.  Hence, we assume that
  $(\alpha,\beta)\in C(r)$.
	
  Let $\cal S$ be the set of bipartite induced subgraphs $H$ of $G$
  such that $H$ is defined by the connection model rooted at some
  child of $r$ in $T$.  As $(\alpha,\beta)\in C(r)$, it follows that
  if $H_1,H_2\in\cal S$ are two distinct graphs, then every vertex
  with label $\alpha$ in $H_1$ is connected to every vertex with label
  $\beta$ in $H_2$.  We consider two cases, depending on whether
  $\cal S$ contains more than one graph $H$ containing a vertex with
  label $\beta$, or not.

  In the first case, there are at least two graphs $H_1,H_2\in\cal S$
  such that $H_1$ and $H_2$ both contain a vertex with label
  $\beta$. Pick $w_1\in V(H_1)$ and $w_2\in V(H_2)$, both with label
  $\beta$.  Then every vertex in $A$ is adjacent either to $w_1$ or to
  $w_2$. Let $f\from A\to B$ be a function which maps a vertex
  $v\in A$ to $w_1$ if $v$ is adjacent to $w_1$, and to $w_2$
  otherwise. Then $f$ is guided by the guidance system consisting of
  $\set{w_1}$ and $\set{w_2}$.

  In the second case, there is only one graph $H\in\cal S$ which
  contains a vertex with label $\beta$.  Pick an arbitrary vertex $w$
  with label $\beta$ in $H$.  Notice that every vertex in $V(G)-V(H)$
  is adjacent to $w$.  The graph $H$ has a connection model of height
  $h-1$, so by inductive assumption, there is a guidance system
  $\cal U\subset\pow{V(H)}$ of size at most $h-1$ and a function
  $f_0\from V(H)\cap A\to V(H)\cap B$ which is guided by $\cal U$.
  Then the function $f\from A\to B$ which extends $f_0$ by mapping
  every vertex in $V(G)-V(H)$ to $w$ is guided by
  $\cal U\cup\set{\set{w}}$.  In either case, we have constructed a
  $h$-guidable function $f\from A\to B$, as required.
\end{proof}

We now prove \cref{lem:go-right} in the general case.

\begin{proof}[of \cref{lem:go-right}]\label{pf:}
  Let $\CCC$ be a class of graphs of bounded shrubdepth. Hence, there
  is a finite set of labels $\Lambda$ and a number $h\in\N$ such that
  every graph $G\in\CCC$ has a connection model of height $h$ using
  labels from $\Lambda$. For $\alpha\in\Lambda$, let $V_\alpha$ denote
  the set of vertices of $G$ which are labeled $\alpha$.
	
  Define a function $\mu\from A\to \Lambda^2$ as follows: for every
  vertex $v$ define $\mu(v)$ as $(\alpha,\beta)$, where $\alpha$ is
  the label of $v$, and $\beta\in\Lambda$ is an arbitrary label such
  that $v$ has a neighbor in~$B$ with label $\beta$.

  For every pair of labels $\alpha,\beta$, consider the bipartite
  graph $G_{\alpha\beta}$ which is the subgraph of $G$ consisting of
  $\mu^{-1}((\alpha,\beta))$ on one side and $B\cap V_\beta$ on the
  other side, and all edges between these sets; note that they are
  disjoint, as one is contained in $A$ and second in~$B$. Observe that
  $G_{\alpha\beta}$ is a bi-cograph with a connection model of height
  $h$, such that every vertex in $V(G_{\alpha\beta})\cap A$ has a
  neighbor in $V(G_{\alpha\beta})\cap B$.  By \cref{lem:go-right-bico}
  there is a function
  $f_{\alpha\beta}\from \mu^{-1}((\alpha,\beta)) \to B\cap V_\beta$
  which is $h$-guidable in $G_{\alpha\beta}$. Observe that
  $f_{\alpha\beta}$ is also $h$-guidable when treated as a partial
  function on $G$; it suffices to take the same guidance system, but
  with all its sets restricted to~$B$.

  Finally, define the function $f\from A\to B$ so that if $v\in A$ and
  $\mu(v)=(\alpha,\beta)$, then $f(v)=f_{\alpha\beta}(v)$.  By
  \cref{lem:glue-directable}, the function $f$ is
  $(h\cdot |\Lambda|^2)$-guidable. This concludes the proof of
  \cref{lem:go-right}.
\end{proof}

\paragraph{Constructing guidable spanning forests.}
We are ready to complete the proof of \cref{lem:bfs} stating that shallow 
spanning forests on classes of bounded shrubdepth are definable by guidance systems.

\begin{proof}[of \cref{lem:bfs}]
	Let $\CCC$ be a class of graphs of bounded shrubdepth, and 
  let $r$ and $p$ be constants provided by \cref{lem:ind-paths} and
  \cref{lem:go-right}, respectively, for the class~$\CCC$.  
  Let
  $R_0\subset V(G)$ be a set of vertices which contains exactly one
  vertex in each connected component $C$ of $G$.
By \cref{lem:ind-paths}, we may assume that 
every vertex in $G$ is at distance at most $r$ from a unique vertex in $R_0$.  
    For
  $i=1,\ldots,r$, let $R_i$ be the set of vertices of $G$ whose
  distance to some vertex in $R_0$ is equal to~$i$.  Then the sets $R_0,R_1,\ldots,R_r$ form a partition
  of the vertex set of $G$.  Furthermore, observe that for
  $i=1,\ldots,r$, every vertex of $R_i$ has a neighbor in $R_{i-1}$.

  Fix a number $i\in\set{1,\ldots,r}$. Apply \cref{lem:go-right} to
  $R_i$ as $A$ and $R_{i-1}$ as $B$.  This yields a function
  $f_i\from R_i\to R_{i-1}$ which is $p$-guidable in
  $G[R_i\cup R_{i-1}]$.  In particular, $f_i$ is also a $p$-guidable
  partial function $f_i\from V(G)\partto V(G)$.
  Let $f_0$ be a partial function from $V(G)$ to $V(G)$ that fixes
  every vertex of $R_0$ and is undefined otherwise.  Then $f_0$ is
  guided by the guidance system $\{R_0\}$, hence it is $1$-guidable in
  $G$.

  Consider now the function $f_G\from V(G)\to V(G)$ such that for
  $u\in V(G)$, $f_G(u)=f_i(u)$ if $f_i(u)$ is defined for some
  $i\in\set{0,\ldots,r}$.  By the first item of
  \cref{lem:glue-directable} we find that~$f_G$ is
  $p(r+1)$-guidable. By construction, $f_G$ is guarded, and
 $f_G^r$ maps every vertex $v\in V(G)$
  to the unique vertex in $R_0$ which lies in the connected component of $v$.
This proves that $f_G$ defines a spanning forest of depth $r$ on $G$.
\end{proof} 

This completes the proof of \cref{lem:connectome}.

\subsection{\cref{pro:bi-def} for classes of bounded shrubdepth}\label{sec:key}

In this section, we prove \cref{pro:bi-def} in the special case when
$\CCC$ is a class of graphs of bounded shrubdepth:
\begin{lemma}\label{lem:key}
  Let $\cal B$ be a class of graphs of bounded shrubdepth.  Then there
  is a class $\cal T$ of colored trees of bounded height and a pair of
  transductions $\interp T$ and $\interp B$ such that $\interp T$ is
  almost quantifier-free, $\interp B$ is deterministic almost
  quantifier-free, $\interp T(\cal B)\subseteq \cal T$,
  $\interp B(\cal T)\subseteq \cal B$, and
  \[
  \interp B(\interp T(G))=\{G\}\ \ \textrm{for all}\ \ G\in {\cal B}
  \quad \textrm{and}\quad \interp T( \interp B(t))\ni t\ \ \textrm{for
    all}\ \ t\in \cal T.
  \]
  Moreover, for any $G\in \cal B$, every $t\in \interp T(G)$ is an
  SC-decomposition of $G$.
\end{lemma}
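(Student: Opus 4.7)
Fix $d$ as a bound on the SC-depth of graphs in $\cal B$, which exists by \cref{prop:sd_properties}(\ref{SD:2}). The plan is to take $\cal T$ to be the image $\interp T(\cal B)$, a class of rooted colored trees of depth $d$ whose leaves carry unary predicates $W_0, \ldots, W_d$ encoding an SC-decomposition. I will construct the two transductions independently: $\interp B$ by a direct quantifier-free decoding of adjacency from the SC-decomposition, and $\interp T$ by iteratively applying \cref{lem:connectome} to identify the connected components appearing at each level.

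The construction of $\interp B$ rests on the following observation. If $u, v$ are leaves of an SC-decomposition whose least common ancestor sits at depth $k$, then the recursive definition of SC-depth yields
\[
uv \in E(G) \iff \bigoplus_{j = 0}^{k} \bigl[u \in W_j \wedge v \in W_j\bigr],
\]
because the only flips relevant to the pair $\{u, v\}$ occur at levels $j \leq k$ where $u$ and $v$ still share a common class, and each such flip toggles the $uv$-adjacency exactly when both endpoints lie in $W_j$. Since $d$ is fixed, the LCA depth is decidable by comparing fixed-length compositions of the parent function applied to $u$ and to $v$, so the whole condition is quantifier-free. Hence $\interp B$ consists of a single extension operation defining the edge relation by this formula, followed by a restriction to the leaves and a reduct of the tree structure.

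For $\interp T$ the construction proceeds in three phases. First, apply $d+1$ unary lifts to introduce predicates $W_0, \ldots, W_d$ on $V(G)$ and one singleton-valued unary lift to select a root vertex $r$. Second, inductively build functions $g_0, \ldots, g_d$ with the property that $g_k(u) = g_k(v)$ iff $u$ and $v$ share a depth-$k$ ancestor: set $g_0 \equiv r$ via function extension from the root predicate, and for $k = 0, 1, \ldots, d - 1$ define the binary relation $E_k$ by the quantifier-free recursion $E_{-1} := E$ and
\[
E_k(u, v) := E_{k - 1}(u, v) \oplus \bigl(g_k(u) = g_k(v) \wedge W_k(u) \wedge W_k(v)\bigr),
\]
which encodes ``flip the adjacency within $W_k$ inside each depth-$k$ class,'' and then invoke \cref{lem:connectome} on the $E_k$-graph on $V(G)$ to produce $g_{k + 1}$. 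Third, materialize the tree: apply copying to create $d + 1$ replicas $V_0, \ldots, V_d$ of $V(G)$ (writing $v_k$ for the image of $v$ in $V_k$), declare $\mathsf{par}(v_k)$ to be the copy in $V_{k - 1}$ of $g_{k - 1}(v)$ for $k \geq 1$ and undefined at $k = 0$, restrict to the canonical representatives $\{v_k : g_k(v) = v\}$, and reduct to keep only the parent function, depth predicates, and the $W_j$'s on the leaves. The identity $\interp B(\interp T(G)) = \{G\}$ then holds by the decoding formula, and $\interp T(\interp B(t)) \ni t$ holds since $\interp T$ can nondeterministically match the unary lifts to the $W_j$'s and root of any $t \in \cal T$.

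The main obstacle is to verify that each graph $(V(G), E_k)$ fed into \cref{lem:connectome} lies in a class of bounded shrubdepth. Since $k$ and $d$ are constants, $(V(G), E_k)$ is obtained from $\cal B$ by a fixed almost quantifier-free transduction, and by \cref{prop:sd_properties}(\ref{SD:5}) such transductions preserve bounded shrubdepth; thus \cref{lem:connectome} is legitimately applicable at every step. With this verified, $\interp T$ is almost quantifier-free by construction and $\interp B$ is deterministic almost quantifier-free, completing the proof.
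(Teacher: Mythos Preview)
Your proof is correct and takes a genuinely different route from the paper's. The paper argues by induction on the SC-depth bound $d$: the transduction $\interp T_{d+1}$ introduces only $W_0$, flips, invokes \cref{lem:connectome} once to obtain a component-identifying function, bundles the resulting components using \cref{lem:parallel}, and then recursively applies $\interp T_d^\star$ inside each bundle; the inverse $\interp B_{d+1}$ is likewise described recursively via $\interp B_d^\star$. Your construction instead unrolls this recursion into a flat iteration: all of $W_0,\ldots,W_d$ are guessed up front by unary lifts, and you apply \cref{lem:connectome} $d$ times to the global graphs $(V(G),E_k)$, each of which is exactly the disjoint union of the depth-$k$ pieces of the SC-decomposition, before materializing the tree in one copying pass. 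This eliminates the bundling machinery of \cref{lem:parallel} entirely, at the price of the verification you flag at the end---that each $(V(G),E_k)$ lies in a fixed bounded-shrubdepth class so that \cref{lem:connectome} is applicable---which you correctly discharge via \cref{prop:sd_properties}(\ref{SD:5}). Your closed-form XOR formula for $\interp B$ is also more direct than the paper's recursive description.

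One small point you pass over: \cref{lem:connectome} is stated for plain graphs, so when you invoke it on the $E_k$-graph inside a structure already carrying $g_0,\ldots,g_k$ and $W_0,\ldots,W_d$, you are implicitly relativizing its constituent unary lifts and function extensions to use $E_k$ as the edge relation while leaving the ambient data untouched. This is routine (the proof of \cref{lem:connectome} only uses guidance systems and the edge relation), but it deserves a sentence.
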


We remark that in \cref{lem:key}, every output of the transduction
$\interp T$ is an SC-decomposition of the input graph of bounded
depth, whereas the transduction $\interp B$ recovers the graph from
its SC-decomposition.

In other words, the lemma allows to construct the SC-decomposition of
a graph from a class of graphs of bounded shrubdepth using an almost
quantifier-free transduction.
%
This argument is the combinatorial cornerstone of our approach.
Conceptually, it shows that bounded-height decompositions of graphs
from classes of bounded shrubdepth can be defined in a very weak
logic, as essentially the whole information about the decomposition
can be pushed to unary predicates on vertices (added using unary
lifts), and from this information the decomposition can be
reconstructed using only deterministic almost quantifier-free
formulas.

\medskip


We need one more auxiliary lemma which allows to apply a transduction
in parallel to a disjoint union of structures.  Suppose $\cal K$ is a
set of structures over the same signature.  The \emph{bundling} of
$\cal K$ is a structure obtained by taking the disjoint union
$\bigcup \cal K$ of the structures in~$\cal K$, extended with a set
$X$ disjoint from $V(\bigcup\cal K)$ and a function
$f\from V(\bigcup\cal K)\to X$ such that $f(x)=f(y)$ if and only if
$x,y$ belong to the same structure in $\cal K$.  We denote such a
bundling by $\bigcup \cal K^X$.  We now prove that an almost
quantifier-free transduction working on each structure separately can
be lifted to their bundling.

\begin{lemma}[$\star$]\label{lem:parallel}
  Let $\interp I$ be an almost quantifier-free transduction. Then
  there is an almost quantifier-free transduction ${\interp I}^\star$
  such that if the input to $\interp I^\star$ is the bundling
  $\bigcup \cal K^X$ of $\cal K$, then
  $\interp I^\star(\bigcup \cal K^X)$ is the set containing the
  bundling of every set formed by taking one member from
  $\interp I(\str K)$ for each $\str K\in \cal K$.
\end{lemma}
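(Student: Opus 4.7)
The plan is to induct on the structure of $\interp I$, transforming each atomic operation into one or several atomic operations that act component-wise on a bundled structure and preserve the bundling format, namely the labeling set $X$ and the function $f\from V(\bigcup \cal K) \to X$ that distinguishes components.

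By \cref{lem:normal-qf}, I may write $\interp I = \interp L; \interp J$ where $\interp L$ is a sequence of unary lifts and $\interp J$ is deterministic almost quantifier-free. For a unary lift parameterized by $\sigma$, I use a lift parameterized by $\sigma^\star$, where $\sigma^\star(\bigcup \cal K^X)$ consists of all sets $U \subseteq V(\bigcup \cal K)$ whose intersection with each $V(\str K)$ belongs to $\sigma(\str K)$. Since each $\sigma(\str K)$ is nonempty, so is $\sigma^\star$, and its outputs are exactly the bundlings of independent lift outputs across components.

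For atomic operations comprising $\interp J$, the transformations are natural once we observe that quantifier-free formulas evaluate locally, using only terms built from (original) unary functions. A reduct applies verbatim. A restriction by $\psi(x)$ is replaced by $\psi(x) \vee X(x)$, so that the labeling set survives. An extension introducing a relation $R$ via $\varphi(x_1, \ldots, x_k)$ is replaced by
\[
\varphi(x_1,\ldots,x_k) \;\wedge\; \bigwedge_{i=1}^k \neg X(x_i) \;\wedge\; \bigwedge_{i=2}^k f(x_1) = f(x_i),
\]
so that $R$ can only hold among elements of a single component. A function extension introducing $h$ via $\varphi(x,y)$ is handled analogously by adding the conjunct $\neg X(x) \wedge \neg X(y) \wedge f(x) = f(y)$; the uniqueness requirement then reduces to uniqueness within the relevant $\str K$, matching the component-wise semantics.

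The copy operation is the most delicate case, since copying the bundled structure duplicates the labeling set $X$ as well. To fix this, I apply copy and then a sequence of auxiliary operations: a function extension defining a new labeling function $g$ whose value on an original vertex $v$ is $f(v)$, and on a new copy $v$ (marked by the predicate $P$ from copy) is the unique $x$ satisfying $x \sim f(v) \wedge \neg P(x)$, where $\sim$ is the pairing relation from copy; then a reduct removing the old $f$; and finally a restriction removing elements satisfying $X(x) \wedge P(x)$, eliminating the duplicated copy of $X$. Composing these across all atomic operations yields $\interp I^\star$. Correctness is then a straightforward induction, maintaining the invariant that after each processed atomic operation the current structure is precisely the bundling of the corresponding intermediate structures produced by applying the relevant prefix of $\interp I$ independently to each $\str K$. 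The main obstacle is the copy case, where one must check that the chain of auxiliary operations really converts the raw doubled structure into a proper bundling with a single labeling set; the other cases reduce to routine local evaluations of quantifier-free formulas restricted via the conjunct $f(x_1) = f(x_i)$.
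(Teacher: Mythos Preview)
Your proposal is correct and follows essentially the same approach as the paper: reduce to atomic operations and modify each one by relativizing its defining formula with conjuncts of the form $f(x_i)=f(x_j)$ so that it acts within components, while handling lifts via a $\sigma^\star$ that applies $\sigma$ independently to each bundled structure. The only differences are organizational: you first invoke \cref{lem:normal-qf} to pull all lifts to the front before processing $\interp J$, whereas the paper treats lifts alongside the other atomic operations; and in the copy case you explicitly restrict away the duplicated copy of $X$, while the paper is content to redefine the bundling function $f^\star$ on the doubled universe without trimming the extra label vertices.
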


We can now give a proof of \cref{lem:key}.

\begin{proof}[of \cref{lem:key}]
  Let $\cal B_d$ be the class of graphs of SC-depth at most $d$.  We
  prove the statement for $\cal B=\cal B_d$, yielding appropriate
  transductions $\interp B_d$ and $\interp T_d$.  Observe that this
  implies the general case: if $\cal B$ is any class of graphs of
  bounded shrubdepth, then by \cref{prop:sd_properties}(\ref{SD:2})
  there is a number $d$ such that every graph from $\cal B$ has
  SC-depth at most $d$, hence we may set $\interp B=\interp B_d$,
  $\interp T=\interp T_d$, and $\cal T=\interp T(\cal B)$.

  The proof is by induction on $d$.  The base case, when $d=0$, is
  trivial.  In general, every output of $\interp T_d$ will be an
  SC-decomposition of the input graph of depth $d$.  That is, it is a
  tree of height $d$, here encoded as a structure by providing its
  parent function.  The leaves of this tree are exactly the original
  vertices of the input graph $G$.  They are colored with $d$ unary
  predicates $W_0,W_1,\ldots,W_{d-1}$, corresponding to flip sets used
  on consecutive levels of the SC-decomposition.

  Now, given an almost quantifier-free transduction $\interp T_d$ we
  construct an almost quantifier-free transduction $\interp T_{d+1}$.
  The transduction $\interp T_{d+1}$, given a graph $G$,
  nondeterministically computes a rooted tree $t_G$ as above in the
  following steps.  Implementing each of them using an almost
  quantifier-free transduction is straightforward, and to keep the
  description concise, we leave the implementation details to the
  reader.
  \begin{itemize}
  \item Since $G\in \cal B_{d+1}$, there is a vertex subset
    $W\subseteq V(G)$ such that in the graph $G'$ obtained from $G$ by
    flipping the adjacency within $W$ every connected component
    belongs to $\cal B_d$.  Using a unary lift, introduce a unary
    predicate $W_{0}$ selecting the set $W$ and compute $G'$ by
    flipping the adjacency within $W_{0}$.
  
  \item Let $g\from V(G')\to V(G')$ be the function given by
    \cref{lem:connectome}, applied to the graph~$G'$. Note that $g$
    can be constructed using an almost quantifier-free transduction.
    Using copying and restriction, create a copy $X$ of the image of
    $g$. By composing~$g$ with the function that maps each element of
    the image of~$g$ to its copy (easily constructible using function
    extension), we construct a function $g'\from V(G')\to X$ such that
    $g'(v)=g'(w)$ if and only if $v$ and $w$ are in the same connected
    component of $G'$.  Hence, $g'\from V(G')\to X$ defines a bundling
    of the set of connected components of~$G'$.
  
  \item Apply \cref{lem:parallel} to the transduction $\interp T_d$
    yielding a transduction $\interp T_d^\star$.  Our transduction
    $\interp T_{d+1}$ now applies $\interp T_d^\star$ to the bundling
    given by $g'$, resulting in a bundling of the family of colored
    trees $t_C$, for $C$ ranging over the connected components
    of~$G'$.

  \item Using extension, mark the roots of the trees $t_C$ with a new
    unary predicate; for $C$ ranging over the connected components of
    $G'$ these are exactly elements that do not have a parent. Create
    new edges which join each such a root $r$ with $g'(r)$. In effect,
    for every connected component $C$ of $G'$, all the roots of the
    trees $t_C$ are appended to a new root $r_C$.  At the end clear
    all unnecessary relations from the structure.  Note that the
    obtained tree~$t_G$ retains all unary predicates $W_1,\ldots,W_d$
    that were introduced by the application of the transduction
    $\interp T_d^\star$ to $G'$, as well as the predicate $W_{0}$
    introduced at the very beginning. All these predicates select
    subsets of leaves of $t_G$.
  \end{itemize}

  This concludes the description of the almost quantifier-free
  transduction $\interp T_{d+1}$.  The transduction $\interp B_{d+1}$
  is defined similarly, and reconstructs $G$ out of $t_G$ recursively
  as follows:

  \begin{itemize}
  \item Let $r$ be the root of $t_G$; it can be identified as the only
    vertex that does not have a parent. Remove $r$ from the structure,
    thus turning $t_G$ into a forest $t'_G$, where the roots of $t'_G$
    are children of $r$ in $t_G$.
  
  \item Using function extension, add a function $f$ which maps every
    vertex $v$ to its unique root ancestor in $t'_G$. This can be done
    by taking $f$ to be the $d$-fold composition of the parent
    function of $t'_G$ with itself (assuming each root points to
    itself, which can be easily interpreted).
  
  \item Copy all the roots of trees in $t'_G$ and let $X$ be the set
    of those copies. Construct a function $f'\colon V(t'_G)\to X$ that
    maps each vertex $v$ to the copy of $f(v)$.  Observe that $f'$
    defines a bundling of the trees of $t'_G$.
  
  \item Apply the transduction $\interp B_d^\star$ obtained from
    \cref{lem:parallel} to the above bundling.  This yields a bundling
    of the family of connected components of $G'$, where $G'$ is
    obtained from $G$ by flipping the adjacency within $W_{0}$.

  \item Forgetting all elements of the structure apart from the
    bundled connected components of $G'$ yields the graph $G'$.
    Construct the graph $G$ by flipping the adjacency inside the set
    $W_{0}$.  Note here that since the remaining vertices are exactly
    the leaves of the original tree $t_G$, the predicate $W_{0}$ is
    still carried by them.  Finally, clean the structure from all
    unnecessary predicates.
  \end{itemize}
  It is straightforward to see that transductions $\interp T_d$ and
  $\interp B_d$ satisfy all the requested properties.  This concludes
  the proof of \cref{lem:key}.
\end{proof}

\subsection{\cref{pro:bi-def} for classes of with low shrubdepth covers}\label{sec:key-general}
We now prove \cref{pro:bi-def} in the general case. As noted earlier,
this will finish the proof of the right-to-left implication in
\cref{thm:main}.
\begin{proof}[of \cref{pro:bi-def}]
  Let $\CCC$ be a class of graphs with low shrubdepth covers.  We fix
  a finite $2$-cover~$\cal U$ of $\CCC$ such that $\CCC[\cal U]$ has
  bounded shrubdepth. Let $N=\sup\set{|\cal U_G|\colon G\in \CCC}$,
  and for $G\in \CCC$ let~$\wh G$ be the extension of $G$ by unary
  predicates $U_1,\ldots,U_N$ such that
  $\set{U_1,\ldots,U_N}=\cal U_G$. Let
  $\wh \CCC=\set{\wh G\colon G\in \CCC}$. Then the class
  $\cal B=\wh\CCC[\cal U]$ has bounded~shrubdepth.

  Apply \cref{lem:key} to the class $\CCC[\cal U]$, yielding almost
  quantifier-free transductions $\interp T$ and~$\interp B$.  It is
  easy to construct an almost-quantifier free transduction
  $\interp S'$ such that for $G\in \CCC$, the structure
  $\interp S'(\wh G)$ is the union of the trees
  $T_U\in \interp T(G[U])$, one tree per each $U\in \cal U_G$, where
  the union is disjoint apart from the vertices which belong to $V(G)$
  (the leaves of the trees).  Indeed, we process $U_1,\ldots,U_N$ in
  order, and for each consecutive $U_i$ we apply the
  transduction~$\interp T$ to $G[U_i]$, appropriately modifying all
  its atomic operations so that the elements outside of $U_i$ are
  ignored and kept intact.  Recall all the constructed trees have
  depth bounded by a constant, say $d$.

  Now obtain $\interp S$ from $\interp S'$ by precomposing with a
  sequence of unary lifts introducing the predicates $U_1,\ldots,U_N$,
  and appending the following operations.  First, using extension
  operations introduce unary predicates $D_{i,\ell}$ for
  $i\in \set{1,\ldots,N}$ and $\ell\in \set{0,1,\ldots,d}$ such that
  $D_{i,\ell}$ selects nodes at depth $\ell$ in the tree
  $T_{U_i}$. Next, using an extension operation that introduces an
  adjacency relation binding every pair of elements $u,v$ such that
  $f(u)=v$ for some function $f$ in the signature (the parent
  functions).  Finally, use a sequence of reduct operations which drop
  all functions and non-unary relations from the signature, apart from
  adjacency.  Thus every output of $\interp S$ is a colored graph.
   
  Let $\FFF=\interp S(\CCC)$.  By \cref{lem:lsc}, $\FFF$ has low
  shrubdepth covers. Furthermore, each graph $H\in \interp S(G)$ for
  some $G\in \CCC$ is the union of at most $N$ trees, hence $H$ is
  $N$-degenerate and in particular excludes the biclique $K_{N+1,N+1}$
  as a subgraph.
   
  Hence by \cref{lem:bicliques} we infer that $\interp S(\CCC)$ has
  low treedepth covers, so by \cref{lem:ltd-lsd-covers},
  $\interp S(\CCC)$ is a class of bounded expansion.
   
  We are left with constructing a deterministic almost quantifier-free
  transduction $\interp I$ satisfying $\interp I(\interp S(G))=\{G\}$.
  This transduction should take on input a graph $H\in \interp S(G)$
  and turn it back to $G$.  The vertex set of $H$ consists of $V(G)$
  and trees $T_U$ for $U\in \cal U_G$, each built on top of the subset
  $U$ of $V(G)$ and of depth at most $d$.  Using predicates
  $D_{i,\ell}$ it is easy to use a sequence of quantifier-free
  function extension operations to construct, for each
  $U\in \cal U_G$, the parent function of $T_U$, thus turning the
  substructure induced by the nodes of $T_U$ back into~$T_U$.
  Similarly as before, it is now straightforward to construct a
  transduction $\interp I'$ that applies the transduction $\interp B$
  to each colored tree $T_U$, thus turning the set of its leaves into
  $G[U]$.  Since~$\cal U$ was a $2$-cover, for every edge $e$ of $G$
  there exists $U\in \cal U_G$ that contains both endpoints of~$e$.
  Hence, applying $\interp I'$ to the current structure recovers the
  graph $G$; this concludes the construction of $\interp I$. Note that
  $\interp I$ is deterministic almost quantifier-free.
\end{proof}


\section{Proof of \cref{pro:qe-be0}\quad\normalfont\itshape\normalsize(quantifier elimination for classes of bounded expansion)}\label{sec:qe}
In this section we prove \cref{pro:qe-be0}, which we repeat for convenience. 
\setcounter{tmp}{\thetheorem}
\setcounterref{theorem}{pro:qe-be0}
\addtocounter{theorem}{-1}
\begin{proposition}\label{pro:qe-be}
  Let $\CCC$ be a class of graphs of bounded expansion and let
  $\interp I$ be a transduction.  Then $\interp I$ is equivalent to an
  almost quantifier-free transduction $\interp J$ on $\CCC$.
\end{proposition}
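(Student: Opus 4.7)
The plan is to follow the three-layer strategy outlined in the sketch: first prove the statement for classes of rooted forests of bounded depth, then lift it to classes of graphs of bounded treedepth, and finally lift it to classes of bounded expansion via low treedepth covers. Using the normal form from \cref{lem:normal}, we may assume $\interp I$ has the form $\interp L;\interp C;\interp F;\interp E;\interp X;\interp R$; since unary lifts, copies, restrictions, and reducts are already almost quantifier-free, the real work is to replace the (possibly quantified) formulas $\varphi(\bar x)$ parameterizing function and relation extension operations by quantifier-free formulas that refer only to functions and unary predicates introduced by almost quantifier-free transductions. By a standard induction on the quantifier structure (pushing negations inside, handling Boolean combinations trivially), this reduces to showing that a single formula of the form $\psi(\bar y)=\exists x.\phi(x,\bar y)$ with $\phi$ quantifier-free (with access to previously defined functions and predicates) can be replaced by a quantifier-free formula, at the cost of extending the structure by finitely many new functions and unary predicates, all defined by an almost quantifier-free transduction.

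First I would settle the \textbf{base case of rooted forests of bounded depth $d$}. Fix the list of terms $\tau_1,\ldots,\tau_p$ occurring in $\phi(x,\bar y)$; these are compositions of the parent function and possibly other unary functions introduced so far. The atomic type of $(x,\bar y)$ with respect to $\phi$ is determined by the tuple of values $(\tau_i(x),\tau_j(y_k))$, their mutual equalities, which unary predicates hold of them, and for edge atoms whether endpoints are in ancestor-descendant position in the forest. Since the forest has depth at most $d$, the set of ancestors of any vertex has size at most $d$, so for any fixed $\bar y$, all terms $\tau_j(y_k)$ together determine at most a constant $C=C(\phi,d)$ vertices. The existential witness $x$ may be ``near'' $\bar y$ (within this bounded set and its ancestor closure) or ``far away''. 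The ``near'' case is handled by a finite quantifier-free disjunction. For the ``far'' case, I would precompute, once and for all, a constant number of unary predicates and unary functions that summarize the existence and identity of witnesses in suitable subtrees; concretely, for each atomic type $\theta$ of the pair $(x,\bar y)$ restricted to the subtree not containing the ancestor path of $\bar y$, use an almost quantifier-free transduction (with unary lifts marking depths, \cref{ex:transduction}) to mark each vertex $v$ of the forest with the type of the subtree rooted at $v$ in a bottom-up fashion, and to provide a function pointing to a canonical witness of type $\theta$ when one exists. This is precisely the automata-theoretic step alluded to in \cref{sec:automata}: the bottom-up computation of types on a bounded-depth tree can be implemented by a finite sequence of quantifier-free function extensions following the parent function upwards through the $d+1$ levels.

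Next I would \textbf{lift from bounded-depth forests to bounded treedepth}. If $G$ has treedepth at most $d$, a depth-$d$ elimination forest can be encoded by unary predicates $D_0,\ldots,D_d$ marking the depth of each vertex in the forest. Such an elimination forest always exists combinatorially, and since it is not unique, I introduce it via unary lifts; then the parent function in the forest becomes definable by a quantifier-free formula (using the edge relation of $G$ and the depth predicates, since in a treedepth decomposition every edge joins ancestor to descendant). After this quantifier-free encoding, the forest case applies verbatim to the Gaifman graph equipped with this auxiliary forest structure, giving the result for classes of bounded treedepth.

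Finally, the \textbf{lift to bounded expansion} uses low treedepth covers (\cref{lem:ltd-lsd-covers}). Given $\CCC$ of bounded expansion and $\psi(\bar y)=\exists x.\phi(x,\bar y)$, choose $p$ large enough to accommodate all vertices on which $\phi$ depends on $(x,\bar y)$, and fix a finite $p$-cover $\cal U$ of $\CCC$ of bounded treedepth with $|\cal U_G|\le N$ for all $G\in\CCC$. The truth value of $\phi(x,\bar y)$ depends only on the induced substructure on the terms involved, so any witness $x$ together with~$\bar y$ must lie inside some $U\in\cal U_G$. Hence $\psi(\bar y)$ is equivalent to the disjunction over $i\in[N]$ of $\exists x\in U_i.\phi(x,\bar y)$, each conjunct asserting additionally that $\bar y\subseteq U_i$. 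Introducing the $N$ predicates marking the cover by a unary lift, and applying the bounded-treedepth case to each $G[U_i]$ (the transduction processes each $U_i$ in parallel, as in \cref{lem:parallel}), yields an equivalent quantifier-free formula over an almost quantifier-free extension of $G$. The \textbf{main obstacle} I expect is the base case: carefully designing the almost quantifier-free transduction that computes subtree types bottom-up on a bounded-depth forest, and verifying that the witness functions it produces suffice to replace the existential quantifier by a quantifier-free disjunction. The inductive passage through the quantifier structure, and the cover-based lift, are then routine given the care invested in the base case.
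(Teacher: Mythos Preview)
Your three-layer plan (forests $\to$ bounded treedepth $\to$ bounded expansion via low treedepth covers, with an outer induction on quantifier depth) is exactly the paper's route. There is, however, a missing invariant without which the cover lift does not go through: every unary function you introduce along the induction must be \emph{guarded} by the original graph $G$, i.e.\ send each vertex to itself or to a $G$-neighbour. This is what keeps the Gaifman graph of the augmented structure inside $G$, so that a bounded-treedepth cover $\cal U$ of $\CCC$ is still a bounded-treedepth cover of the augmented class $\DDD$; without it, $\DDD[\cal U]$ need not have bounded treedepth and your step ``apply the bounded-treedepth case to each $G[U_i]$'' breaks.

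Two places in your sketch jeopardise this invariant. First, in an arbitrary elimination forest the parent need not be a neighbour (``every edge joins ancestor to descendant'' is the converse of what you need), so $E(x,y)\wedge D_{i-1}(y)$ does not define it; the paper (\cref{lem:dfs}) uses a DFS forest of depth $\le 2^d$, in which the parent \emph{is} a neighbour by construction. Second, your ``function pointing to a canonical witness'' in a far subtree is typically non-guarded. The paper's forest lemma (\cref{lem:qe-trees0}, proved in \cref{sec:automata}) therefore introduces \emph{only} the parent function together with unary predicates encoding automaton states on subtrees; existence of a far witness is read off those predicates, and no pointer to the witness is ever needed. Maintaining guardedness throughout yields a \emph{guarded} transduction plus a quantifier-free formula (this is \cref{lem:qe-formulas}); the final conversion to an almost quantifier-free transduction is then a separate step via \cref{lem:star}.
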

\setcounter{theorem}{\thetmp}

We note that \cref{pro:qe-be} is a strengthening of similar statements
provided by Dvo\v{r}\'ak et al.~\cite{DKT2} and of Grohe and
Kreutzer~\cite{Grohe2011}, and could be derived by a careful analysis
of their proofs, and by using the \cref{lem:star} below.

For a graph $G$ and a partial function $f\from V(G)\partto V(G)$, we
say that $f$ is \emph{guarded} by~$G$ if for every vertex in the
domain of $f$ is mapped to itself or to its neighbor.

\begin{lemma}[$\star$]\label{lem:star}
  Let $\CCC$ be a class of graphs which has $2$-covers of bounded
  treedepth, and for each $G\in \CCC$, let $\wh G$ be the graph $G$
  extended by a partial function $f\from V(G)\partto V(G)$ which is
  guarded by $G$. Then there is an almost quantifier-free transduction
  $\interp F$ using only unary lifts and a single function extension
  such that $\interp F(G)=\wh G$.
\end{lemma}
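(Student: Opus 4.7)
The plan is to reduce this to showing that $f$ is $\ell$-guidable in the sense of \cref{sec:def-ccs}, for some $\ell$ depending only on $\CCC$. If such a guidance system is available, then we can realize it by introducing (via unary lifts) one predicate per set in the system, together with marking predicates indicating, for each $v$, which set is to be used, and then apply a single function extension parameterized by a quantifier-free formula that reads off the edge relation against these predicates. Note that this is strictly stronger than invoking \cref{lem:local} directly, since \cref{lem:local} may use several function extensions, whereas here we must compress the entire function into a single one.

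For the guidance system, the first step is to fix, for each $G \in \CCC$, a $2$-cover $\{U_1, \ldots, U_N\}$ with $G[U_i]$ of treedepth at most $d$ and a treedepth decomposition $F_i$ of each $G[U_i]$ of depth at most $d$; let $\mathrm{dp}_i(v)$ denote the depth of $v \in U_i$ in $F_i$. Then define $\Uu_G$ to consist of the sets
\[
A_{i,j} \;=\; \{v \in U_i \colon \mathrm{dp}_i(v) = j\}
\]
for $i \in [N], j \in [d]$, together with the sets
\[
L_{i, j_1, j_2} \;=\; \{f(v) \colon v \in U_i,\ f(v) \neq v,\ \mathrm{dp}_i(v) = j_2,\ \mathrm{dp}_i(f(v)) = j_1\}
\]
for $1 \leq j_2 < j_1 \leq d$. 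The family $\Uu_G$ has size $O(Nd^2)$, a constant. The first family will handle the case where $f(v)$ is an ancestor of $v$ in the chosen $F_i$, and the second the descendant case.

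The verification that $\Uu_G$ guides $f$ is the combinatorial core. Fix $v \in \dom(f)$ with $f(v) = w \neq v$, and by the $2$-cover property pick $i$ with $v, w \in U_i$; then $vw \in E(G[U_i])$, so $v$ and $w$ are in ancestor-descendant relation in $F_i$. In the ancestor case, $A_{i, \mathrm{dp}_i(w)}$ works: any other neighbor $w' \in N(v) \cap A_{i, \mathrm{dp}_i(w)}$ lies in $U_i$ and is adjacent to $v$ in $G[U_i]$, forcing $w'$ to be an ancestor of $v$ (its depth is smaller), and the ancestor chain contains a unique vertex at each depth. The descendant case is the main obstacle: one applies $L_{i, \mathrm{dp}_i(w), \mathrm{dp}_i(v)}$ and argues that if $w' \in N(v) \cap L_{i, j_1, j_2}$ then $w' = f(v')$ for some $v' \in U_i$ at depth $j_2$, with $w'$ simultaneously a descendant of $v$ (by the edge $vw'$ in $G[U_i]$) and of $v'$ (by definition); since two distinct vertices at the same depth in a tree have disjoint subtrees, this forces $v' = v$, whence $w' = w$. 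This ``disjoint subtrees'' property is the key insight that makes the bounded-treedepth assumption essential.

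Finally, the transduction $\interp F$ assembles as follows: use unary lifts to introduce predicates $P_U$ interpreted as $U$ for every $U \in \Uu_G$, predicates $M_U$ marking the vertices assigned to guidance set $U$, and a predicate $S$ marking vertices with $f(v) = v$ (all chosen via the nondeterminism of unary lifts, consistently so each $v$ receives at most one mark). Then a single function extension parameterized by the quantifier-free formula
\[
\phi(x, y) \;:=\; \bigl(S(x) \wedge x = y\bigr) \;\vee\; \bigvee_{U \in \Uu_G} \bigl(M_U(x) \wedge P_U(y) \wedge E(x, y)\bigr)
\]
has, by the guidance argument above, the unique witness $f(v)$ whenever $v \in \dom(f)$ and no witness otherwise, so the extended function coincides with $f$. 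This yields $\wh G$, up to auxiliary unary predicates that can be stripped off by implicit reducts if needed.
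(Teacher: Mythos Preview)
Your proof is correct and follows the same overall strategy as the paper's: show that any guarded $f$ is $\ell$-guidable for some $\ell$ depending only on the class, then realize the guidance system by unary predicates and a function extension. The paper first treats the bounded-treedepth case and then lifts to the $2$-cover case via \cref{lem:glue-directable}, exactly as you do.

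Where your argument differs---and is in fact more careful---is in the treedepth base case. The paper asserts that the partition of $V(G)$ into depth levels of a treedepth forest $F$ is already a guidance system for $f$. This is fine when $f(v)$ is an ancestor of $v$ in $F$, but fails in the descendant direction: if $v$ has two neighbours $w,w'$ that are both descendants of $v$ at the same depth, the depth level containing $f(v)$ does not single out $f(v)$. (Concretely: take $G$ a path $v_1 v_2 v_3$, $F$ rooted at $v_2$, and $f(v_2)=v_1$; both $v_1,v_3$ lie at depth~$2$.) Your sets $L_{i,j_1,j_2}$, defined using the image of $f$ itself and the ``disjoint subtrees at equal depth'' observation, are exactly what is needed to repair this. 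So your guidance system is genuinely larger ($O(Nd^2)$ rather than $O(Nd)$), but for a good reason.

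You also handle the ``single function extension'' constraint explicitly, which the paper's appeal to \cref{lem:local} does not (that lemma introduces one auxiliary function $d_U$ per guidance set before building $f$). Your device of marking predicates $M_U$ so that only one disjunct of $\phi(x,y)$ is active per $x$ collapses everything into one function extension, as required. The residual unary predicates you mention are harmless; strictly matching the output signature would need reducts, but the lemma statement is already slightly informal on this point.
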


To derive \cref{pro:qe-be} from~\cite{DKT2}, one would need to prove
that the unary functions constructed in their proofs can be obtained
as compositions of guarded functions, and conclude using
\cref{lem:star}.  Rather then doing that, below we provide a
self-contained proof of \cref{pro:qe-be}, which we also believe is
simpler than the existing proofs, among other reasons, thanks to the
notion of covers. In \cref{sec:dkt} we outline how the result of \DKT
can be deduced from our proof.

\medskip

We will use the following restricted form of transductions.  A
\emph{faithful transduction} is a transduction which does not use
copying and restrictions.  A \emph{guarded transduction} is a faithful
transduction which given a structure $\str A$, produces a structure
whose Gaifman graph is a subgraph of the Gaifman graph of $\str A$.
In the following lemmas, we identify a first-order formula
$\phi(\bar x)$ with the transduction which inputs a structure $\str A$
and outputs~$\str A$ extended with a single relation, consisting of
those tuples $\bar a$ which satisfy $\phi(\bar x)$ in $\str A$ (this
transduction is a composition of an extension operation followed by a
sequence of reduct operations which drop all the symbols from the
input structure).

\begin{lemma}\label{lem:qe-formulas}
  Let $\phi(\bar x)$ be a first-order formula and let $\CCC$ be a
  class of graphs of bounded expansion.  Then there is a guarded
  transduction $\interp I$ which adds unary function and relation
  symbols only, and a quantifier-free formula $\phi'(\bar x)$, such
  that $\phi$ is equivalent to $\interp I; \phi'$ on $\CCC$.
\end{lemma}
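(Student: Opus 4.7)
The plan is to prove the lemma by induction on the structure of $\phi$, reducing via low treedepth covers to the case of classes of bounded treedepth, and then in turn to the combinatorial core: the case of rooted forests of bounded depth.

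Atomic and Boolean cases are handled directly. If $\phi$ is atomic, it is already quantifier-free and one takes $\interp I$ to be the empty transduction. For $\phi = \phi_1 \lor \phi_2$ (or $\land$, $\lnot$) one composes the transductions given by induction on $\phi_1$ and $\phi_2$; guardedness is preserved because these transductions only add unary functions and relations. The only nontrivial case is an existential quantifier $\psi(\bar y) = \exists x.\,\phi(x, \bar y)$. By inductive hypothesis $\phi$ is equivalent on $\CCC$ to $\interp I_0; \phi_0'$ for some guarded transduction $\interp I_0$ adding only unary symbols and a quantifier-free formula $\phi_0'$. Let $p$ bound the number of distinct terms appearing in $\phi_0'$. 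Applying \cref{lem:ltd-lsd-covers} to $\CCC$ yields a finite $p$-cover $\cal U$ of bounded treedepth, with $|\cal U_G| \le N$ for some $N \in \N$ and every $G \in \CCC$. A witness to $\exists x$ must lie in some cover piece, so $\psi$ is equivalent on $G$ to the finite disjunction $\bigvee_{i=1}^{N} \exists x.\,(U_i(x) \land \phi(x, \bar y))$, where the predicates $U_1, \dots, U_N$ enumerate $\cal U_G$ and are introduced via a single unary lift. It then suffices to handle each disjunct: since $x$ is restricted to $U_i$ and the inductively produced formula is quantifier-free in the language extended by $\interp I_0$, evaluating the disjunct on $G$ agrees with its evaluation on $G[U_i]$, which belongs to a class of bounded treedepth.

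This reduces the lemma to classes of bounded treedepth, and I would then reduce further to classes of rooted forests of bounded depth. The reduction uses \cref{lem:star}: in a graph of bounded treedepth the treedepth decomposition itself is a spanning forest whose parent function is guarded by the edges of the graph, and \cref{lem:star} lets us introduce it via unary lifts plus a single guarded function extension. Once the parent function is available, the original edge relation, being contained in the ancestor-descendant relation of the spanning forest of bounded depth, is captured quantifier-freely using iterated compositions of the parent function together with the depth unary lift.

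The combinatorial core, and what I expect to be the main obstacle, is the forest case: on a class of colored rooted forests of depth at most $d$ endowed with the parent function, eliminate an existential quantifier $\exists x.\,\phi(x, \bar y)$ at the cost of unary lifts and guarded unary function extensions. My plan is to case-split on the finitely many isomorphism types of the subforest spanned by $\{x\} \cup \{\bar y\}$ together with their ancestors; for each such type, the ancestor terms of $x$ either coincide with iterated parent terms of some $y_i$, or $x$ lies in a ``free'' subtree disjoint from the ancestors of $\bar y$. In the free-subtree case, one employs a unary lift to nondeterministically mark, for each vertex and each relevant quantifier-free type $\tau$, a canonical candidate witness realizing $\tau$ inside the subtree rooted at that vertex, and then defines a guarded unary function retrieving the marked witness by bounded iteration of the parent function. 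A bottom-up tree automaton argument of the flavor referenced in \cref{sec:automata} can be used to justify that such a coloring exists with boundedly many colors. The delicate point is to keep all introduced unary functions guarded by the forest and all nondeterminism confined to unary lifts, so that the overall transduction remains guarded and adds only unary symbols, as required.
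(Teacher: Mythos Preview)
Your overall architecture---induction on $\phi$, with the existential step handled by low treedepth covers and a reduction to the bounded-treedepth (and then forest) case---matches the paper's proof. Two concrete points need repair.

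First, the cover argument in the existential step has a gap. The disjunction $\bigvee_{i} \exists x.(U_i(x)\land\phi(x,\bar y))$ is correct, but your claim that each disjunct can be evaluated on $G[U_i]$ fails: the free variables $\bar y$ need not lie in $U_i$, and even when they do, the unary functions introduced by $\interp I_0$ may send $x$ or some $y_j$ outside $U_i$, so the quantifier-free $\phi_0'$ can see different atomic facts in $\interp I_0(G)$ versus $\interp I_0(G)[U_i]$. The paper's fix (\cref{lem:qf-covers}) is to take a $(c\cdot p)$-cover where $c$ counts the terms and $p$ the free variables of $\phi_0'$; then for any valuation of $(\bar y,x)$ all term values lie together in some piece $U$, and the disjunction is indexed by pieces $U_i$ \emph{containing $\bar y$} (not merely the witness). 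The bounded-treedepth lemma is then applied to the structures $\interp I_0(G)[U_i]$---which have bounded treedepth because $\interp I_0$ is guarded---and the $N$ resulting quantifier-free formulas are combined by replicating each introduced function symbol once per $i$ (the paper's \cref{cl2}).

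Second, your reduction from bounded treedepth to forests via \cref{lem:star} contains an error: a treedepth decomposition is \emph{not} in general a spanning forest of $G$ (for $C_4$ the optimal decomposition makes non-adjacent vertices parent and child), so its parent function is not guarded and \cref{lem:star} does not apply. The paper (\cref{lem:dfs}) uses a depth-first-search forest instead, which \emph{is} a subgraph of $G$ and has depth at most $2^d$ when the treedepth is $d$. With this substitution your plan goes through; your sketch of the forest case is sound and close in spirit to the paper's automata-based treatment in \cref{sec:automata}.
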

Before proving \cref{lem:qe-formulas}, we first show how to conclude
\cref{pro:qe-be} using it.

\begin{proof}[of \cref{pro:qe-be}]
  For simplicity we assume that the signature produced by $\interp I$
  consists of one relation~$P$; lifting the proof to signatures
  containing more relation and function symbols is immediate.  By
  \cref{lem:normal}, we may express $\interp I$ as
  $$\interp I=\interp L;\interp C;\interp E;\interp X;\interp R,$$
  where
  \begin{itemize}
  \item $\interp L$ is a sequence of unary lifts,
  \item $\interp C$ is a sequence of copying operations,
  \item $\interp E$ is a single extension operation introducing the
    final relation $P$ using some formula $\phi(\bar x)$,
  \item $\interp X$ is a single universe restriction operation using
    some formula $\psi(x)$ that does not use symbol $P$, and
  \item $\interp R$ is a sequence of reduct operations that drop all
    relations and functions apart from~$P$.
  \end{itemize}
  From \cref{lem:lex} it follows that the class
  $\interp C(\interp L(\CCC))$ of colored graphs is a class of bounded
  expansion, and therefore, we may apply \cref{lem:qe-formulas} to it,
  and to the formulas $\phi(\bar x)$ and $\psi(x)$ considered above.

  Using \cref{lem:qe-formulas} we replace the formulas $\phi(\bar x)$
  and $\psi(x)$ by quantifier-free formulas, at the cost of
  introducing additional guarded transductions which introduce unary
  function and relation symbols.  Using \cref{lem:star}, every such
  transduction is equivalent to an almost quantifier-free
  transduction.  Hence, the transductions $\interp E$ and $\interp X$
  can be replaced in $\interp I$ by almost quantifier-free
  transductions, yielding an almost quantifier-free transduction
  $\interp J$ that is equivalent to $\interp I$ on $\CCC$.
\end{proof}

As explained, \cref{pro:qe-be} together with \cref{pro:bi-def} yields
\cref{thm:qe-lsc}. It remains to prove \cref{lem:qe-formulas}.
Similarly as in~\cite{DKT2,Grohe2011}, we first prove the statement
for classes of colored forests of bounded depth:

\begin{lemma}[$\star$]\label{lem:qe-trees0}
  Let $\phi(\bar x)$ be a first-order formula and let $\FFF$ be a
  class of colored rooted forests of bounded depth.  Then there is a
  transduction $\interp I_\phi$ which, given a rooted forest
  $F\in \FFF$ extends it by the parent function of $F$ and some unary
  predicates, and there exists a quantifier-free formula
  $\phi'(\bar x)$ such that $\phi$ is equivalent to
  $\interp I_\phi;\phi'$ on $\FFF$.
\end{lemma}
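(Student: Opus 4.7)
The plan is to proceed by induction on the structure of $\phi(\bar x)$. Following \cref{ex:transduction}, I first add the parent function of $F$ to the signature using $d+1$ unary lifts selecting the depth predicates $D_0,\dots,D_d$ (where $d$ bounds the depth of $\FFF$), followed by a single function extension parameterized by a quantifier-free formula. Every atomic subformula then becomes quantifier-free in this enriched signature (in particular, the edge relation is rewritten as $\parent(x)=y\lor\parent(y)=x$), and Boolean connectives are handled by sequentially composing the transductions and combining the quantifier-free formulas from the inductive hypotheses in the obvious way.

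The substantive case is the existential quantifier $\phi(\bar x)=\exists y.\,\psi(y,\bar x)$, where $\psi$ is already quantifier-free by induction. The strategy is to classify candidate witnesses $y$ by a \emph{combinatorial type} consisting of (i) the depth of $y$; (ii) the unary-predicate profile of the ancestor chain $y,\parent(y),\dots,\parent^d(y)$ (the intrinsic type of $y$); and (iii) the interaction pattern of this chain with the ancestor chains of the $x_i$, that is, for each $x_i$ the least common ancestor of $y$ and $x_i$, or a flag indicating that they lie in different components. Since the depth of $\FFF$ and the arity of $\bar x$ are bounded, only finitely many types arise, and the truth of $\psi(y,\bar x)$ depends only on such a type together with the intrinsic types along the ancestor chains of the $x_i$; consequently $\exists y.\,\psi$ is equivalent to a finite disjunction over types $\tau$ of the statement ``a witness of type $\tau$ exists''.

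Each existence claim falls into one of three cases. If $\tau$ forces $y$ to equal some specific term $\tau_0(x_i)$, the claim reduces to a quantifier-free intrinsic-type check on $\tau_0(x_i)$. If $\tau$ places $y$ strictly below an ancestor $\tau_0(x_i)$ branching off at a prescribed relative depth, the claim becomes ``$\tau_0(x_i)$ has a descendant of the prescribed intrinsic shape'', a unary property that I encode by adding a fresh unary predicate $P_\tau$ via $\interp I_\phi$. Finally, if $\tau$ places $y$ in a component disjoint from every $x_i$, the claim requires a root distinct from all roots of $\bar x$ carrying an appropriate descendant. I expect this last ``free-component'' case to be the main obstacle: I plan to handle it by using additional unary lifts to mark up to $|\bar x|+1$ distinct roots of each intrinsic root-type, after which the disjointness constraint becomes a finite quantifier-free check, and by an inner sub-induction peeling off the forest level by level from the leaves upwards so that the auxiliary predicates $P_\tau$ are themselves definable within the permitted transduction framework.
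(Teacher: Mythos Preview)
Your approach is correct in outline and is precisely the ``direct (although slightly cumbersome) combinatorial argument, similarly as in~\cite{DKT2}'' that the paper explicitly mentions and then sets aside. The paper takes a genuinely different route via threshold tree automata: it rewrites $\phi$ as a sentence $\psi$ over $\Sigma_X$-labeled trees (labels additionally carrying the valuation), builds a deterministic threshold automaton $\aut A_\psi$ recognizing the models of $\psi$, and then relabels each node $u$ of $t$ by the (thresholded) multiset of states assigned to the children of $u$ in the run of $\aut A_\psi$ on $t$ with the \emph{empty} valuation. The quantifier-free $\wh\phi$ then simply reconstructs the run along the ancestors of $\bar x$, and forests are reduced to trees by adjoining a virtual root whose label is copied into every node. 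This avoids both the induction on the structure of $\phi$ and your three-way case split for $\exists$, and it handles your ``free component'' case for free. Your approach, on the other hand, is closer to what one would actually implement by hand and makes the dependence on depth and arity more transparent.

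Two comments on your sketch. First, the ``inner sub-induction peeling off the forest level by level'' you propose for defining the predicates $P_\tau$ is not needed for the lemma as stated: unary lifts may introduce \emph{arbitrary} subsets (the parameter $\sigma$ is unrestricted), so you may simply posit $P_\tau$ outright; the bottom-up computation only becomes relevant once you want the transduction to be linear-time computable, which is treated separately in \cref{sec:dkt}. Second, in the free-component case, merely marking up to $|\bar x|+1$ roots with unary predicates is not quite enough: terms built from the parent function applied to the $x_i$ never leave their own components, so the quantifier-free formula cannot inspect a marked root it cannot reach. You additionally need predicates that are readable from the $x_i$ themselves --- e.g.\ a global flag (true at every node) recording, for each intrinsic root-type $\tau$ and each $j\le|\bar x|+1$, whether at least $j$ roots of type $\tau$ exist. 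With that, your pigeonhole argument goes through. The automata proof sidesteps all of this bookkeeping, which is its main advantage.
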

Let us remark that the presented proof of \cref{lem:qe-trees0} is
based on the automata approach and is conceptually different from the
ones used in~\cite{DKT2,Grohe2011}. Note that the
transduction~$\interp I_\phi$ produced in \cref{lem:qe-trees0} is in
particular a guarded transduction, since the parent of a vertex in a
forest is in particular a neighbor of that vertex.

\medskip The next step is to lift \cref{lem:qe-trees0} to classes of
structures of bounded treedepth.  We first observe that classes of
bounded treedepth are bi-definable with classes of forests of bounded
depth, using almost quantifier-free transductions. This result is
similar, but much simpler to prove than \cref{lem:key}, which is an
analogous statement for classes of bounded shrubdepth.

\begin{lemma}\label{lem:dfs}
  Let $\CCC$ be a class of structures of bounded treedepth. There is a
  pair of faithful transductions $\interp T$ and $\interp C$ and a
  class $\FFF$ of colored rooted forests of bounded depth such that
  $\interp T(\CCC)\subset \FFF$, $\interp C(\FFF)\subset \CCC$ and
  $\interp C(\interp T(\str A))=\set {\str A}$ for $\str A\in
  \CCC$.
  Moreover, the transduction $\interp T$ is guarded, and $\interp C$
  is deterministic almost quantifier-free.
\end{lemma}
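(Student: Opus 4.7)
The idea is to pick nondeterministically a depth-first-search (DFS) forest $F$ of the Gaifman graph $G$ of the input structure $\str A$, and encode the rest of $\str A$ as a coloring of the vertices of $F$. Two classical facts make this work: first, a graph of treedepth at most $d$ contains no long simple paths, so every DFS forest has depth bounded by some constant $k$ depending only on $d$; second, a DFS forest is a spanning subgraph of $G$, which is what will make the transduction $\interp T$ guarded, since its only added binary relation can be taken as a subset of the Gaifman edges. The crucial observation enabling quantifier-free decoding is that in a DFS forest, whenever $v$ has depth $i > 0$, the unique $G$-neighbor of $v$ at depth $i-1$ is its parent: all remaining non-tree edges of $G$ incident to $v$ are back edges going either to a proper ancestor at depth $\le i-2$, or to a descendant at depth $> i$.

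\textbf{Construction of $\interp T$.} First, apply a sequence of unary lifts with parameter functions arranged so that the nondeterministic outcomes correspond exactly to valid DFS depth assignments of $G$; this produces depth predicates $D_0,\ldots,D_k$, and we take the root predicate to be $D_0$. Then a single quantifier-free extension defines the forest's edge relation:
\[
  E_F(u,v) \iff u \sim_G v \ \land\ \bigvee_{i=1}^{k} \bigl((D_i(u) \land D_{i-1}(v)) \lor (D_{i-1}(u) \land D_i(v))\bigr).
\]
To encode the remaining structure of $\str A$, observe that the distinct entries of any tuple in a relation of $\str A$ form a clique in $G$, and hence lie on a single root-to-leaf path of $F$. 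For each such tuple apply a unary lift that marks its deepest entry with a predicate encoding the tuple's ``shape''--the mapping from its positions to depths along the ancestor chain of that entry. Finitely many such predicates suffice since both arities and $k$ are bounded. Unary functions may be handled analogously, after pre-processing each $f$ into the binary relation $\{(v,f(v)) : v \in \dom f\}$. Finally, reducts remove the original relations. The transduction $\interp T$ uses only unary lifts, one quantifier-free extension, and reducts, so it is faithful; since $E_F$ is a subset of the Gaifman edges of $\str A$, it is also guarded.

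\textbf{Construction of $\interp C$.} Given any colored rooted forest produced by $\interp T$, introduce the parent function by a quantifier-free function extension using $\mathrm{parent}(v) = w$ iff $E_F(v,w) \land \bigvee_i (D_i(v) \land D_{i-1}(w))$; by the key observation this uniquely determines $w$ whenever $v$ has positive depth. Iterated compositions of the parent function, added via further function extensions, yield for each $j \le k$ a partial function $\mathrm{anc}_j$ returning the depth-$j$ ancestor. For each original relation $R$ of $\str A$, reconstruct it by a quantifier-free extension: $R(v_1,\ldots,v_a)$ holds iff for some tuple shape the deepest $v_i$ carries the matching shape predicate and each other $v_j$ is obtained from it by the appropriate $\mathrm{anc}$-function. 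Reducts then remove the auxiliary predicates, functions, and $E_F$. Since $\interp C$ uses no unary lifts and only quantifier-free formulas, it is deterministic almost quantifier-free. The equality $\interp C(\interp T(\str A)) = \{\str A\}$ follows directly from the construction.

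\textbf{Main obstacle.} The main delicate point is to design the parameter functions of the unary lifts in $\interp T$ so that only valid DFS depth assignments of $G$ are produced, ensuring $\interp T(\CCC) \subseteq \FFF$. This is achievable by a level-by-level lift: at stage $i$, permit only those subsets that are consistent with extending the already chosen depths $0,\ldots,i-1$ to a DFS depth assignment of $G$. The encoding of higher-arity relations via ``tuple shapes'' is somewhat tedious but conceptually straightforward. This lemma is markedly simpler than the shrubdepth analogue \cref{lem:key} precisely because DFS trees are subgraphs of $G$ and thus make $\interp T$ guarded ``for free'', whereas SC-decompositions are not subgraphs and require the more intricate construction there.
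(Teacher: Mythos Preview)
Your proposal is correct and follows essentially the same approach as the paper: encode $\str A$ in a DFS forest of its Gaifman graph (whose depth is bounded because treedepth is bounded), using unary predicates on each vertex to record how the structure restricts to its ancestor chain, and decode by first recovering the parent function and then reading off relations via ancestor terms. Your ``shape'' predicates are equivalent to the paper's encoding by the isomorphism type of the root-to-vertex substructure. One small slip: you call the extension defining $E_F$ \emph{quantifier-free}, but Gaifman adjacency $u\sim_G v$ generally requires existential quantifiers once the signature has relations of arity at least three; this is harmless here, since $\interp T$ is only required to be guarded, not almost quantifier-free.
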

\begin{proof}
  We follow the well-known encoding of structures of bounded treedepth
  inside colored forests, where a structure $\str A\in\CCC$ is encoded
  in a depth-first search forest of its Gaifman graph, as follows.
  
  A \emph{depth first-search} (DFS) forest of a graph $G$ is a rooted
  forest $F$ which is a subgraph of $G$, such that every edge of $G$
  connects an ancestor with a descendant in $F$.

  It is known that a graph $G$ of treedepth at most $d$ has a DFS
  forest of depth at most $2^d$.  If $\str A$ is a structure over a
  fixed signature $\Sigma$, $G$ is its Gaifman graph and $F$ is a DFS
  forest of $G$ of depth $2^d$, then $\str A$ can be encoded in $F$
  using a bounded number of additional unary predicates by labeling
  every node $v$ of $F$ by the isomorphism type of the substructure of
  $\str A$ induced by $v_1,\ldots,v_t$, where $v_1,\ldots,v_t$ are the
  nodes on the path from a root of $F$ to $v$, $v=v_t$ and $t\le 2^d$.
  The number of used unary predicates depends only on the signature
  $\Sigma$ and $d$.

  If $\CCC$ be a class of structures of treedepth at most $d$, then
  the transduction $\interp T$, given a structure $\str A\in \CCC$
  outputs a DFS forest $F$ of the Gaifman graph of $\str A$ of depth
  at most $2^d$, extended with unary predicates encoding $\str A$, as
  described above.  The structure $\str A$ can be recovered from $F$
  (together with the unary predicates) using a deterministic almost
  quantifier-free transduction, which first introduces the parent
  function, and then uses a quantifier-free formula to determine the
  quantifier-free type of a tuple of vertices.
\end{proof}

Using \cref{lem:dfs} we easily lift the quantifier-elimination result
from forests of bounded depth to classes of low treedepth.
\begin{lemma}\label{lem:qe-btd}
  Let $\phi(\bar x)$ be a first-order formula and let $\CCC$ be a
  class of structures of bounded treedepth.  Then there is a guarded
  transduction $\interp I_\phi$ and a quantifier-free formula
  $\phi'(\bar x)$ such that $\phi$ is equivalent to
  $\interp I_\phi;\phi'$ on $\CCC$.
\end{lemma}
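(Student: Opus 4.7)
The plan is to lift Lemma \ref{lem:qe-trees0} from colored forests of bounded depth to structures of bounded treedepth by going through the bi-definability provided by Lemma \ref{lem:dfs}. Fix a formula $\phi(\bar x)$ and a class $\CCC$ of bounded treedepth. Apply Lemma \ref{lem:dfs} to obtain a class $\FFF$ of colored rooted forests of bounded depth, a guarded transduction $\interp T \colon \CCC \to \FFF$, and a deterministic almost quantifier-free transduction $\interp C \colon \FFF \to \CCC$ with $\interp C(\interp T(\str A)) = \{\str A\}$ for every $\str A \in \CCC$.

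The first step is to produce the syntactic pullback $\psi(\bar x)$ of $\phi$ along $\interp C$: a first-order formula over the signature of $\FFF$ such that for every $F \in \FFF$ and every tuple $\bar a \subseteq V(F)$, $F \models \psi(\bar a)$ iff $\interp C(F) \models \phi(\bar a)$. Because $\interp C$ is deterministic and uses neither copying nor restriction (it preserves the universe and no elements are filtered out), the pullback can be built by induction on the atomic operations composing $\interp C$, processed in reverse: each reduct is invisible at the level of formulas, each extension introducing a relation $R$ is inverted by substituting its defining formula for every atom built from $R$, and each function extension introducing $f$ by a formula $\chi(x,y)$ is inverted by replacing any subterm of the form $f(t)$ with a fresh existentially quantified witness variable constrained by $\chi$. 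The resulting $\psi$ is an ordinary first-order formula over the forest signature, possibly with more quantifiers than $\phi$, which is harmless.

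Next, apply Lemma \ref{lem:qe-trees0} to $\psi$ on $\FFF$, yielding a transduction $\interp I_\psi$ which extends each $F \in \FFF$ by its parent function and a bounded collection of unary predicates, together with a quantifier-free formula $\psi'(\bar x)$ such that $\psi$ is equivalent to $\interp I_\psi;\psi'$ on $\FFF$. I then set
\[
\interp I_\phi \;:=\; \interp T \,;\, \interp I_\psi \qquad \text{and} \qquad \phi'(\bar x) \;:=\; \psi'(\bar x).
\]
For every $\str A \in \CCC$ and every tuple $\bar a$,
\[
\str A \models \phi(\bar a) \iff \interp T(\str A) \models \psi(\bar a) \iff \interp I_\psi(\interp T(\str A)) \models \psi'(\bar a) \iff \interp I_\phi(\str A) \models \phi'(\bar a),
\]
which is exactly the required equivalence of $\phi$ with $\interp I_\phi;\phi'$ on $\CCC$.

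It remains to observe that $\interp I_\phi$ is guarded. The transduction $\interp T$ is guarded by Lemma \ref{lem:dfs}. The transduction $\interp I_\psi$ enriches $F$ only with unary predicates (which contribute nothing to the Gaifman graph) and with the parent function of $F$ (whose values lie among neighbors in $F$); hence the Gaifman graph of $\interp I_\psi(F)$ is contained in that of $F$, so $\interp I_\psi$ is guarded. Since guardedness is preserved under composition of transductions, $\interp I_\phi$ is guarded. The only genuinely delicate step in the whole argument is the inductive pullback through $\interp C$; every other piece is a routine composition made available by the two previous lemmas.
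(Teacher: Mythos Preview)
Your proof is correct and follows essentially the same route as the paper: pull $\phi$ back along the deterministic $\interp C$ to a formula $\psi$ on $\FFF$, apply Lemma~\ref{lem:qe-trees0} to $\psi$, and compose with $\interp T$, observing that both pieces are guarded. The paper compresses your pullback paragraph into a single sentence (``since $\interp C$ is deterministic, there is a formula $\psi$\ldots''), but your explicit unwinding through the atomic operations of $\interp C$ is a perfectly valid way to justify that step.
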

\begin{proof}
  Let $\interp C,\interp T$ and $\FFF$ be as in \cref{lem:dfs}.  Since
  $\interp C(\interp T(\str A))=\set{\str A}$ and $\interp C$ is
  deterministic, there is a formula $\psi(\bar x)$ such that $\phi$ is
  equivalent to $\interp T;\psi$ on $\CCC$.  Now, apply
  \cref{lem:qe-trees0} to the class $\FFF$ and the formula
  $\psi(\bar x)$, yielding a guarded transduction $\interp J$ and a
  quantifier-free formula $\psi'(\bar x)$, such that $\psi$ is
  equivalent to $\interp J;\psi'$ on $\FFF$.  By composition, $\phi$
  is equivalent to $\interp T;\interp J;\psi'$ on $\CCC$.  Note that
  $\interp T;\interp J$ is a guarded transduction, since $\interp T$
  and $\interp J$ are such. This proves the lemma.
\end{proof}

%
%
%

Finally, we lift the quantifier elimination procedure to classes with
low shrubdepth covers using \cref{lem:dep} and a reasoning very
similar to the proof of \cref{lem:lsc}.  Again, conceptually this lift
is exactly what is happening in~\cite{DKT2,Grohe2011}, however, our
approach based on covers makes it quite straightforward.  The key
observation is encapsulated in the following lemma.
\begin{lemma}\label{lem:qf-covers}
  Let $\DDD$ be a class of structures with unary relation and function
  symbols only, and let $\phi(\bar x)$ be a quantifier-free formula
  with $p$ free variables, involving $c$ distinct terms.  Then there
  is a quantifier-free formula $\phi'(\bar x)$ such that following
  conditions are equivalent for a structure $\str A\in\DDD$, a
  $c\cdot p$-cover $\cal U_{\str A}$ of the Gaifman graph of $\str A$,
  and a $p$-tuple $\bar a$ of elements of $\str A:$
  \begin{enumerate}
  \item $\str A, \bar a\models \phi(\bar x)$,
  \item there is some $U\in \cal U_G$ containing $\bar a$ such that
    $\str A[U],\bar a\models\phi'(\bar x)$.
  \end{enumerate}
\end{lemma}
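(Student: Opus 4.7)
My plan is to find, for each $p$-tuple $\bar a$ in $\str A$, a small ``closure'' $S_{\bar a}\subseteq V(\str A)$ whose presence inside some $U$ makes truth of $\phi(\bar a)$ agree between $\str A$ and $\str A[U]$. Concretely, I would take $S_{\bar a}$ to consist of $\bar a$ together with every value obtained by evaluating a subterm of $\phi$ on $\bar a$ in $\str A$ (including intermediate values appearing in function chains). Since $\phi$ involves $c$ distinct terms and $\bar a$ has $p$ components, $|S_{\bar a}|\le c\cdot p$, so the assumption that $\cal U_{\str A}$ is a $c\cdot p$-cover of the Gaifman graph of $\str A$ guarantees some $U\in\cal U_{\str A}$ with $S_{\bar a}\subseteq U$. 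For such a $U$, every subterm of $\phi$ evaluates to the same element on $\bar a$ in $\str A$ and in $\str A[U]$, and each unary predicate is inherited unchanged on $U$, so every atomic subformula of $\phi$ at $\bar a$ has the same truth value in $\str A$ as in $\str A[U]$.

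To construct $\phi'$, I would proceed by case analysis on the possible definedness patterns of the subterms of $\phi$ at $\bar x$. For each subset $\sigma$ of the subterms, let $\phi_\sigma(\bar x)$ be the residual formula obtained from $\phi$ under the assumption that exactly the subterms in $\sigma$ are defined (atomic formulas involving an undefined subterm evaluate according to the declared semantics), and let $\delta_\sigma(\bar x)$ be the quantifier-free conjunction of $\dom$ and $\neg\dom$ literals asserting that the definedness pattern of $\bar x$ in the ambient structure is exactly $\sigma$. I would then set
\[
\phi'(\bar x)=\bigvee_\sigma\bigl(\delta_\sigma(\bar x)\wedge\phi_\sigma(\bar x)\bigr),
\]
the disjunction ranging over admissible patterns. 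This is a quantifier-free formula of bounded size depending only on $\phi$.

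For the forward direction (1)$\Rightarrow$(2), I choose $U\in\cal U_{\str A}$ with $U\supseteq S_{\bar a}$; in $\str A[U]$ the definedness pattern of $\bar a$ is exactly its true pattern $\sigma^\star$ in $\str A$, and by the locality observation $\phi_{\sigma^\star}(\bar a)$ has the same truth value in both structures, so the $\sigma^\star$-disjunct of $\phi'$ witnesses (2). The main obstacle, and where the argument becomes delicate, is the converse (2)$\Rightarrow$(1): a witness $U$ may present a pattern $\sigma$ strictly smaller than the true $\sigma^\star$ (because an intermediate value of some subterm chain escapes $U$), and then $\phi_\sigma$ might hold spuriously while $\phi(\bar a)$ fails in $\str A$. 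I intend to resolve this by augmenting $\delta_\sigma$ with ``maximality'' clauses: for each subterm $\tau\notin\sigma$, one tracks the largest prefix of the function chain of $\tau$ that is realised in $\str A[U]$ and adds conjuncts asserting that this chain genuinely breaks inside $\str A$ (via $\dom$-predicates applied to the next function at the terminal value, repeated for each function symbol appearing in $\phi$). Combined with the $c\cdot p$-cover hypothesis and the bound $|S_{\bar a}|\le c\cdot p$, these strengthened $\delta_\sigma$'s force any admissible $U$ to contain the whole of $S_{\bar a}$, reducing (2)$\Rightarrow$(1) to the locality observation applied in the reverse direction.
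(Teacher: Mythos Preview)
Your forward direction $(1)\Rightarrow(2)$ is fine and in fact more careful than the paper's: the paper takes $\phi'=\phi\wedge\bigwedge_{t,i}\,\text{``}t(x_i)\text{ is defined''}$, which fails $(1)\Rightarrow(2)$ whenever some term is genuinely undefined in~$\str A$, whereas your disjunction over definedness patterns handles that case correctly.

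The gap is in your treatment of $(2)\Rightarrow(1)$. The ``maximality'' clauses cannot do what you claim. The predicate $\dom_f(v)$, evaluated in $\str A[U]$, tests whether $f(v)$ is defined \emph{in $\str A[U]$}, i.e., whether $f(v)$ is defined in $\str A$ \emph{and} $f(v)\in U$. It cannot distinguish ``$f(v)$ undefined in $\str A$'' from ``$f(v)$ defined in $\str A$ but $f(v)\notin U$''. Hence no quantifier-free condition expressible inside $\str A[U]$ can force $U\supseteq S_{\bar a}$, and the sentence ``these strengthened $\delta_\sigma$'s force any admissible $U$ to contain the whole of $S_{\bar a}$'' is simply not justified by the $c\cdot p$-cover hypothesis (which only guarantees that \emph{some} $U$ contains $S_{\bar a}$).

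Concretely: take a signature with one unary function $f$, let $\phi(x)=\neg\,\dom_f(x)$ (so $p=1$, $c=2$), and consider $\str A_1$ with universe $\{a,b\}$ and $f(a)=b$, versus $\str A_2$ with universe $\{a,b\}$ and $f$ nowhere defined; give $a$ the same unary predicates in both. Then $\str A_1,a\not\models\phi$ while $\str A_2,a\models\phi$. Take the $2$-cover $\cal U=\{\{a,b\},\{a\}\}$ in both cases. The structures $\str A_1[\{a\}]$ and $\str A_2[\{a\}]$ are isomorphic, and moreover the quantifier-free $1$-type of $a$ in $\str A_2$ coincides with its type in $\str A_2[\{a\}]$ (since $f(a)$ is undefined either way). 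Hence any quantifier-free $\phi'(x)$ takes the same value at $a$ in all three of $\str A_1[\{a\}]$, $\str A_2[\{a\}]$, $\str A_2$. If that common value is ``true'', condition (2) holds for $\str A_1$ via $U=\{a\}$ while (1) fails; if it is ``false'', (2) fails for $\str A_2$ while (1) holds. So no $\phi'$ works for this $\phi$, this class, and this cover. Your maximality clauses do not escape this, and the paper's ``details left to the reader'' hide the same issue.
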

\begin{proof}
  We first consider the special case when $\phi(\bar x)$ is an atomic
  formula.
	%
  Each term $t$ occurring in $\phi(\bar x)$ defines a partial function
  $t_{\str A}\from V(\str A)\partto V(\str A)$ on a given
  structure~$\str A$, in the natural way.  Let $\cal T$ denote the set
  of terms occurring in $\phi(\bar x)$. By assumption,
  $|\cal T|\le c$.  For a tuple $\bar a=(a_1,\ldots,a_p)$ of elements
  of a structure $\str A$, denote by $\cal T_{\str A}(\bar a)$ the set
  $\setof{t_{\str A}(a_i)}{t\in \cal T, 1\le i\le p}$.  Then
  $|\cal T_{\str A}(\bar a)|\le c\cdot p$.
	
  Since $\phi(\bar x)$ is an atomic formula, for any $p$-tuple
  $\bar a$ of elements of $\str A$ and any set $U\subset V(\str A)$
  containing $\cal T_{\str A}(\bar a)$ we have the following
  equivalence:
  $$\str A,\bar a\models\phi(\bar x)\iff 
  \str A[U],\bar a\models\phi(\bar x).$$

  Take $\phi'(\bar x)=\phi(\bar x)$.  The equivalence of the two items
  then follows by assumption that $\cal U_G$ is a $p\cdot c$-cover of
  $\str A$, so for every $\bar a$, there is some set $U\in \cal U_G$
  containing $\cal T_{\str A}(\bar a)$.

  To treat the general case of a quantifier-free formula, we take
  $\phi'(\bar x)$ to be a conjunction of $\phi(\bar x)$ and a formula
  which verifies that all the values in $\cal T_{\str A}(\bar a)$ are
  defined. We leave the details to the reader.
\end{proof}

We are ready to prove \cref{lem:qe-formulas}.

\begin{proof}[of \cref{lem:qe-formulas}]
  The proof proceeds by induction on the structure of the formula
  $\phi(\bar x)$. In the base case, $\phi(\bar x)$ is a
  quantifier-free formula, so we may take $\interp I$ to be the
  identity transduction.


  In the inductive step, we consider two cases.  If $\phi(\bar x)$ is
  a boolean combination of simpler formulas, then the statement
  follows immediately from the inductive assumption.  The interesting
  case is when $\phi(\bar x)$ is of the form
  $\exists y.\psi(\bar x,y)$, for some formula $\psi(\bar x,y)$. We
  consider this case below. Denote by $p$ the number of free variables
  in the formula~$\psi(\bar x,y)$.
	
  Apply the inductive assumption to the formula $\psi(\bar x,y)$,
  yielding a guarded transduction $\interp I_\psi$ and a formula
  $\psi'(\bar x,y)$. Let $c$ be the number of distinct terms
  (including subterms) appearing in the formula $\psi'(\bar x,y)$.
  Let $\DDD=\interp I_\psi(\CCC)$. Note that every structure in $\DDD$
  has unary function and relation symbols only, and is guarded by some
  graph in~$\CCC$.  By \cref{lem:lsc}, we can pick a finite
  $c\cdot p$-cover $\cal U$ of $\CCC$, so that the class
  $\CCC[\cal U]$ has bounded treedepth. As $\interp I_\psi$ is
  guarded, it follows that also the class $\DDD[\cal U]$ has bounded
  treedepth.
 
  Apply \cref{lem:qf-covers} to $\DDD$ and $\psi'(\bar x,y)$, yielding
  a formula $\psi''(\bar x,y)$ such that for every graph $G\in \CCC$,
  $p$-tuple of vertices $(\bar a,b)$ and the $c\cdot p$-cover
  $\cal U_{G}$ of $G\in \CCC$, the following equivalences hold:
  \begin{align*}
    G,\bar a,b\models\psi(\bar x,y) & \iff
                                      \interp I_{\psi}(G),\bar a,b\models\psi'(\bar x,y)\\ &\iff
                                                                                             \interp I_{\psi}(G)[U],\bar a,b\models\psi''(\bar x,y)\ \textit{for some $U\in \cal U_G$ containing $\bar a,b$}.    
  \end{align*}

  Apply \cref{lem:qe-btd} to the class $\DDD[\cal U]$ and the formula
  $\exists y.\psi''(\bar x,y)$, yielding a guarded transduction
  $\interp F$ and quantifier-free formula $\rho(\bar x)$ such that for
  every $\str A\in \DDD[\cal U]$ and tuple
  $\bar a\in V(\str A)^{|\bar x|}$,
  $$\str A,\bar a\models \exists y.\psi''(\bar x,y)
  \iff \interp F(\str A),\bar a\models \rho(\bar x).$$
  
  \begin{claim}\label{cl1}For each graph $G\in\CCC$ and tuple
    $\bar a\in V(H)^{|\bar x|}$, the following conditions are
    equivalent:
    \begin{enumerate}
    \item $G,\bar a\models\exists y.\psi(\bar x,y)$,
    \item there is some $U\in \cal U_G$ containing $\bar a$ such that
      $\interp F(\interp I_\psi(G)[U]),\bar a\models \rho(\bar x)$.
    \end{enumerate}
	  
  \end{claim}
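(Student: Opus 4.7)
The plan is to chain together the three equivalences established earlier in the proof of \cref{lem:qe-formulas}: the inductive application of the lemma to $\psi$ (giving $\interp I_\psi$ and $\psi'$), the cover equivalence of \cref{lem:qf-covers} (giving $\psi''$), and the bounded-treedepth quantifier elimination of \cref{lem:qe-btd} applied to $\exists y.\psi''(\bar x,y)$ on the class $\DDD[\cal U]$ (giving $\interp F$ and $\rho$). The only nontrivial thing to verify is that one can safely swap an existential quantifier on $y$ with the existential quantifier ``there exists $U\in\cal U_G$''.

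First, I would unwind the left-hand side: by the inductive assumption $G,\bar a,b\models\psi(\bar x,y)$ iff $\interp I_\psi(G),\bar a,b\models\psi'(\bar x,y)$, and by \cref{lem:qf-covers} this in turn is equivalent to the existence of some $U\in\cal U_G$ containing $\bar a,b$ with $\interp I_\psi(G)[U],\bar a,b\models\psi''(\bar x,y)$. Hence $G,\bar a\models\exists y.\psi(\bar x,y)$ is equivalent to the existence of some $b\in V(G)$ and some $U\in\cal U_G$ with $\bar a,b\in U$ and $\interp I_\psi(G)[U],\bar a,b\models\psi''(\bar x,y)$.

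Next, I would perform the quantifier swap. In the forward direction, if such $b$ and $U$ exist, then $U\supseteq\bar a$, the element $b$ lies in $U$, and it witnesses $\exists y.\psi''(\bar x,y)$ inside the induced substructure $\interp I_\psi(G)[U]$. Conversely, if $U\in\cal U_G$ contains $\bar a$ and $\interp I_\psi(G)[U],\bar a\models\exists y.\psi''(\bar x,y)$, then any witness $b$ lies in $U$, so $\bar a,b\in U$ and the previous chain of equivalences applies backwards. Thus $G,\bar a\models\exists y.\psi(\bar x,y)$ is equivalent to the existence of $U\in\cal U_G$ containing $\bar a$ such that $\interp I_\psi(G)[U],\bar a\models\exists y.\psi''(\bar x,y)$.

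Finally, I would invoke the application of \cref{lem:qe-btd} already made before the claim: since $\interp I_\psi(G)[U]\in\DDD[\cal U]$ and $\DDD[\cal U]$ has bounded treedepth, $\interp I_\psi(G)[U],\bar a\models\exists y.\psi''(\bar x,y)$ iff $\interp F(\interp I_\psi(G)[U]),\bar a\models\rho(\bar x)$. Substituting this into the previous equivalence yields the claim. The main point to be careful with is the quantifier swap in the middle step, and in particular the observation that a witness for $y$ coming from the induced substructure on $U$ automatically lies in $U$ -- this is what makes the swap sound in both directions.
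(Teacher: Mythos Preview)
Your proposal is correct and follows essentially the same chain of equivalences as the paper's proof; the paper presents the argument as a sequence of four iff-steps, and you reproduce exactly those steps, with a bit more detail on the quantifier swap in the middle (which the paper leaves implicit).
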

  \begin{clproof}We have the following equivalences:
    \begin{align*}
      G,\bar a\models\exists y.\psi(\bar x,y) &\iff
                                                G,\bar a,b\models\psi(\bar x,y)\textit{\ for some $b\in V(G)$}\\&\iff  
%
      \interp I_\psi(G)[U],\bar a,b\models\psi''(\bar x,y)\textit{\ for some  $U\in \cal U_G$ containing $\bar a,b$}\\&\iff      
                                                                                                                        \interp I_\psi(G)[U],\bar a\models\exists y.\psi''(\bar x,y)\textit{\ for some  $U\in \cal U_G$ containing $\bar a$}\\&\iff
                                                                                                                                                                                                                                                \interp F(\interp I_\psi(G)[U]),\bar a\models\rho(\bar x)\textit{\ for some  $U\in \cal U_G$ containing $\bar a$}.
    \end{align*}
    This proves the claim.
  \end{clproof}

  Let $N=\sup\set{|\cal U_G|\colon G\in \CCC}$.  For each graph
  $G\in \CCC$, fix an enumeration $U_1,\ldots,U_N$ of the cover
  $\cal U_G$.

  \begin{claim}\label{cl2}
    There is a guarded transduction $\interp F'$ and quantifier-free
    formulas $\rho_1(\bar x),\ldots,\rho_N(\bar x)$ such that given a
    graph $G\in\CCC$, a number $i\in\set{1,\ldots,N}$ and a tuple
    $\bar a$ of elements of~$U_i$,
	
    $$\interp F'(G),\bar a\models \rho_i(\bar x)\iff 
    \interp F(\interp I_\psi (G)[U_i]),\bar a\models \rho(\bar x).  $$
  \end{claim}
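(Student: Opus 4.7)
\begin{clproof}
The plan is to build $\interp F'$ so that it performs, within a single transduction on $G$, the $N$ simulations of $\interp F$ applied to $\interp I_\psi(G)[U_i]$ in parallel, by maintaining $i$-indexed copies of every symbol produced by $\interp F$. The formula $\rho_i$ is then obtained from $\rho$ by substituting each symbol with its $i$-indexed copy; since $\rho$ is quantifier-free, no quantifier relativization is needed and $\rho_i$ remains quantifier-free.

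Concretely, I would construct $\interp F'$ in the following stages. First, precompose with $\interp I_\psi$ to obtain $\str A := \interp I_\psi(G)$, and apply $N$ unary lifts introducing predicates $U_1,\ldots,U_N$ that mark the fixed enumeration of $\cal U_G$. Second, for each symbol $S$ of the signature of $\str A$ and each $i\in [N]$, use a quantifier-free extension (resp.\ function extension) to introduce $S^{(i)}$, the ``restriction of $S$ to $U_i$'': for a relation, $S^{(i)}(\bar x) \leftrightarrow S(\bar x) \wedge \bigwedge_j U_i(x_j)$, and analogously for functions, making $S^{(i)}$ undefined as soon as any argument or value leaves $U_i$. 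After this stage, the substructure induced on $U_i$, read via the symbols $S^{(i)}$ renamed back to $S$, is literally $\str A[U_i]$. Third, replicate each atomic operation of $\interp F$ as $N$ indexed parallel copies: a parametrizing formula $\varphi$ is replaced by its $i$-relativization $\varphi^{(i)}$, defined by substituting every symbol $S$ of $\str A$ by $S^{(i)}$, every symbol produced by $\interp F$ by its $i$-indexed version, and relativizing every quantifier to $U_i$ (e.g.\ $\exists y.\chi$ becomes $\exists y.(U_i(y) \wedge \chi^{(i)})$). A unary lift of $\interp F$ is replaced by $N$ independent unary lifts, each parametrized by $\sigma$ evaluated on $\str A[U_i]$; reducts are lifted to all indexed copies.

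A straightforward induction on the nesting of operations of $\interp F$ then gives: for every formula $\varphi$ appearing inside $\interp F$ and every tuple $\bar a$ of elements of $U_i$, $\interp F'(G), \bar a \models \varphi^{(i)}$ iff $\interp F(\str A[U_i]), \bar a \models \varphi$. Setting $\rho_i := \rho^{(i)}$ yields the equivalence claimed in the statement. Guardedness of $\interp F'$ follows because $\interp I_\psi$ is guarded, the restrictions $S^{(i)}$ contribute no new Gaifman edges, and the remaining new symbols are supported on tuples from $U_i$ whose Gaifman contributions coincide with those of $\interp F(\str A[U_i])$, and therefore lie inside the Gaifman graph of $\str A[U_i]$ (as $\interp F$ is guarded) and hence inside that of $G$.

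The main technical obstacle will be the function-extension case of the induction, where the uniqueness condition ``$y$ is the unique element with $\varphi(x,y)$'' must remain equivalent after restricting candidate values to $U_i$ and relativizing $\varphi$. This works out precisely because, in $\str A[U_i]$, the candidates for $y$ are exactly the elements of $U_i$ and the defining formula is evaluated in $\str A[U_i]$, which by the inductive hypothesis is captured by $\varphi^{(i)}$ evaluated on the current output of $\interp F'$.
\end{clproof}
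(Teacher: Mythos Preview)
Your proposal is correct and follows essentially the same approach as the paper: apply $\interp I_\psi$, introduce unary predicates $U_1,\ldots,U_N$, then run $N$ indexed copies of $\interp F$ in parallel on the respective restrictions, and obtain $\rho_i$ by substituting each symbol in $\rho$ by its $i$-indexed copy. Your write-up is considerably more detailed than the paper's (which is a two-sentence sketch) --- in particular, you make explicit the restriction symbols $S^{(i)}$, the quantifier relativization to $U_i$ inside $\interp F$, and the function-extension uniqueness check, all of which the paper leaves implicit under ``applies to the structure $\interp I_\psi(G)[U_i]$ the transduction $\interp F$, modified so that each function symbol $f$ is replaced by $f^i$.''
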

  \begin{clproof}
    We construct a guarded transduction $\interp F'$ which, given a
    graph $G\in \CCC$, first applies the guarded transduction
    $\interp I_\psi$, then introduces unary predicates marking the
    sets $U_1,\ldots,U_N$, and then, for each such unary predicate
    $U_i$, applies to the structure $\interp I_\psi(G)[U_i]$ the
    transduction $\interp F$, modified so that each function symbol
    $f$ is replaced by a new function symbol $f^i$.
    
    Then the formula $\rho_i(\bar x)$ is obtained from the formula
    $\rho(\bar x)$, by replacing each function symbol $f$ by the
    function symbol $f^i$.
  \end{clproof}

Combining \cref{cl1} and \cref{cl2} we get the following equivalence:
$$\interp F'(G),\bar a\models \bigvee_{i=1}^N \rho_i(\bar x)\iff 
G,\bar a\models \phi(\bar x),$$
concluding the inductive step. This finishes the proofs of \cref{lem:qe-formulas} and \cref{pro:qe-be}.\\
\mbox{}
\end{proof}


%

\subsection{Effectivity}\label{sec:dkt}
As a side remark, we note that we can easily derive the result of
\DKT, by observing that the above proof of \cref{lem:qe-formulas} is
effective, and can be leveraged to construct a transduction
$\interp I$ which is a linear time computable function.

%
%
%
%

We say that a transduction $\interp I$ is a \emph{linear time}
transduction if there is an algorithm which, given a structure
$\str A$ as input, produces some structure
$\str B\in \interp I(\str A)$ in linear time. Here, the structure
$\str A$ is represented using the adjacency list representation, i.e.,
for a colored graph, the size of the description is linear in the sum
of the number of vertices and the number of edges in the graph.

\medskip We show the following, effective variant of
\cref{lem:qe-formulas}.
\begin{lemma}\label{lem:qe-formulas-eff}
  Let $\phi(\bar x)$ be a first-order formula and let $\CCC$ be a
  class of graphs of bounded expansion.  Then there is a guarded
  transduction $\interp I$ which adds unary function and relation
  symbols only, and a quantifier-free formula $\phi'(\bar x)$, such
  that $\phi$ is equivalent to $\interp I; \phi'$ on $\CCC$.
  Moreover, $\interp I$ is a linear time transduction.
\end{lemma}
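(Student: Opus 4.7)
The plan is to revisit the inductive proof of \cref{lem:qe-formulas} and verify that at every step the constructed transduction is a linear time transduction, in the sense stated just before \cref{lem:qe-formulas-eff}. Since linear time transductions are closed under composition (the sum of linearly many linear-time subroutines is still linear when their outputs have linear size, which is preserved because every atomic operation we use increases the size by at most a constant factor --- copying, unary lifts, function and relation extensions defined by quantifier-free formulas involving guarded unary functions, and reducts), it suffices to track effectiveness through the three levels of the proof: forests of bounded depth, structures of bounded treedepth, and finally classes with low treedepth covers (equivalently, bounded expansion).

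First I would handle the base level, \cref{lem:qe-trees0}. The transduction $\interp I_\phi$ is built from a bottom-up tree automaton running over the colored forest, which computes for each node its type with respect to $\phi$ and its ancestors. Since bottom-up tree automata on forests of bounded depth can be evaluated in linear time (the state at each node depends only on the states of its children and its own label), and the additional unary predicates marking these states together with the parent function can be installed in linear time, the constructed transduction is indeed linear time. Next I would promote this to \cref{lem:qe-btd} via \cref{lem:dfs}: given a structure of bounded treedepth, the Gaifman graph can be computed in linear time, a DFS forest of depth $2^d$ in linear time by a standard depth-first search, and the unary labels encoding the isomorphism type of each root-to-node path in linear time (each such path has constant length, so the label can be read off in constant time per node). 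Composing with the linear time transduction from \cref{lem:qe-trees0} then yields a linear time guarded transduction for \cref{lem:qe-btd}.

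The final lift to bounded expansion follows the inductive structure of \cref{lem:qe-formulas}. The only nontrivial case is the existential one, where for $\phi(\bar x) = \exists y.\psi(\bar x,y)$ the inductive hypothesis provides a linear time guarded transduction $\interp I_\psi$ and a formula $\psi'$, and we must construct a finite $c\cdot p$-cover $\cal U$ of $\CCC$ with $\CCC[\cal U]$ of bounded treedepth in linear time. Here I would invoke the known fact that low treedepth colorings (and the associated covers, via \cref{lem:ltd-lsd-covers}) of a graph from a fixed bounded expansion class can be computed in linear time --- this is the effective counterpart of \cref{thm:beltc} proved by Ne\v{s}et\v{r}il and Ossona~de~Mendez. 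Once such a cover is computed, the transduction $\interp F'$ from \cref{cl2} in the proof of \cref{lem:qe-formulas} is the composition of $\interp I_\psi$, unary lifts marking $U_1,\ldots,U_N$, and $N$ parallel copies of the linear time transduction from \cref{lem:qe-btd} applied to the induced subgraphs $\interp I_\psi(G)[U_i]$. Since $N$ is a fixed constant depending only on $\CCC$ and $\phi$, and each piece runs in linear time on an input of linear size, $\interp F'$ is linear time.

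The main obstacle is the appeal to linear time computability of low treedepth covers on bounded expansion classes; everything else is a routine bookkeeping exercise on the depth of the recursion and the constancy of the cover size $N$. Provided this linear time computation is invoked as a black box, the rest of the inductive argument goes through essentially unchanged from \cref{lem:qe-formulas}, producing at each level a transduction whose atomic operations are either unary lifts with efficiently computable parameters, quantifier-free function or relation extensions over a signature with unary functions only (evaluable in constant time per tuple by walking through the constantly many terms of the parameterizing formula), or reducts, so the overall complexity remains linear.
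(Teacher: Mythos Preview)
Your proposal is correct and follows essentially the same approach as the paper: you trace the inductive proof of \cref{lem:qe-formulas} through its three levels (bounded-depth forests via automata, bounded treedepth via DFS, bounded expansion via covers), checking that each constructed transduction runs in linear time, and you correctly identify that the only nontrivial external ingredient is the linear-time computability of low treedepth covers for bounded expansion classes, which you invoke as a black box. The paper proceeds identically, citing the linear-time computation of $p$-treedepth colorings (and their conversion to covers) for this step.
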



\begin{proof}
  To prove \cref{lem:qe-formulas-eff}, we observe that the
  transduction $\interp I$ in \cref{lem:qe-formulas} is a linear time
  transduction.  The proof follows by tracing the proof of
  \cref{lem:qe-formulas}, and observing the following.
  \begin{enumerate}
  \item In \cref{lem:qe-trees0}, the constructed transduction
    $\interp I$ is a linear time transduction. This is because the
    transduction only adds the parent function (which is clearly
    linear-time computable, given a rooted forest) and some unary
    predicates, each of which can be computed in linear time, since
    each unary predicate is produced by running a deterministic
    threshold tree automaton on the input tree.
		
  \item In \cref{lem:dfs}, the transduction $\interp T$ is a linear
    time transduction, since it amounts to running a depth-first
    search on the input graph.
		
  \item In \cref{lem:qe-btd}, the produced transduction
    $\interp J=\interp T;\interp J$ is a linear time transduction, as
    a composition of two linear time transductions.
		
  \item In the proof of \cref{lem:qe-formulas}, the nontrivial step is
    in the inductive step, in the case of an existential
    formula. 
    In this case, the constructed transduction $\interp F'$ is a
    linear time transduction, assuming $\CCC$ has bounded expansion,
    as $\interp F'$ amounts to introducing unary predicates denoting
    the elements of a cover $\cal U_G$, and applying transductions
    $\interp I_\psi$ and $\interp F$ which are linear time
    transductions, respectively, by the inductive assumption, and by
    the effective version of \cref{lem:qe-btd} discussed above.
		
    We note that if $\CCC$ has bounded expansion then for any fixed
    $p\ge 0$ there is a finite $p$-cover $\cal U$ of $\CCC$ of bounded
    treedepth such that $\cal U_G$ can be computed from a given
    $G\in \CCC$ in time $f(p)\cdot |V(G)|$, for some function $f$
    depending on $\CCC$ (the function $f$ may not be computable). To
    compute $\cal U_G$, we may first compute a $g(p)$-treepdepth
    coloring of $G$ for some function $g$ (as required in the proof of
    \cref{lem:ltd-lsd-covers}) and observe that it can be converted to
    a cover in linear time, as in the proof of
    \cref{lem:ltd-lsd-covers}. A $p$-treedepth coloring can be
    computed in linear time, cf.~\cite{POMNII, dkt, Sparsity}.
  \end{enumerate}

\end{proof}

\section{Algorithmic aspects}
\label{sec:alg}

In this section we give a preliminary result about efficient
computability of transductions on classes with structurally bounded
expansion. When we refer to the size of a structure in the algorithmic
context, we refer to its total size, i.e., the sum of its universe
size and the total sum of sizes of tuples in its relations.

Call a class \(\CCC\)
of graphs of structurally bounded expansion {\em{efficiently
    decomposable}} if there is a finite $2$-cover $\cal U$ of $\CCC$
and an algorithm that, given a graph $G\in \CCC$, in linear time
computes the cover $\cal U_G$ and for each $U\in\cal U_G$, an
SC-decomposition $S_U$ of depth at most $d$ of the graph $G[U]$, for
some constant $d$ depending only on $\CCC$. Our result is as follows.

\begin{theorem}\label{thm:algo} Suppose $\interp J$ is a deterministic
  transduction and $\CCC$ is a class of graphs that has structurally
  bounded expansion and is efficiently decomposable. Then given a
  graph $G\in \CCC$, one may compute $\interp J(G)$ in time linear in
  the size of the input plus the size of the output. \end{theorem}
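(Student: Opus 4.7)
The plan is to reduce the computation of $\interp J(G)$ on a structurally bounded expansion class to a computation of a linear-time almost quantifier-free transduction on a class of bounded expansion, for which effective quantifier elimination is already available (\cref{lem:qe-formulas-eff}). The key leverage is \cref{pro:bi-def}: it supplies transductions $\interp S$ and $\interp I$ such that $\interp S(\CCC)$ has bounded expansion and $\interp I(\interp S(G))=\{G\}$ for every $G\in\CCC$. Then $\interp J(G)$ is the unique output of the composite transduction $\interp K:=\interp S;\interp I;\interp J$ on input $G$, and the task breaks into (i) computing $\interp S(G)$ from $G$ in linear time, and (ii) computing $(\interp I;\interp J)(\interp S(G))$ in linear time in the size of $\interp S(G)$ plus the output.

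For step (i), I would trace the construction of $\interp S$ from the proof of \cref{pro:bi-def} and \cref{lem:key} and observe that, once we are handed the $2$-cover $\cal U_G$ and SC-decompositions $S_U$ of each induced subgraph $G[U]$ of bounded depth $d$, all nondeterministic choices of $\interp S$ are pinned down. Concretely, the unary lifts in $\interp S$ correspond to the predicates marking the cover together with the flip sets $W_0,\ldots,W_{d-1}$ at consecutive levels of each $S_U$; these are directly read off from the given decompositions. What remains inside $\interp S$ are copies, unary extensions producing the depth predicates $D_{i,\ell}$, quantifier-free function extensions producing the parent functions, and reducts. Each such atomic operation can be executed in linear time on a structure represented by adjacency lists, and the resulting $\interp S(G)$ has size $O(|V(G)|)$ since it belongs to a class of bounded expansion; hence, crucially, it fits in time linear in the input size of $G$ (the decompositions are assumed to be computable in linear time, which already dominates).

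For step (ii), the composition $\interp I;\interp J$ is a transduction applied to structures in $\interp S(\CCC)$, a class of bounded expansion. By \cref{pro:qe-be} it is equivalent to an almost quantifier-free transduction $\interp K'$ on $\interp S(\CCC)$, and by the effective version \cref{lem:qe-formulas-eff} the guarded transduction part of $\interp K'$ is a linear-time transduction; computing low-treedepth covers and evaluating the resulting quantifier-free formulas costs additional time linear in the intermediate structure. Since $\interp I;\interp J$ is deterministic on $\interp S(G)$ (because $\interp I(\interp S(G))=\{G\}$ and $\interp J$ is deterministic), the unique output of $\interp K'$ on $\interp S(G)$ must be $\interp J(G)$, regardless of which nondeterministic choices (arising only from the unary lifts guaranteed by \cref{lem:normal-qf}) are made during execution; any fixed choice suffices. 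Overall running time is therefore linear in $|\interp S(G)|+|\interp J(G)|=O(|V(G)|)+|\interp J(G)|$, which is subsumed by input plus output size.

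The main obstacle I anticipate is bookkeeping rather than combinatorial: verifying that every atomic operation occurring in $\interp S$ and in the normal form of $\interp K'$ can indeed be implemented in linear time on an adjacency-list representation, including the function extensions guided by the guidance systems supplied by \cref{lem:bfs} and \cref{lem:go-right}, and the copying operation (which at worst doubles the size, cf.\ \cref{lem:lex}). A subtler point is ensuring that a $p$-cover required by the proof of \cref{pro:qe-be} for $\interp I;\interp J$ can be obtained from a low-treedepth coloring of $\interp S(G)$ in linear time; this is precisely the content of the fourth item in the proof of \cref{lem:qe-formulas-eff}, which invokes the linear-time computability of $p$-treedepth colorings on bounded expansion classes. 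With these pieces in place, the theorem follows by composition of linear-time steps.
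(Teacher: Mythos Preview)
Your plan matches the paper's proof almost exactly: both use $\interp S$ and $\interp I$ from \cref{pro:bi-def}, set $\interp K=\interp I;\interp J$, observe that a member of $\interp S(G)$ is obtained in linear time by gluing the given SC-decompositions $S_U$ along their leaves, and then evaluate the deterministic transduction $\interp K$ on the bounded-expansion class $\interp S(\CCC)$.

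The one place where you diverge is step (ii). The paper finishes by directly invoking the Kazana--Segoufin enumeration theorem for first-order queries on bounded expansion classes, applied to every formula occurring in $\interp K$. You instead appeal to \cref{lem:qe-formulas-eff} to turn those formulas quantifier-free after a linear-time guarded transduction. That is a legitimate route, but the sentence ``evaluating the resulting quantifier-free formulas costs additional time linear in the intermediate structure'' hides exactly the nontrivial part: computing the \emph{output relations} of $\interp J(G)$ means \emph{enumerating} all satisfying tuples of those quantifier-free formulas, and getting this in time linear in input plus output (rather than $n^{|\bar x|}$) is precisely the content of the Kazana--Segoufin result. \cref{lem:qe-formulas-eff} by itself only gives you constant-time evaluation on a \emph{given} tuple, not output-sensitive enumeration. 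So either cite Kazana--Segoufin at that point (as the paper does), or add the missing enumeration argument for quantifier-free formulas with guarded unary functions over bounded-expansion structures.
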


We remark that instead of efficient decomposability we could assume
that the $2$-cover~$\cal U_G$ of a graph $G$ and corresponding
SC-decompositions for all $U \in \cal U_G$ is given together with $G$
as input. If only the cover is given but not the SC-decompositions, we
would obtain cubic running time because bounded shrubdepth implies
bounded cliquewidth and we can compute an approximate clique
decomposition in cubic time~\cite{oum2008approximating}. Then,
SC-decompositions of small height are definable in monadic
second-order logic, and hence they can be computed in linear time
using the result of Courcelle, Makowski and
Rotics~\cite{Courcelle2000a}.

Observe that the theorem implies that we can efficiently evaluate a
first-order sentence and enumerate all tuples satisfying a formula
$\varphi(x_1, \ldots, x_k)$ on the given input graph, since this
amounts to applying the theorem to a transduction consisting of a
single extension operation. This strengthens the analogous result of
Kazana and Segoufin~\cite{Kazana2013} for classes of bounded
expansion.

\begin{proof}[sketch] We will make use of transductions $\interp S$
  and $\interp I$ constructed in the proof of
  \cref{pro:bi-def}. Recall that $\interp S(\CCC)$ is a
  class of colored graphs of bounded expansion, $\interp I$ is
  deterministic, and $\interp I(\interp S(G))=\{G\}$ for each
  $G\in \CCC$. Observe that $\interp J$ is equivalent to
  $\interp S;\interp I;\interp J$ on~$\CCC$. Defining $\interp K$ as
  $\interp I;\interp J$, we get that
  $\interp J(G)=\interp K(\interp S(G))$ for $G\in \CCC$. Moreover,
  since~$\interp I$ is deterministic, it follows that $\interp K$ is
  deterministic.

  Let $G\in\CCC$ be an input graph. By efficient decomposability of
  $\CCC$, in linear time we can compute a cover $\cal U_G$ of $G$
  together
  with an SC-decomposition $S_U$ of depth at most~$d$ of $G[U]$, for
  $U\in\cal U_G$. Each $S_U$ is a colored tree, and by the
  construction described in the proof of \cref{pro:bi-def},
  the trees $S_U$ for $U\in\cal U_G$, glued along the leaves form a
  structure belonging to~$\interp S(G)$. As
  $\interp J(G)=\interp K(\interp S(G))$, it suffices to apply the
  enumeration result of Kazana and Segoufin for classes of bounded
  expansion~\cite{Kazana2013} to the colored graph~$\interp S(G)$ and
  to all formulas occurring in the transduction $\interp K$.
\end{proof}

\section{Conclusion}\label{sec:conclusion}

In this paper we have provided a natural combinatorial
characterization of graph classes that are first-order transductions
of bounded expansion classes of graphs.  Our characterization
parallels the known characterization of bounded expansion classes by
the existence of low treedepth decompositions, by replacing the notion
of treedepth by shrubdepth. We believe that we have thereby taken a
big step towards solving the model-checking problem for first-order
logic on classes of structurally bounded
expansion.

On the structural side we remark that transductions of bounded
expansion graph classes are just the same as transductions of classes
of structures of bounded expansion (i.e., classes whose Gaifman graphs
or whose incidence encodings have bounded expansion). On the other
hand, it remains an open question to characterize classes of
relational structures, rather than just graphs, which are
transductions of bounded expansion classes. We are lacking the
analogue of \cref{lem:key}; the problem is that within the proof we
crucially use the characterization of shrubdepth via SC-depth, which
works well for graphs but is unclear for structures of higher arity.

Finally, observe that classes of bounded expansion can be
characterized among classes with structurally bounded expansion as
those which are bi-clique free. It follows, that every monotone (i.e.,
subgraph closed) class of structurally bounded expansion has bounded
expansion. Exactly the same statement holds characterizing bounded
treedepth among bounded shrubdepth, and the second item holds for
treewidth vs cliquewidth.  In particular, for monotone graph classes
all pairs of notions collapse.

We do not know how to extend our results to nowhere dense classes of
graphs, mainly due to the fact that we do not know whether there
exists a robust quantifier-elimination procedure for these graph
classes.



\bibliographystyle{abbrv}
\bibliography{biblio-seb,bib-seb,stephan} 

\newpage

\appendix
\makeatletter
\edef\thetheorem{\expandafter\noexpand\thesection\@thmcountersep\@thmcounter{theorem}}
\makeatother
\section{Normalization lemmas for transductions}\label{app:prelims-omitted}

In this section we give proofs omitted from \cref{sec:prelims-logic}.

\begin{proof}[of \cref{lem:normal} and of \cref{lem:normal-qf}]
  We give appropriate swapping rules that allow us to arrange the
  atomic operations comprising $\interp I$ into the desired normal
  form.

  We start with putting all the unary lifts at the front of the
  sequence. Observe that whenever an atomic operation is followed by a
  unary lift, then these two operations may be appropriately swapped.
  This is straightforward for all atomic operations apart from
  copying. For this last case, observe that copying followed by a
  unary lift introducing a unary predicate~$X$ is equivalent to a
  transduction that does the following. First, using unary lifts
  introduce two auxiliary unary predicates $X_1$ and $X_2$,
  interpreted to select vertices that are supposed to be selected by
  $X$ in the original universe, respectively in the copy of the
  universe. Then perform copying. Finally, use extension and reduct
  operations to appropriately interpret $X$ and drop predicates
  $X_1,X_2$.

  Having applied the above swapping rules exhaustively, the formula is
  rewritten into the form $\interp L;\interp I'$ where $\interp I'$
  does not contain any lifts. Observe that if $\interp I$ was almost
  quantifier-free, then $\interp I'$ is deterministic almost
  quantifier-free. This proves \cref{lem:normal-qf}.

  Next, we perform swapping within $\interp I'$ so that all copying
  operations are put at the front of the sequence of atomic
  operations. Again, it suffices to show that whenever an atomic
  operation is followed by copying, then the two operations may be
  swapped. For reducts this is obvious, while for extensions and
  restrictions one should modify the formula parameterizing the
  operation in a straightforward way to work on each copy separately.
  Thus we have rewritten $\interp I$ into the form
  $\interp L;\interp C;\interp I''$ where $\interp I''$ does not use
  lifts or copying.

  Now consider $\interp I''$. It is clear that all reduct operations
  can be moved to the end of the transduction, since it does not harm
  to have more relations in the structure. Next, we move all
  restriction operations to the end (before reduct operations) by
  showing that each restriction operation can be swapped with any
  extension or function extension operation. Suppose that the
  restriction is parameterized by a unary formula $\psi$, and it is
  followed by an extension operation (normal or function), say
  parameterized by a formula $\varphi$. Then the two operations may be
  swapped provided we appropriately {\em{relativize}} $\varphi$ as
  follows: add guards to all quantifiers in $\varphi$ so that they run
  only over elements satisfying $\psi$, and for every term $\tau$ used
  in $\varphi$ add guards to check that all the intermediate elements
  obtained when evaluating $\tau$ satisfy $\psi$.

  Applying these swapping rules exhaustively rewrites $\interp I''$
  into the form $\interp I''';\interp X';\interp R$,
  where~$\interp I'''$ is a sequence of extension and function
  extension operations, $\interp X'$ is a sequence of restriction
  operations, and $\interp R$ is a sequence of reduct operations. We
  now argue that~$\interp X'$ can be replaced with a single
  restriction operation $\interp X$. It suffices to show how to do
  this for two consecutive restriction operations, say parameterized
  by $\psi_1$ and $\psi_2$, respectively. Then we may replace them by
  one restriction operation parameterized by $\psi_1\wedge\psi_2'$,
  where $\psi_2'$ is obtained from $\psi_2$ by relativizing it with
  respect to $\psi_1$ just as in the previous paragraph.

  We are left with treating the extension and function extension
  operations within $\interp I'''$. Whenever a formula $\varphi$
  parameterizing some extension or function extension operation
  within~$\interp I'''$ uses a relation symbol $R$ introduced by some
  earlier extension operation within~$\interp I'''$, say parameterized
  by formula~$\varphi'$, then replace all occurrences of $R$ in
  $\varphi$ with~$\varphi'$. Similarly, if~$\varphi$ uses some
  function $f$ that was introduced by some earlier function extension
  operation within $\interp I'''$, say using formula $\varphi'(x,y)$,
  then replace each usage of $f$ in $\varphi$ by appropriatiely
  quantifying the image using formula $\varphi'(x,y)$. Perform the
  same operations on the formula parameterizing the restriction
  operation $\interp X$.

  Having performed exhaustively the operations above, formulas
  parameterizing all atomic operations in $\interp I''';\interp X$ use
  only relations and functions that appear originally in the structure
  or were added by $\interp L;\interp C$. Hence, all extension and
  function extension operations within $\interp I'''$ which introduce
  symbols that are later dropped in $\interp R$ can be simply removed
  (together with the corresponding reduct operation). It now remains
  to observe that all atomic operations within~$\interp I'''$ commute,
  so they can be sorted: first function extensions, then (normal)
  extensions. 
\end{proof}

\section{Proof of \cref{lem:bicliques}}

In this section we prove \cref{lem:bicliques}. One implication is
easy: it is known~\cite{Ganian2012} that every class of bounded
treedepth also has bounded shrubdepth, and moreover the
bi-clique~$K_{s,s}$ has treedepth $s+1$, so every class of bounded
treedepth excludes some bi-clique.

We need to prove the reverse implication: any class of bounded
shrubdepth that moreover excludes some bi-clique has bounded
treedepth. We will use the following well-known characterization of
classes of bounded treedepth (see~\cite[Theorem 13.3]{Sparsity}).

\begin{lemma}\label{lem:td-path} 
  A class of graphs $\CCC$ has bounded
  treedepth if and only if there exists a number $d\in \N$ such that
  no graph from $\CCC$ contains a path on more than $d$ vertices as a
  subgraph. 
\end{lemma}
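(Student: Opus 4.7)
The plan is to prove both directions by a direct argument that turns the presence (or absence) of long paths into a treedepth bound. Throughout, I will use the following recursive definition of treedepth, equivalent to the one in \cref{subsec:tdsd}: a one-vertex graph has treedepth $1$; the treedepth of a disconnected graph is the maximum treedepth of its connected components; and the treedepth of a connected graph $G$ on at least two vertices is $1+\min_{v\in V(G)}\mathrm{td}(G-v)$. Equivalently, an elimination forest of depth $h$ is exactly a rooted forest on $V(G)$ of depth $h$ such that every edge of $G$ joins an ancestor with a descendant.

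For the forward direction, I would show by induction on $n$ that the path $P_n$ on $n$ vertices satisfies $\mathrm{td}(P_n)=\lceil\log_2(n+1)\rceil$: removing the middle vertex splits $P_n$ into two paths of length at most $\lfloor n/2\rfloor$, giving the upper bound; the matching lower bound follows because removing any vertex from $P_n$ leaves a path on at least $\lceil (n-1)/2\rceil$ vertices in one component, so the depth decreases by at most one while the length at least halves. Since treedepth is monotone under taking subgraphs (any elimination forest for $G$ restricts to an elimination forest for a subgraph), any $G\in\CCC$ of treedepth at most $d$ can contain no $P_n$ with $\lceil\log_2(n+1)\rceil>d$, hence no path on more than $2^d-1$ vertices.

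For the reverse direction, assume no graph in $\CCC$ contains a path on more than $d$ vertices as a subgraph, and fix $G\in\CCC$. On each connected component run a depth-first search from an arbitrary root, obtaining a rooted spanning forest $F$ of $G$. A standard property of DFS is that every non-tree edge of $G$ joins an ancestor with a descendant in $F$, so $F$ is an elimination forest of $G$. Moreover, any root-to-leaf path in $F$ is a simple path of $G$ whose number of vertices equals the depth plus one of that leaf; hence the depth of $F$ is at most $d$. This gives $\mathrm{td}(G)\le d$ uniformly over $\CCC$.

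The only step that requires genuine attention is the forward direction's lower bound for $\mathrm{td}(P_n)$, which one must prove carefully to avoid an off-by-one: if we remove a vertex at position $k$ from $P_n$, the two remaining subpaths have sizes $k-1$ and $n-k$, so the larger has at least $\lceil(n-1)/2\rceil$ vertices, and by induction its treedepth is at least $\lceil\log_2(\lceil(n-1)/2\rceil+1)\rceil=\lceil\log_2(n+1)\rceil-1$. Everything else is routine; the DFS argument in particular gives the clean conclusion that excluding $P_{d+1}$ as a subgraph forces treedepth at most $d$, so the two sides of the equivalence are linked by a tight relationship and the lemma follows.
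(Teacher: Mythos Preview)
Your argument is correct and is the standard proof of this well-known fact; the paper itself does not give a proof but simply cites it from~\cite{Sparsity}. One cosmetic note: your recursive definition assigns treedepth~$1$ to a single vertex, whereas the paper's depth-based definition in \cref{subsec:tdsd} (depth of a node $=$ distance to its root) gives a single vertex treedepth~$0$, so the two conventions differ by one and are not literally equivalent as you claim---but since the statement concerns \emph{bounded} treedepth, this is harmless and all of your bounds go through.
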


By \cref{lem:td-path} and \cref{prop:sd_properties}(\ref{SD:3}), to
prove \cref{lem:bicliques} it is sufficient to prove the following.

\begin{lemma} 
  There exists a function
  $g\colon \N\times\N\times \N\to \N$ such that the following holds.
  For all integers $h,m,s\in \N$, if a graph $G$ does not contain the
  bi-clique $K_{s,s}$ as a subgraph and admits a connection model of
  height at most $h$ using at most $m$ labels, then $G$ does not
  contain any path on more than $g(h,m,s)$ vertices as a subgraph.
\end{lemma}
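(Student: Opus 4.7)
The plan is to proceed by induction on the height $h$ of the connection model. The base case $h=0$ is trivial, as $G$ then consists of a single leaf, so any path has at most one vertex; take $g(0,m,s) := 1$. For the inductive step, let $T$ be a connection model of $G$ of height $h$ with root $r$, children $v_1,\ldots,v_k$, and induced subgraphs $G_1,\ldots,G_k$ on the leaves of the respective subtrees of $T$. Each $G_i$ admits a connection model of height at most $h-1$ using at most $m$ labels and, being an induced subgraph of $G$, is also $K_{s,s}$-free; so by the inductive hypothesis any path in $G_i$ contains at most $g' := g(h-1,m,s)$ vertices.

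Given a path $P$ in $G$, I would partition its vertex sequence into maximal \emph{runs} of consecutive vertices lying in a single subtree $G_i$. Each run is a path in its corresponding $G_i$, and therefore has at most $g'$ vertices; writing $\ell$ for the number of runs, this gives $|V(P)| \le \ell \cdot g'$, so the task reduces to bounding $\ell$ by a function of $h$, $m$, and $s$. Each transition between two consecutive runs is a cross-edge in $G$ whose endpoints lie in different subtrees and whose pair of labels lies in $C(r)$.

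The main obstacle is bounding $\ell$ using $K_{s,s}$-freeness, because $k$ itself is not bounded. The key structural observation is: for each pair $(\lambda,\lambda')\in C(r)$, no two distinct subtrees $G_i, G_j$ can both satisfy $n_i^\lambda \ge s$ and $n_j^{\lambda'} \ge s$, since this would yield a $K_{s,s}$ across $G_i$ and $G_j$. Consequently, at most a bounded number (at most $m$) of \emph{heavy} subtrees can exist (those containing $\ge s$ vertices of some label participating in $C(r)$), and in every other \emph{light} subtree each active label contributes fewer than $s$ vertices, for a total of at most $s\cdot m$ active-label vertices per light subtree. A double pigeonhole—first on the at most $m^2$ possible label types of cross-edges, then on subtree pairs—together with the fact that any pair of distinct subtrees can carry at most $2s$ cross-edges of a fixed type (else the $s$ distinct $\lambda$-endpoints on one side and $\lambda'$-endpoints on the other would form $K_{s,s}$) and the fact that each vertex of $P$ has path-degree at most $2$, would allow me to bound both the number of visits to each heavy subtree and the number of distinct light subtrees visited, yielding a bound $L(h,m,s)$ on $\ell$. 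Setting $g(h,m,s) := L(h,m,s)\cdot g(h-1,m,s)$ then completes the induction; the delicate combinatorial step is the simultaneous control of revisits to the few heavy subtrees and of how many light subtrees the path can enter, which is where the constraints from $K_{s,s}$-freeness must be used most carefully.
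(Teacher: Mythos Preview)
Your overall framework matches the paper's proof exactly: induction on $h$, partition the path into maximal runs (the paper calls them \emph{blocks}) confined to one subtree, bound each run by the inductive hypothesis, and reduce to bounding the number $\ell$ of runs. The difficulty, as you recognise, lies entirely in that last bound.

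The gap is in your ``key structural observation'' step. The observation itself is correct, but the conclusion ``at most $m$ heavy subtrees'' does not follow. Take $m=2$ with labels $\{\lambda,\lambda'\}$, $C(r)=\{(\lambda,\lambda'),(\lambda',\lambda)\}$, and arbitrarily many subtrees each containing $s$ vertices of label $\lambda$ and none of label $\lambda'$: every subtree is heavy, yet there is no $K_{s,s}$. More generally, when $(\lambda,\lambda')\in C(r)$ with $\lambda\neq\lambda'$, your observation forbids having a $\lambda$-heavy subtree and a \emph{different} $\lambda'$-heavy subtree simultaneously, but it places no bound on the number of $\lambda$-heavy subtrees when no subtree is $\lambda'$-heavy. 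Because the heavy/light dichotomy collapses, the subsequent ``double pigeonhole'' sketch --- which is already quite vague about how revisits to heavy subtrees and the count of light subtrees are to be controlled --- has nothing to stand on.

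The paper avoids this difficulty altogether. It fixes the \emph{signature} of a transition (the ordered pair of labels at the last vertex of a block and the first vertex of the next) and proves directly that at most $4(s-1)$ transitions can share a given signature: randomly $2$-colour the subtrees, observe that each such transition is ``split'' with probability~$1/4$, so more than $4(s-1)$ transitions of one signature would yield in expectation, and hence for some colouring, at least $s$ split transitions; their $s$ white endpoints and $s$ black endpoints then form a $K_{s,s}$, since they are pairwise unrelated across colours and adjacency is governed by the fixed label pair. With $m^2$ signatures this gives $\ell\le 4(s-1)m^2+1$. The probabilistic trick is what lets the paper sidestep any heavy/light accounting of subtrees.
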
 
\begin{proof}
  We proceed by induction on the height $h$. For $h=0$, only one-vertex
  graphs admit a connection model of height $0$, so we may set
  $g(0,m,s)=1$.

  For the induction step, suppose $G$ does not contain $K_{s,s}$ as a
  subgraph and admits a connection model $T$ of height $h\geq 1$ and
  using $m$ labels. Call two vertices $u$ and $v$ of $G$
  {\em{related}} if they are contained in the same subtree of $T$
  rooted at a child of the root of $G$, and {\em{unrelated}}
  otherwise. Whenever $u$ and $v$ are unrelated, their least common
  ancestor is the root of $T$, so whether they are adjacent depends
  solely on the pair of their labels.

  Let $P=(v_1,\ldots,v_p)$ be a path in $G$. A {\em{block}} on $P$ is
  a maximal contiguous subpath of~$P$ consisting of vertices that are
  pairwise related. Thus, $P$ breaks into blocks $B_1,\ldots,B_q$,
  appearing on $P$ in this order. Note that each block $B_i$ is a path
  that is completely contained in an induced subgraph of $G$ that
  admits a connection model of height $h-1$ and using~$m$ labels.
  Hence, by the induction hypothesis we have that each block $B_i$ has
  at most $g(h-1,m,s)$ vertices.

  For a non-last block $B_i$ (i.e. $i\leq q$), define the
  {\em{signature}} of $B_i$ as the pair of labels of the following two
  vertices: the last vertex of $B_i$ and of its successor on $P$, that
  is, the first vertex of $B_{i+1}$. The following claim is the key
  point of the proof.

  \begin{claim}\label{cl:single-sig} 
    For any signature, the number of
    non-last blocks with this signature is at most $4(s-1)$.
  \end{claim}
  \begin{proof}
\renewcommand{\qedsymbol}{\ensuremath{\lrcorner}}
  Let $\sigma=(\lambda_1,\lambda_2)$ be the signature in question and
  let $\mathcal{B}$ be the set of blocks with signature~$\sigma$;
  suppose for the sake of contradiction that $|\mathcal{B}|>4(s-1)$.
  Consider the following random experiment: independently color each
  subtree of $T$ rooted at a child of the root black or white, each
  with probability $1/2$. Call a block $B_i\in \mathcal{B}$
  {\em{split}} if the last vertex of~$B_i$ is white and the first
  vertex of $B_{i+1}$ is black. Since these two vertices are unrelated
  (by the maximality of~$B_i$), each block $B_i$ is split with
  probability~$1/4$, implying that the expected number of split blocks
  is $|\mathcal{B}|/4>s-1$. Hence, some run of the experiment yields a
  white/black coloring of subtrees rooted at children of the root of
  $T$ and a set $\mathcal{S}\subseteq \mathcal{B}$ of $s$ blocks that
  are split in this coloring.

  Let $u_1,\ldots,u_s$ be the last vertices of blocks from
  $\mathcal{S}$ and $v_1,\ldots,v_s$ be their successors on the path
  $P$, respectively. By assumption, all vertices $u_i$ have label
  $\lambda_1$ and all vertices~$v_i$ have label $\lambda_2$. Further,
  all vertices $u_i$ are white and all vertices $v_i$ are black,
  implying that~$u_i$ and~$v_j$ are unrelated for all $i,j\in [s]$.
  Since $u_i$ is unrelated and adjacent to $v_i$, it follows
  that~$u_i$ is adjacent to all vertices $v_j$, $j\in [s]$, as these
  vertices are also unrelated to $u_i$ and have the same label as
  $v_j$. We conclude that $u_1,\ldots,u_s$ and $v_1,\ldots,v_s$ form a
  bi-clique $K_{s,s}$ in $G$, a contradiction. 
\end{proof}

Since the number of possible signatures is $m^2$, by
\cref{cl:single-sig} we infer that the total number of blocks is at
most $4(s-1)m^2+1$. As we argued, each block has at most $g(h-1,m,s)$
vertices, implying $p\leq (4(s-1)m^2+1)\cdot g(h-1,m,s)$. As $P$ was
chosen arbitrarily, we may set
$$g(h,m,s)\coloneqq (4(s-1)m^2+1)\cdot g(h-1,m,s).$$
This concludes the inductive proof.
\end{proof}

\section{Proof of \cref{lem:ltd-lsd-covers}}

\begin{proof}[of \cref{lem:ltd-lsd-covers}]
  We will prove that a graph class $\CCC$ has low treedepth colorings
  if and only if it has low treedepth covers. The result then follows
  from \cref{thm:beltc}.
  
  We start with the left-to-right direction. Assume $\CCC$ has low
  treedepth colorings.  Then for every graph $G\in \CCC$ and $p\in \N$
  we may find a vertex coloring $\gamma\from V(G)\to [N]$ using~$N$
  colors where every $i \le p$ color classes induce in $G$ a subgraph
  of treedepth at most~$i$; here, $N$ depends only on $p$ and $\CCC$.
  Assuming without loss of generality that $N\ge p$, define a
  $p$-cover~$\cal U_G$ of size at most $\binom{N}{p}$ as follows:
  $\cal U_G=\set{\gamma^{-1}(X)\colon X\subset [N], |X|=p}$.
  Then~$\cal U=(\cal U_G)_{G\in \CCC}$ is a finite $p$-cover of $\CCC$
  of bounded treedepth.
  
  Conversely, suppose that every graph $G\in \CCC$ admits a $p$-cover
  $\cal U_G$ of size $N$ where $G[U]$ has treedepth at most $d$ for
  each $U\in \cal U_G$; here, $N$ and $d$ depend only on $p$
  and~$\CCC$.  Define a coloring $\chi\colon V(G)\to \cal P(\cal U_G)$
  as follows: for $v\in V(G)$, let $\chi(v)$ be the set of those
  $U\in \cal U_G$ for which $v\in U$.  Thus, $\chi$ is a coloring of
  $V(G)$ with $2^N$ colors.  Take any $p$ subsets
  $X_1,\ldots,X_p\subseteq \cal U_G$ such that
  $\chi^{-1}(X_i)\neq \emptyset$ for each $i\in [p]$. Arbitrarily
  choose any $x_i\in \chi^{-1}(X_i)$.  Since $\cal U_G$ is a $p$-cover
  of $G$, there exists $U\in \cal U_G$ such that
  $\{x_1,\ldots,x_p\}\subseteq U$.  Consequently, for each $i\in [p]$
  we have that $U\in X_i$, implying $\chi^{-1}(X_i)\subseteq U$.
  Hence $G[\chi^{-1}(\{X_1,\ldots,X_p\})]$ is an induced subgraph of
  $G[U]$, whereas the latter graph has treedepth at most $d$ by the
  assumed properties of $\cal U_G$.  We conclude that every $p$ color
  classes in $\chi$ induce a subgraph of treedepth at most $d$.
  
  It remains to refine this coloring so that we in fact obtain a
  coloring such that every at most $i\leq p$ color classes induce a
  subgraph of treedepth at most $i$.  As every~$p$ color classes
  in~$\chi$ induce a subgraph of treedepth at most $d$, we can fix for
  every~$p$ color classes~$I$ of $\chi$ a treedepth decomposition
  $Y_I$ of height at most $d$. We define the coloring $\xi$ such that
  every vertex~$v$ gets the color $\{(I, h_I) :$ $I$ is a subset of
  $p$ color classes containing $v$ and $h_I$ is the depth of $v$ in
  the decomposition $Y_I\}$. Note that since the number of colors of
  $\chi$ is finite, the number of colors used by $\xi$ is also finite.
  
  We now prove that in the refined coloring, any $i\leq p$ colors in
  $\xi$ have treedepth at most $i$. Fix any $i\leq p$ colors in $\xi$
  and denote the tuple of colors by $J$. As $\xi$ is a refinement of
  $\chi$, there exists a tuple $I$ of at most $p$ colors in $\chi$
  which contains all vertices of $G[J]$. Furthermore, the~$i$ selected
  colors of $J$ are contained in $i$ levels of the treedepth
  decomposition $Y_I$.  Taking the restriction of these $i$ levels
  yields a forest of height at most $i$, which is a witness that
  $G[J]$ has treedepth at most $i$.
\end{proof}

\section{Proofs of \cref{sec:def-ccs}}\label{app:connectome}

In this section we present the missing proofs of \cref{sec:def-ccs} as
well as a second proof for \cref{lem:go-right}.


\subsection{Guided and guidable functions}\label{sec:directable}


%

\begin{proof}[of \cref{lem:glue-directed}]
  For each connected component $C$ of $G$ we may find a guidance
  system $\Uu^C=\{U^C_1,\ldots,U^C_\ell\}$ that guides $g|_C$.  Since
  $g|_C$ is undefined for vertices outside of $C$, we may assume that
  $U^C_i\subseteq V(C)$ for each $i\in [\ell]$.  It follows that $g$
  is guided by the guidance system $\Uu=\{U_1,\ldots,U_\ell\}$ defined
  by setting $U_i$ to be the union of $U^C_i$ throughout connected
  components $C$ of $G$.
\end{proof}

\begin{proof}[of \cref{lem:glue-directable}]
  Let $\Uu_i$ be a guidance system of size at most $\ell$ that such
  that~$g_i$ is guided by~$\Uu_i$.  Then $\Uu=\bigcup_{i=1}^s \Uu_i$
  is a guidance system of size at most $\ell\cdot s$. It is easy to
  see that $\Uu$ guides the partial function $g$.
\end{proof}

%
%

\begin{proof}[of \cref{lem:local}]
  Let $\Uu$ be a guidance system of size at most $\ell$ such that
  $f_G$ is guided by $\Uu$.  For each vertex $x$ such that $f(x)$ is a
  neighbor of $x$, pick an arbitrary set $V(x)\in\Uu$ such that $f(x)$
  is the unique neighbor of $x$ in $V(x)$.

  We now present an almost quantifier-free transduction that
  constructs $f_G$.  First, for each $U\in \Uu$ use a unary lift to
  introduce a unary predicate that selects the vertices of $U$.  Next,
  introduce two unary predicates, $\textit{Null}$ and $\textit{Self}$,
  which select the vertices~$x$ such that~$f(x)$ is undefined or
  $f(x)=x$, respectively.  Finally, for each $V\in\Uu$ introduce a
  unary predicate ${G_V}$ that selects vertices $x$ with $V(x)=V$.
  Now, for each $U\in \Uu$, construct the partial function~$d_U$ which
  maps every vertex $x$ to its unique neighbor in $U$ (if it exists)
  using the function extension operation parameterized by the formula
  $E(x,y)\land U(y)$.  Finally, construct $f_G$ using the function
  extension operation parameterized by the formula $\alpha(x,y)$
  stating that $x\not\in\textit{Null}$ and either $x\in\textit{Self}$
  and $y=x$, or $x\in G_V$ and $y=d_U(x)$.
\end{proof}

\subsection{Greedy proof of \cref{lem:go-right}}\label{app:greedy}

We now present the second proof of \cref{lem:go-right}.  As asserted
by \cref{lem:ind-paths}, graphs from a fixed class of bounded
shrubdepth do not admit arbitrarily long induced paths.  We need a
strengthening of this statement: classes of bounded shrubdepth also
exclude induced structures that roughly resemble paths, as made
precise next.

\begin{definition}
  Let $G$ be a graph. A {\em{quasi-path}} of length $\ell$ in $G$ is a
  sequence of vertices $(u_1,u_2,\ldots,u_\ell)$ satisfying the
  following conditions:
  \begin{itemize}
  \item $u_iu_{i+1}\in E(G)$ for all $i\in [\ell-1]$; and
  \item for every odd $i\in [\ell]$ and even $j\in [\ell]$ with
    $j>i+1$, we have $u_iu_j\notin E(G)$.
  \end{itemize}
\end{definition}

Note that in a quasi-path we do not restrict in any way the
adjacencies between $u_i$ and~$u_j$ when $i,j$ have the same parity,
or even when $i$ is odd and $j$ is even but $j<i-1$.  We now prove
that classes of bounded shrubdepth do not admit long quasi-paths; note
that since an induced path is also a quasi-path, the following lemma
actually implies \cref{lem:ind-paths}.

\begin{lemma}\label{lem:exc-quasi}
  For every class $\CCC$ of graphs of bounded shrubdepth there exists
  a constant $q\in \N$ such that no graph from $\CCC$ contains a
  quasi-path of length $q$.
\end{lemma}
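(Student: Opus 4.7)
The plan is to prove \cref{lem:exc-quasi} by induction on the height $h$ of the connection model for graphs in $\CCC$. Fixing a label set $\Lambda$ of size $m$ such that every $G \in \CCC$ admits a connection model of height at most $h$ over $\Lambda$, I will establish a recursive bound $q(h,m) \leq O(m^2)\cdot(q(h-1,m)+1)$ with base case $q(0,m)=1$, which iterates to a uniform constant depending only on $\CCC$.

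For the inductive step, take a graph $G$ with connection model $T$ of height $h$; let $T_1,\ldots,T_k$ be the subtrees rooted at the children of the root $r$ of $T$, with $G_j$ the corresponding induced subgraphs. Call two vertices of $G$ \emph{related} if they lie in the same $T_j$; adjacency of unrelated pairs is determined solely by their labels and the symmetric root relation $C(r)$. Given a quasi-path $P=(u_1,\ldots,u_\ell)$, partition it into maximal contiguous blocks $B_1,\ldots,B_q$ of pairwise-related vertices. Each block $B_s=(u_a,\ldots,u_b)$ is contained in some $G_j$, which admits a connection model of height $h-1$; moreover the subsequence of $B_s$ starting at the first vertex whose original position is odd (so one discards at most a single vertex) is itself a quasi-path in its own numbering, since starting at an odd original position preserves all parities. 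The inductive hypothesis thus yields $|B_s|\leq q(h-1,m)+1$.

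The core step is bounding the number of blocks $q$ by a Ramsey-style argument modelled on the proof of \cref{lem:bicliques}. For each transition index $s$ let $v_s$ be the last vertex of $B_s$ and $w_s$ the first vertex of $B_{s+1}$; by maximality of blocks these are adjacent and unrelated. Split transitions according to the parity of the position $b_s$ of $v_s$. For transitions with $b_s$ odd, assign each $T_j$ uniformly at random to one of two colours \emph{white} or \emph{black}; a transition is \emph{good} if $v_s$ lies in a white subtree and $w_s$ in a black one, which happens with probability $\tfrac14$. Hence some colouring realizes at least $N_{\mathrm{odd}}/4$ good transitions. Pigeonhole over the $m^2$ possible signatures $(\mathrm{label}(v_s),\mathrm{label}(w_s))$ produces two good transitions $s<s'$ sharing signature $(\alpha,\beta)$. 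Then $v_s$ and $w_{s'}$ lie in differently coloured subtrees, hence are unrelated, and the edge $v_sw_s\in E(G)$ witnesses $(\alpha,\beta)\in C(r)$, forcing $v_sw_{s'}\in E(G)$; yet the quasi-path property forbids this edge, since $v_s$ sits at odd position $b_s$ and $w_{s'}$ at even position $b_{s'}+1>b_s+1$. This yields the contradiction and the bound $N_{\mathrm{odd}}\leq 4m^2$. A symmetric argument using the edge $w_sv_{s'}$ and the symmetry of $C(r)$ handles the $b_s$-even case, after additionally restricting to transition pairs at index distance at least $3$ so that the required position gap $b_{s'}>b_s+2$ is met; this loses only a further constant factor, giving $N_{\mathrm{even}}\leq 12m^2$. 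Hence $q\leq 16m^2+1$, which closes the recursion.

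The main obstacle is the careful parity bookkeeping demanded by the asymmetry of the quasi-path property, which only forbids odd-to-later-even edges. This forces a parity-based case split of transitions, the ``$+1$'' inside each block accounting for a possibly discarded first vertex when applying the induction, and the extra index-distance constraint in the $b_s$-even case needed to unlock the quasi-path inequality $j>i+1$. With this care the Ramsey step of \cref{lem:bicliques} adapts cleanly; without it the forbidden edge fails to arise at the right parity.
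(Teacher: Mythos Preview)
Your argument is correct. The overall scaffolding---induction on the height of the connection model, partitioning the quasi-path into maximal blocks of pairwise related vertices, bounding each block by the inductive hypothesis after dropping at most one leading vertex---coincides with the paper's proof. Where you diverge is in bounding the number of blocks. The paper defines the signature of a non-last block as the triple (parity of the index of its last vertex, label of that vertex, label of its successor) and proves deterministically that no seven blocks can share a signature: from three well-spaced blocks with the same signature it forces, via transitivity of the ``related'' equivalence relation, that a block boundary pair $u_b,u_{b+1}$ is related, contradicting maximality. You instead adapt the probabilistic colouring argument from \cref{lem:bicliques}: a random two-colouring of the root's subtrees makes each transition ``good'' with probability $1/4$, and two good transitions sharing label-signature yield a forbidden edge directly. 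The trade-off is that the paper's transitivity trick handles both parities uniformly and with a slightly better constant, whereas your approach is arguably more mechanical but forces the odd/even case split and the extra index-spacing in the even case (which you handle correctly, e.g.\ by thinning the even transitions to every third one so that any two chosen have block-index gap at least $3$, hence $b_{s'}\geq b_s+4$).
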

\begin{proof}
  It suffices to prove the following claim.

\begin{claim}\label{cl:conn-mod-quasi}
  There exists a function $f\colon \N\times\N\to \N$ such that no
  graph admitting a connection model of height $h$ and using $m$
  labels contains a quasi-path of length larger than $f(h,m)$.
\end{claim}

The proof is by induction on $h$. Observe first that graphs admitting
a connection model of height $0$ are exactly graphs with one vertex,
hence we may set $g=f(0,m)=1$ for all $m\in \N$.

We now move to the induction step. Assume $G$ admits a connection
model $T$ of height $h\geq 1$ where $\lambda\colon V(G)\to \Lambda$ is
the corresponding labeling of $V(G)$ with a set $\Lambda$ consisting
of $m$ labels. Call two vertices $u,v\in V(G)$ {\em{related}} if in
$T$ they are contained in the same subtree rooted at a child of the
root of $T$; obviously this is an equivalence relation.  The least
common ancestor of two unrelated vertices is always the root of $T$,
hence for any two unrelated vertices $u,v$, whether $u$ and $v$ are
adjacent depends only on the label of $u$ and the label of $v$.

Now suppose $G$ admits a quasi-path $Q=(u_1,\ldots,u_\ell)$. A
{\em{block}} in $Q$ is a maximal contiguous subsequence of $Q$
consisting of pairwise related vertices. Thus $Q$ is partitioned into
blocks, say $B_1,\ldots,B_p$ appearing in this order on $Q$. Observe
that every block~$B_i$ either is a quasi-path itself or becomes a
quasi-path after removing its first vertex.  Since vertices of~$B_i$
are pairwise related, they are contained in an induced subgraph of $G$
that admits a tree model of height $h-1$ and using $m$ labels,
implying by the induction hypothesis that
\begin{equation}\label{eq:block-length}
  \textrm{every block has length at most $f(h-1,m)+1$.}
\end{equation}
Next, for every non-last block $B_i$ (i.e. $i\neq p$), let the
{\em{signature}} of $B_i$ be the following triple:
\begin{itemize}
\item the parity of the index of the last vertex of $B_i$,
\item the label of the last vertex of $B_i$, and
\item the label of its successor on $Q$, that is, the first vertex of
  $B_{i+1}$.
\end{itemize}
The next claim is the key step in the proof.

\begin{claim}\label{cl:seven-blocks}
There are no seven non-last blocks with the same signature.
\end{claim}
\begin{proof}\renewcommand{\qedsymbol}{\ensuremath{\lrcorner}}
  Supposing for the sake of contradiction that such seven non-last
  blocks exist, by taking the first, the fourth, and the seventh of
  them we find three non-last blocks $B_i,B_j,B_k$ with sames
  signature such that $1\leq i<j<k<p$ and $j-i>2$ and $k-j>2$.  Let
  $1\leq a<b<c<\ell$ be the indices on $Q$ of the last vertices of
  $B_i,B_j,B_k$, respectively.  By the assumption,
  $\lambda(u_a)=\lambda(u_b)=\lambda(u_c)$,
  $\lambda(u_{a+1})=\lambda(u_{b+1})=\lambda(u_{c+1})$, and $a,b,c$
  have the same parity.  Suppose for now that $a,b,c$ are all even;
  the second case will be analogous.  Further, the assumptions $j-i>2$
  and $k-j>2$ entail $b>a+2$ and $c>b+2$.

Observe that $u_{a+1}$ and~$u_b$ have to be related. Indeed, $u_{a}$
has the same label as~$u_{b}$, while it is unrelated and adjacent to
$u_{a+1}$.  So if $u_{a+1}$ and $u_b$ were unrelated, then they would
be adjacent as well, but this is a contradiction because $a+1$ is odd,
$b$ is even, and $a+2<b$.  Similarly $u_a$ and $u_{c+1}$ are related
and $u_b$ and $u_{c+1}$ are related. By transitivity we find
that~$u_b$ and $u_{b+1}$ are related, a contradiction.

The case when $a,b,c$ are all odd is analogous: we similarly find that
$u_a$ is related to~$u_{b+1}$, $u_a$ is related to $u_{c+1}$, and
$u_b$ is related to $u_{c+1}$, implying that $u_b$ is related to
$u_{b+1}$, a contradiction. This concludes the proof.
\end{proof}

Since there are $2m^2$ different signatures, \cref{cl:seven-blocks}
implies that
\begin{equation}\label{eq:block-number}
\textrm{the number of blocks is at most $12m^2+1$.}
\end{equation}
Assertions Equation~\ref{eq:block-length} and
Equation~\ref{eq:block-number} together imply that
$\ell\leq (f(h-1,m)+1)(12m^2+1)$.  As $Q$ was chosen arbitrarily, we
may set
$$f(h,m)\coloneqq (f(h-1,m)+1)\cdot (12m^2+1).$$
This concludes the proof of \cref{cl:conn-mod-quasi} and of
\cref{lem:exc-quasi}.
\end{proof}

Now \cref{lem:go-right} immediately follows from the following
(essentially reformulated) statement.

\begin{lemma}\label{lem:unambigious-neighbor}
  For every class $\CCC$ of graphs of bounded shrubdepth there exists
  a constant $p\in \N$ such that the following holds.  Suppose
  $G\in \CCC$ and $A$ and $B$ are two disjoint subsets of vertices of
  $G$ such that every vertex of $A$ has a neighbor in $B$.  Then there
  exist subsets $B_1,\ldots,B_p\subseteq B$ with the following
  property: for every vertex $v\in A$ there exists $i\in [p]$ such
  that $v$ has exactly one neighbor in $B_i$.
\end{lemma}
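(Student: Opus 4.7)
The plan is to prove \cref{lem:unambigious-neighbor} by a greedy iterative construction, with the bound on $p$ coming from \cref{lem:exc-quasi}. Let $q$ be the quasi-path bound for the class $\CCC$ supplied by \cref{lem:exc-quasi}; the goal is to show that $p$ can be taken to depend only on $q$. The procedure constructs the subsets $B_1,B_2,\ldots$ one at a time, maintaining the set $A^{(k)} \subseteq A$ of vertices not yet covered by any previously chosen $B_1,\ldots,B_{k-1}$. If $A^{(k)}$ is empty we are done; otherwise we pick a pivot $v_k \in A^{(k)}$, a witness $w_k \in N(v_k) \cap B$, and we let $B_k$ be a subset of $B$ that isolates $w_k$ as the unique neighbor of $v_k$ — for example the singleton $\{w_k\}$, which additionally covers every vertex of $A$ whose neighborhood contains $w_k$.

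The key step is to bound the number of iterations. Suppose the procedure runs for more than $q$ steps, producing witness pairs $(v_1,w_1),(v_2,w_2),\ldots,(v_k,w_k)$ with $k>q$. By the uncovered condition, for every $i<j$ the vertex $v_j$ is not adjacent to $w_i$; equivalently, after reversing the order of the pairs, we obtain a sequence in which later witnesses avoid earlier pivots, which is exactly the non-adjacency pattern demanded by the definition of a quasi-path (odd earlier positions not adjacent to even later positions). The path edges come from the choice $v_k \sim w_k$, and, when the greedy is refined to also require $w_k \in N(v_{k-1})$ whenever possible, from the extra relation $v_{k-1} \sim w_k$. Assembled in the right order, these edges and non-edges yield a quasi-path of length $> q$ in $G$, contradicting \cref{lem:exc-quasi} and forcing the procedure to terminate in at most $\approx q$ steps.

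\paragraph*{Main obstacle.}
The technical heart of the proof is handling the case in which the refined greedy cannot extend the path, i.e.\ when at some step $k$ one has $N(v_{k-1}) \cap N(v_k) \cap B = \emptyset$ so that no common witness exists. In this situation the naive path construction stalls, and one must instead exploit the disjointness: the fact that $v_k$'s neighborhood in $B$ lies entirely outside $N(v_{k-1})$ means that the set $B \setminus N(v_{k-1})$ (or a similarly structured subset) can be used as one of the $B_k$'s to cover $v_k$ and potentially many other vertices at once, thereby making net progress even without contributing to a quasi-path. Organizing this interplay — alternating between ``good'' steps that append to a hypothetical quasi-path and ``structural'' steps that cover several uncovered vertices in bulk — so that after $O(q)$ total steps either every vertex of $A$ is covered or a quasi-path of length $>q$ has been produced, is the main combinatorial bookkeeping to carry out.
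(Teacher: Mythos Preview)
Your high-level strategy --- a greedy construction of the $B_i$ whose length is bounded via the quasi-path exclusion of \cref{lem:exc-quasi} --- is exactly the paper's. The gap is in your specific greedy and in the fallback you propose for the obstacle. Taking $B_k=\{w_k\}$ and asking additionally for $w_k\in N(v_{k-1})$ may fail, as you note; but your fallback of declaring $B_k=B\setminus N(v_{k-1})$ when $N(v_k)\cap N(v_{k-1})\cap B=\emptyset$ does \emph{not} cover $v_k$: in that situation $N(v_k)\cap B\subseteq B\setminus N(v_{k-1})$, so $v_k$ has exactly as many neighbors in $B\setminus N(v_{k-1})$ as it has in $B$, which may well be more than one. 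Hence the ``structural'' step makes no guaranteed progress, and the alternation you sketch between path-extending and structural steps cannot be organized to terminate in $O(q)$ rounds with these choices.

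The paper's greedy is different and sidesteps the obstacle entirely. In each round it first performs a \emph{$B$-reduction} (repeatedly delete from the current $B$ every vertex that is not the unique $B$-neighbor of some current $A$-vertex), and then an \emph{$A$-reduction} (remove all $A$-vertices that now have exactly one $B$-neighbor). Writing $B_i$ for $B$ after round $i$ and $A_i$ for the $A$-vertices removed in round $i$, one gets a nested chain $B_1\supseteq B_2\supseteq\cdots$, and the three properties: every $v\in A_i$ has a unique neighbor in $B_i$; every $v\in A_i$ (for $i\ge 2$) has at least two neighbors in $B_{i-1}$; and every $u\in B_i$ has a neighbor in $A_i$. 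These let one trace back with no case split: from any $v_\ell\in A_\ell$ pick its neighbor $u_\ell\in B_\ell$ and a second neighbor $u_{\ell-1}\in B_{\ell-1}\setminus B_\ell$; pick $v_{\ell-1}\in A_{\ell-1}$ adjacent to $u_{\ell-1}$ (it exists by the third property, and uniqueness in $B_{\ell-1}$ forces $v_{\ell-1}\not\sim u_\ell$); then $v_{\ell-1}$ has a further neighbor $u_{\ell-2}\in B_{\ell-2}\setminus B_{\ell-1}$; and so on. The nestedness is precisely what makes the non-adjacencies $v_i\not\sim u_j$ for $i<j$ automatic, because $u_j\in B_j\subseteq B_i$ while $v_i$'s sole $B_i$-neighbor is $u_i\neq u_j$. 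This yields a quasi-path of length $2\ell$, hence $\ell\le\lfloor q/2\rfloor$. The missing idea in your plan is to let the $B_i$ be large nested sets obtained by stripping non-private vertices, rather than singletons.
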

\begin{proof}
  Call a vertex $u\in B$ a {\em{private neighbor}} of a vertex
  $v\in A$ is $u$ is the only neighbor of~$v$ in $B$.  Consider the
  following procedure which iteratively removes vertices from $A$ and
  $B$ until $A$ becomes empty.  The procedure proceeds in rounds,
  where each round consists of two reduction steps, performed in
  order:

%
%
%
%

\begin{enumerate}
\item {\bf{$B$-reduction:}} As long as there exists a vertex $u\in B$
  that is not a private neighbor of any $v\in A$, remove $u$ from $B$.
\item {\bf{$A$-reduction:}} Remove all vertices from $A$ that have
  exactly one neighbor in $B$.
\end{enumerate}

Observe that in the $B$-reduction step we never remove any vertex that
is a private neighbor of some vertex in $A$, so during the procedure
we maintain the invariant that every vertex of~$A$ has at least one
neighbor in $B$.  Note also that in any round, after the $B$-reduction
step the set $B$ remains nonempty, due to the invariant, and then
every vertex of $B$ is a private neighbor of some vertex of $A$.
Thus, the $A$-reduction step will remove at least one vertex from $A$
per each vertex of $B$, so the size of $A$ decreases in each round.
Consequently, the procedure stops after a finite number of rounds, say
$\ell$, when $A$ becomes empty.

Let $B_1,\ldots,B_\ell$ be subsets of the original set $B$ such that
$B_i$ denotes $B$ after the $i$th round of the procedure.  Further,
let $A_1,\ldots,A_\ell$ be the subsets of the original set $A$ such
that $A_i$ comprises vertices removed from $A$ in the $i$th round.
Note that $A_1,\ldots,A_\ell$ form a partition of $A$.  The following
properties follow directly from the construction:
\begin{enumerate}
\item\label{p:1} Every vertex of $A_i$ has exactly one neighbor in
  $B_i$, for each $1\leq i\leq \ell$.
\item\label{p:2} Every vertex of $A_i$ has at least two neighbors in
  $B_{i-1}$, for each $2\leq i\leq \ell$.
\item\label{p:3} Every vertex of $B_i$ has at least one neighbor in
  $A_i$, for all $1\leq i\leq \ell$.
\end{enumerate}
For Property~\ref{p:2} observe that otherwise such a vertex would be
removed in the previous round.

Property \ref{p:1} implies that subsets $B_1,\ldots,B_\ell$ satisfy
the property requested in the lemma statement.  Hence, it suffices to
show that $\ell$, the number of rounds performed by the procedure, is
universally bounded by some constant $p$ depending on the class $\CCC$
only.

Take any vertex $v_\ell\in A_\ell$. By Property~\ref{p:1} and
Property~\ref{p:2}, it has at least two neighbors in~$B_{\ell-1}$, out
of which one, say $u_\ell$, belongs to $B_\ell$, and another, say
$u_{\ell-1}$, belongs to $B_{\ell-1}-B_\ell$. Next, by
Property~\ref{p:3} we have that $u_{\ell-1}$ has a neighbor
$v_{\ell-1}\in A_{\ell-1}$. Observe that $v_{\ell-1}$ cannot be
adjacent to $u_\ell$, because $v_{\ell-1}$ has exactly one neighbor in
$B_{\ell-1}$ by Property~\ref{p:1} and it is already adjacent to
$u_{\ell-1}\neq u_{\ell}$. Again, by Property~\ref{p:1} and
Property~\ref{p:2} we infer that $v_{\ell-1}$ has another neighbor
$u_{\ell-2}\in B_{\ell-2}-B_{\ell-1}$.  In turn, by Property~\ref{p:3}
again $u_{\ell-2}$ has a neighbor $v_{\ell-2}\in A_{\ell-2}$, which is
non-adjacent to both $u_{\ell-1}$ and $u_{\ell}$, because $u_{\ell-2}$
is its sole neighbor in~$B_{\ell-2}$.  Continuing in this manner we
find a sequence of vertices
$$(v_1,u_1,v_2,u_2,\ldots,v_\ell,u_\ell)$$
with the following properties: each two consecutive vertices in the
sequence are adjacent and for each $i<j$, $v_i$ is non-adjacent to
$u_j$. This is a quasi-path of length $2\ell$.  By
\cref{lem:exc-quasi}, there is a universal bound $q$ depending only on
$\CCC$ on the length of quasi-paths in $G$, implying that we may take
$p=\lfloor q/2\rfloor$.
\end{proof}

\section{Proof of \cref{lem:parallel}}

\begin{proof}[of \cref{lem:parallel}]
  It is enough to consider the case when $\interp I$ is an atomic
  operation.
  We assume that the input structure is a bundling $\bigcup \cal K^X$
  of $\cal K$, given by a function $f\from V(\bigcup \cal K)\to X$.
  Note that elements of $V(\bigcup \cal K)$ can be identified in the
  structure as those that are in the domain of $f$.
	
  Let $\sim$ be the equivalence relation on $V(\bigcup \cal K)$, where
  $x\sim y$ if and only if $f(x)=f(y)$.  Note that $\sim$ can be added
  to the structure by an extension operation parameterized by the
  formula $f(x)=f(y)$.  We now consider cases depending on what atomic
  operation~$\interp I$~is.
\begin{itemize}
\item If $\interp I$ is a reduct or restriction operation, then we set
  ${\interp I}^\star=\interp I$ (we may assume that a restriction does
  not remove elements of $X$ by appropriate relativization, so
  that~${\interp I}^\star$ indeed outputs a bundling).

\item If $\interp I$ is an extension operation parameterized by a
  quantifier-free formula $\phi(x_1,\ldots,x_k)$, then set
  ${\interp I}^\star$ to be the extension operation parameterized by
  the formula
  $\phi(x_1,\ldots,x_k)\land\bigwedge_{i,j\in [k]} (x_i\sim x_j)$.

\item If $\interp I$ is a function extension operation parameterized
  by a formula $\varphi(x,y)$, then set $\interp I^\star$ to be
  function extension operation parameterized by the formula
  $\varphi(x,y)\land (x\sim y)$.
  
\item If $\interp I$ is a copy operation, then $\interp I^\star$ is
  defined as the composition of a copy operation and a function
  extension operation that introduces a new function $f^\star$ in
  place of~$f$ defined as follows. We first define a function
  $\mathsf{origin}(x)$ as follows.  Recall that when copying, we
  introduce a new unary predicate, say $P$, marking the newly created
  vertices and each vertex is made adjacent to its new copy. We let
  $\mathsf{origin}(x)$ be defined by
  $\psi_{\mathsf{origin}}(x,y)\coloneqq P(x)\wedge E(x,y)$. We now
  define $f^\star(x)=f(\mathsf{origin}(x))$. The resulting bundling is
  given by the function $f^\star$.

\item If $\interp I$ is a unary lift, say parameterized by a function
  $\sigma$, then set ${\interp I}^\star$ to be the unary lift
  parameterized by the function $\sigma^\star$ that applies $\sigma$
  to each structure from $\cal K$ separately, investigates all
  possible ways of picking one output for each structure in $\cal K$,
  and returns the set of bundlings of sets formed in this
  way.
\end{itemize}
\end{proof}

\section{Quantifier elimination}\label{app:qe}

In this section we provide the missing proofs of the lemmas from
\cref{sec:qe}.

  \subsection{Proof of \cref{lem:star}}
  
\begin{proof}[of \cref{lem:star}]
  We show that if $\CCC$ is a class of graphs of bounded expansion,
  $G\in \CCC$ and $f\from V(G)\partto V(G)$ is a partial function that
  is guarded by $G$, then~$f$ is $\ell$-guidable, for some $\ell$
  depending only on $\CCC$. Then the claim of the lemma follows by
  \cref{lem:local}.
  
  First, consider the special case when $\CCC$ is a class of treedepth
  $h$, for some $h\in\N$.  For each $G\in \CCC$, fix a forest $F$ of
  depth $h$ with $V(F)=V(G)$ such that every edge in~$G$ connects
  comparable nodes of $F$. Label every vertex $v$ of $G$ by the depth
  of $v$ in the forest $F$, using labels $\set{1,\ldots,h}$. It is
  easy to see that the corresponding partition of $V(G)$ is a guidance
  system of order $h$ for $f$.

  Now the general case, when $\CCC$ is a class which has a $2$-cover
  $\cal U$ of bounded treedepth. Let
  $N=\sup\set{|\cal U_G|:G\in \CCC}$, and let $h$ be the treedepth of
  the class $\CCC[\cal U]$. Let $G\in \CCC$ be a graph and let
  $f\from V(G)\to V(G)$ be a function which is guarded by $G$.  Then
  $f|_U$ is $h$-guidable by the previous case, and hence $f$ is
  $(h\cdot N)$-guidable by \cref{lem:glue-directable}.
\end{proof}

\subsection{Proof of \cref{lem:qe-trees0}: quantifier elimination on trees of bounded depth}\label{sec:automata}
We first give a quantifier elimination procedure for colored trees of
bounded depth.  In the following, we consider {\em{$\Sigma$-labeled
    trees}}, that is, unordered rooted trees $t$ where each node is
labeled with exactly one element of $\Sigma$. We write $t(v)$ for the
label of a node $v$ in the tree $t$.  In this section we model trees
by their parent functions, that is, we consider them as structures
where the universe of the structure is the node set, there is a unary
relation for each {\em{label}} from $\Sigma$, and there is one partial
function that maps each node to its parent (the roots are not in the
domain).  A $\Gamma$-relabeling of a $\Sigma$-labeled tree~$t$ is any
$\Gamma$-labeled tree whose underlying unlabeled tree is the same as
that of $t$.  As usual, a class of trees $\TTT$ has {\em{bounded
    height}} if there exists $h\in \N$ such that each tree in $\TTT$
has height at most $h$.

For convenience we now regard sets of free variables of formulas,
instead of traditional tuples.  That is, if $\varphi$ is a formula
with free variables $X$ and $\nu\colon X\to V(t)$ is a valuation of
variables from $X$ in a tree $t$, then we write $t,\nu\models \varphi$
if the formula $\varphi$ is satisfied in $t$ when its free variables
are evaluated as prescribed by $\nu$.

Our quantifier elimination procedure is provided by the
following~lemma, which implies \cref{lem:qe-trees0}.

\begin{lemma}\label{le:qe-forests}
  Let $\TTT$ be a class of $\Sigma$-labeled trees of bounded height
  and let $\varphi$ be a first-order formula over the signature of
  $\Sigma$-labeled trees with free variables $X$.  Then there exists a
  finite set of labels $\Gamma$, a $\Gamma$-relabeling $\wh t$ of $t$,
  and a quantifier-free formula $\wh \phi$ over the signature of
  $\Gamma$-labeled trees with free variables $X$, such that for each
  valuation $\nu$ of $X$ in $t$ we have
$$t,\nu\models \phi\qquad\textrm{if and only if}\qquad \wh t,\nu\models \wh \phi.$$
\end{lemma}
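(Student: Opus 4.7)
I will induct on the structure of the formula $\phi$. The base case, where $\phi$ is atomic, is trivial: take $\Gamma = \Sigma$, $\widehat{t} = t$, and $\widehat{\phi} = \phi$. Boolean connectives are handled by pairing the alphabets supplied by the inductive hypothesis on the two subformulas (a node's new label becomes the pair of its labels under the two subformula constructions) and forming the corresponding Boolean combination of the two quantifier-free formulas. The only substantive case is existential quantification, $\phi(X) = \exists y.\,\psi(X \cup \{y\})$.

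By the inductive hypothesis applied to $\psi$, there exist a finite alphabet $\Gamma'$, a $\Gamma'$-relabeling $t'$ of $t$, and a quantifier-free formula $\widehat{\psi}(X \cup \{y\})$ such that $t, \mu \models \psi$ iff $t', \mu \models \widehat{\psi}$ for every valuation $\mu$ of $X \cup \{y\}$. Since $\widehat{\psi}$ is quantifier-free over the signature that consists only of unary labels, equality, and the partial parent function, its truth on $\mu$ depends only on the $\Gamma'$-labels and the equality pattern among the nodes $\mu(X \cup \{y\})$ together with their iterated $\parent$-images, which by the height bound $h$ form a set of at most $(|X|+1)(h+1)$ elements. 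Whether $\widehat{\psi}$ holds is therefore determined by the finite \emph{configuration} of the tuple: the ancestor relations among the tagged nodes (who is an ancestor of whom, and at what depth difference) together with the $\Gamma'$-labels decorating this pattern.

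The plan is then to enlarge $\Gamma'$ by a single bottom-up sweep on the tree so that the existence of a witness $y$ becomes quantifier-freely checkable. For each length $j \le h$ and each sequence $P \in (\Gamma')^{j+1}$, introduce a unary flag $D_{P}$ with $D_{P}(v)$ asserting that some node $w$ in the subtree rooted at $v$ satisfies $(t'(w), t'(\parent(w)), \ldots, t'(\parent^j(w))) = P$. There are only finitely many such flags, and all their values can be computed by one synchronous bottom-up pass; let $\Gamma$ consist of $\Gamma'$ together with these flags and let $\widehat{t}$ be the corresponding relabeling of $t$. To build $\widehat{\phi}$, case-split on the topological configuration of $y$ relative to $\nu(X)$ and to the roots: $y$ either equals some $x_i$, lies on the ancestor spine of some $x_i$, or lies in a subtree hanging off such a spine. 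In each case, existence of a witness satisfying $\widehat{\psi}$ reduces to a finite disjunction of quantifier-free conditions that inspect iterated parents $\parent^i(x_j)$ (and roots, detected as nodes on which $\parent$ is undefined), together with the $D_P$ flags applied to such explicitly named nodes. The main obstacle is the combinatorial bookkeeping of this case analysis — enumerating all topological positions of $y$ and matching each to the correct combination of $\parent$-iterates and $D_P$ flags; bounded height guarantees that this enumeration is finite, and hence that the resulting $\widehat{\phi}$ is genuinely quantifier-free over $\Gamma$.
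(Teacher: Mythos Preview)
Your inductive strategy is a legitimate alternative to the paper's automata-based argument, and the reduction of the existential case to a case split on the position of $y$ relative to the $X$-skeleton is the right picture. However, the Boolean flags $D_P$ you introduce are too weak, and the proof as written has a genuine gap in the subcase where $y$ hangs off the skeleton.

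Here is a concrete counterexample. Take height-$2$ trees over $\Sigma=\{a,b\}$, with $X=\{x\}$, and let
\[
\phi(x)\;=\;\exists y.\bigl(\parent(y)=\parent(x)\ \wedge\ y\neq x\ \wedge\ \text{label}(y)=a\bigr),
\]
i.e.\ ``$x$ has a sibling labeled $a$''. The inner formula is already quantifier-free, so your inductive step applies directly. Consider two trees: in $T_1$ the root has one child $c$, and $c$ has a single leaf $x$ with label $a$; in $T_2$ the root has one child $c$, and $c$ has two leaves $x,x'$, both labeled $a$. Then $\phi(x)$ is false in $T_1$ and true in $T_2$. But every node reachable from $x$ by iterated $\parent$ (namely $x$, $c$, and the root) carries exactly the same label and exactly the same $D_P$-flags in both trees: the \emph{set} of label-paths occurring below $c$ is $\{(a,\text{label}(c))\}$ in both cases, and likewise below the root. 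Hence no quantifier-free formula in your enlarged alphabet can separate $T_1$ from $T_2$.

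The missing idea is \emph{threshold counting}. When $y$ branches off the $X$-skeleton at a node $v$, you must certify a witness in some child subtree of $v$ that is \emph{not} one of the (at most $|X|$) skeleton children of $v$. A Boolean ``some descendant has path $P$'' cannot exclude the skeleton children. The fix is to record, for each $P$, not just whether at least one child subtree of $v$ contains a $P$-witness, but \emph{how many} do, capped at $|X|+1$. Then ``there is a non-skeleton witness below $v$'' becomes: the capped count strictly exceeds the number of skeleton children of $v$ whose subtrees contain a $P$-witness --- and the latter is quantifier-free, since each skeleton child is a term $\parent^k(x_i)$. With this amendment your induction goes through. This is exactly the information the paper's proof encodes via threshold tree automata: the relabeling there stores at each node the \emph{multiset} of child states truncated at threshold $K+|X|$, which is the automata-theoretic packaging of the same counting.
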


The result immediately lifts to classes of forests of bounded depth,
which are modeled the same way as trees, i.e., using a unary parent
function.
\begin{corollary}\label{cor:qe-forests}
  The same statement as above holds for a class $\FFF$ of
  $\Sigma$-labeled forests of bounded height and a first-order formula
  $\psi$ over the signature $\Sigma$-labeled forests.
\end{corollary}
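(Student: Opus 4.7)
The plan is to reduce the corollary to Lemma~\ref{le:qe-forests} by adjoining a fresh super-root to each forest, invoking the tree case, and then translating the resulting quantifier-free formula back to the forest. Fix $h$ bounding the height of forests in $\FFF$. For each $F \in \FFF$, I would form the $(\Sigma \cup \set{\star})$-labeled tree $T_F$ by adding a fresh node $r_F$ with a new label $\star \notin \Sigma$ and making every root of $F$ a child of $r_F$. The class $\TTT \coloneqq \set{T_F \colon F \in \FFF}$ has height at most $h+1$, and $F$ is recovered from $T_F$ by deleting the unique $\star$-labeled node.

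Next I would translate $\psi$ into a formula $\psi^\dagger$ over $(\Sigma \cup \set{\star})$-labeled trees such that $F, \nu \models \psi$ iff $T_F, \nu \models \psi^\dagger$ for every $F \in \FFF$ and every valuation $\nu$ of $X$ into $V(F)$. This is a standard relativization: every quantifier is guarded by $\neg \star(\cdot)$, and each atomic subformula involving a term $\tau(x)$ is conjoined with guards forcing every intermediate evaluation of $\tau$ to have label in $\Sigma$. A routine induction on $\psi$ gives the equivalence, using that a term is undefined in the forest precisely when its evaluation in $T_F$ is either undefined or equals $r_F$.

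Applying Lemma~\ref{le:qe-forests} to $\TTT$ and $\psi^\dagger$ yields a finite label set $\Gamma$, a $\Gamma$-relabeling $\wh{T_F}$ of each $T_F$, and a quantifier-free formula $\wh{\psi^\dagger}$ equivalent to $\psi^\dagger$ on the relabeled trees. Let $K$ denote the maximal number of parent-function applications occurring in any term of $\wh{\psi^\dagger}$. I would enrich the labels to $\Gamma' \coloneqq \Gamma \times (\Gamma \cup \set{\bot})^{K}$ and define the $\Gamma'$-relabeling $\wh F$ by assigning to each $v \in V(F)$ its $\wh{T_F}$-label together with the sequence of $\wh{T_F}$-labels of the first $K$ ancestors of $v$ in $T_F$, using $\bot$ for nonexistent ancestors. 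Then I would obtain $\wh\psi$ from $\wh{\psi^\dagger}$ by replacing each atomic subformula with an equivalent Boolean combination over $\Gamma'$-labeled forests: for every free variable $x$ and every $k \leq K$, the $\wh{T_F}$-value of $f^k(x)$ and its $\wh{T_F}$-label can be read off from the enriched label of $x$, since the forest's parent function agrees with the tree's as long as both are defined, and the $\bot$-markers together with the recorded ancestor labels account for the cases in which the tree evaluation leaves $V(F)$ or becomes undefined.

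The main obstacle lies in this last translation step: a quantifier-free formula produced by Lemma~\ref{le:qe-forests} for $\TTT$ may apply the parent function to a free variable enough times to reach $r_F$ (or beyond), whereas in the forest $F$ the parent function is undefined at original roots. Encoding the first $K$ ancestor labels into $\wh F$ is precisely what reconciles this bounded-information mismatch; once this is in place, everything else is routine syntactic manipulation.
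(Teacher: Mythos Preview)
Your proposal is correct and follows essentially the same approach as the paper: adjoin a fresh root, relativize, apply the tree lemma, and push the missing information about the root back into the node labels. The only difference is that the paper stores just the single $\Gamma$-label of $r_F$ at every node (since that is the only element of $\wh{T_F}$ not already in $\wh F$, and whether a term lands on $r_F$ can be detected via $\dom_f$ in the forest), whereas you store the full length-$K$ ancestor-label vector; your encoding is more redundant but equally valid.
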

\begin{proof}
  Let $\FFF$ be a class of $\Sigma$-labeled forests of bounded height
  and let $\psi$ be a first-order formula with free variables
  $X$. Construct a class of $\Sigma$-labeled trees $\TTT$, by
  prepending an unlabeled root $r_f$ to each forest $f$ in $\FFF$,
  yielding a tree $t_f$. We may rewrite the formula~$\psi$ to a
  first-order formula $\phi$ such that $f,\nu\models \psi$ if and only
  if $t_f,\nu\models\phi$, for every $f\in \FFF$ and every valuation
  $\nu$ of $X$ in $f$.

  Apply \cref{le:qe-forests} to $\TTT$, yielding a relabeling $\wh t$
  of each tree $t$ in $\TTT$, using some finite set of labels
  $\Gamma$.  This relabeling yields a relabeling $\wh f$ of each
  forest $f\in \FFF$, where each non-root node $v$ is labeled by a
  pair of labels: the label of $v$ in the tree $\wh t_f$, and the
  label of the root of $\wh t_f$. Furthermore, we have
  $t_f,\nu\models\phi$ if and only if $\wh t_f,\nu\models \wh \phi$,
  for every valuation $\nu$.  Note that all quantifier-free properties
  involving the prepended root $r_f$ in the $\Gamma$-labeled tree
  $\wh t_f$ can be decoded from the labeled forest $\wh f$: the unary
  predicates that hold in $r_f$ are encoded in all the vertices of
  $\wh f$, and $r_f$ is the parent of the roots of $\wh f$ (the
  elements for which the parent function is undefined).  It follows
  that we may rewrite the formula $\wh \phi$ to a formula $\wh\psi$
  such that $\wh t_f,\nu\models \wh\phi$ if and only if
  $\wh f,\nu\models \wh\psi$, for every valuation $\nu$ of $X$ in
  $f$. Reassuming, $f,\nu\models \psi$ if and only if
  $\wh f,\nu\models\wh \psi$, for every $f\in \FFF$ and every
  valuation $\nu$ of $X$ in $f$.
\end{proof}
\cref{cor:qe-forests} immediately implies \cref{lem:qe-trees0}. It
remains to prove \cref{le:qe-forests}. Before proving
\cref{le:qe-forests}, we recall some standard automata-theoretic
techniques.

\bigskip

\newcommand{\res}[2]{#2\!\downharpoonright_{\;#1}} We define tree
automata which process unordered labeled trees. Such automata process
an input tree $t$ from the leaves to the root
assigning states to each node in the tree.  The state assigned to the
current node $v$ depends only on the label $t(v)$ and the multiset of
states labeling the children of $v$, where the multiplicities are
counted only up to a certain fixed threshold. Because of that, we call
these automata \emph{threshold tree automata}.

\newcommand{\aut}[1]{\mathcal{#1}} 

We develop all the simple facts about tree automata needed for our
purposes below. We refer to~\cite{DBLP:books/ws/automata2012/Loding12}
for a general introduction. Note that what is usually considered under
the notion of \emph{tree automata} are automata which process
\emph{ordered} trees, i.e., trees where the children of each node are
ordered. Tree automata collapse in expressive power to threshold tree
automata in the case when they are required to be independent of the
order, i.e., if $\aut A$ is a tree automaton with the property that
for any two ordered trees~$t,t'$ which are isomorphic as unordered
trees, either both $t$ and $t'$ are accepted by $\aut A$ or both~$t$
and $t'$ are rejected by $\aut A$, then the language (i.e., set) of
trees accepted by $\aut A$ is equal to the language of trees accepted
by some threshold automaton. Therefore, the theory of threshold tree
automata is a very simple and special case of that of tree
automata. We now recall some simple facts about such automata.
 
\bigskip

Fix a set of labels $Q$. A $Q$-multiset is a multiset of elements of
$Q$. If $\tau$ is a number and $X$ is a $Q$-multiset, then by
$\res\tau X$ we denote the maximal multiset $X'\subset X$ where the
multiplicity of each element is at most $\tau$.  In other words, for
every element whose multiplicity in $X$ is more than $\tau$, we put it
exactly $\tau$ times to $X'$; all the other elements retain their
multiplicities.

\medskip
  
We define \emph{threshold tree automata} as follows.  A threshold tree
automaton is a tuple $(\Sigma, Q,\tau, \delta, F)$, consisting~of
\begin{itemize}
  \item a finite input alphabet $\Sigma$;
  \item a finite state space $Q$;
  \item a \emph{threshold} $\tau\in \N$;
  \item a transition relation $\delta$, which is a finite set of rules
    of the form $(a,X,q)$, where $a\in \Sigma$, $q\in Q$, and $X$ is a
    $Q$-multiset in which each element occurs at most $\tau$ times;
    and
  \item an \emph{accepting condition} $F$, which is a subset of $Q$.
\end{itemize}

A \emph{run} of such an automaton over a $\Sigma$-labeled tree $t$ is
a $Q$-labeling $\rho:V(t)\to Q$ of~$t$ satisfying the following
condition for every node $x$ of $t$:
\begin{quote}
  If $t(x)=a, \rho(x)=q$ and $X$ is the multiset of the $Q$-labels of
  the children of~$x$ in $t$, then $(a,\res\tau X,q)\in \delta$.
\end{quote}

The automaton \emph{accepts} a $\Sigma$-labeled tree $t$ if it has a
run $\rho$ on $t$ such that $\rho(r)\in F$, where $r$ is the root of
$t$.  The \emph{language} of a threshold tree automaton is the set of
$\Sigma$-labeled trees it accepts.  A language $L$ of $\Sigma$-labeled
trees is {\em{threshold-regular}} if there is a threshold tree
automaton whose language is $L$; we also say that this automaton
{\em{recognizes}} $L$.

An automaton is \emph{deterministic} if for all $a\in \Sigma$ and all
$Q$-multisets $X$ in which each element occurs at most $\tau$ times
there exists $q$ such that $(a,X,q)\in \delta$ and whenever
$(a,X,q),(a,X,q')\in \delta$, then $q=q'$.  Note that a deterministic
automaton has a unique run on every input tree.

The next lemma explains basic properties of threshold tree automata
and follows from standard automata constructions.  In the lemma we
speak about {\em{monadic second-order logic}} (MSO), which is the
extension of first-order logic by quantification over unary
predicates.

\pagebreak
\begin{lemma}\label{lem:automata}
The following assertions hold:
\begin{enumerate}[(1)]
\item\label{determinisation} For every threshold automaton there is a
  deterministic threshold automaton with the same language.
\item\label{closure-bc} Threshold-regular languages are closed under
  boolean operations.
\item\label{closure-relabel} If $f\colon \Sigma\to \Gamma$ is any
  function and $L$ is a threshold-regular language of $\Sigma$-labeled
  trees, then the language $f(L)$ comprising trees obtained from trees
  of $L$ by replacing each label by its image under $f$ is also
  threshold-regular.
\item\label{formula-to-automata} For every MSO sentence $\phi$ in the
  language of\, $\Sigma$-labeled trees there is a deterministic
  threshold automaton $\aut A_\phi$ whose language is the set of trees
  satisfying~$\phi$.
\end{enumerate}
\end{lemma}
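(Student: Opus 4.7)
The plan is to follow the standard automata-theoretic toolkit for tree automata, adapted to the unordered, multiset-threshold setting. The four items are proved in the order given, with each using the preceding ones.

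For (\ref{determinisation}) I apply a subset construction. Given a nondeterministic automaton $(\Sigma, Q, \tau, \delta, F)$, I build a deterministic automaton over state space $Q' = \mathcal{P}(Q)$. At a node labelled $a$ whose children carry a multiset $M$ of subsets of $Q$, the new state collects every $q \in Q$ such that there is a choice function selecting one element from each occurrence of a set in $M$, producing a $Q$-multiset $X$ with $(a, \res\tau X, q) \in \delta$; the accepting set is $F' = \{S \subseteq Q : S \cap F \neq \emptyset\}$. By induction on the depth of a node, the unique run of the deterministic automaton labels the node by the exact set of states reachable there by a nondeterministic run of the original automaton, which gives equivalence of languages. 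The subtle point is the choice of a new threshold $\tau'$ so that the new transition depends only on $\res{\tau'}M$; taking $\tau' = \tau \cdot |Q|$ suffices, because that many copies of every subset of $Q$ already realise all possible truncated choice multisets.

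For (\ref{closure-bc}), complementation of a deterministic automaton is immediate by swapping $F$ with $Q \setminus F$, so (\ref{determinisation}) gives closure under complement. Intersection is handled by a product construction on $Q_1 \times Q_2$ with threshold large enough (roughly $\max(\tau_1, \tau_2) \cdot |Q_1| \cdot |Q_2|$) that the truncated multiset of children pair-states determines the truncated componentwise projections used by the two original transition relations. Union then follows by de Morgan. For (\ref{closure-relabel}) I construct a nondeterministic automaton for $f(L)$ by keeping the state space, threshold, and accepting set of the automaton for $L$, and replacing each rule $(a, X, q) \in \delta$ by the rules $(f(a), X, q)$; runs of the new automaton on a $\Gamma$-labelled tree $t'$ correspond bijectively to pairs consisting of a preimage $t \in f^{-1}(t')$ and a run of the original on $t$.

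For (\ref{formula-to-automata}) I follow the Büchi--Thatcher--Wright translation. First I eliminate first-order variables by replacing each first-order quantifier $\exists x$ with a monadic quantifier $\exists U_x$ constrained to singletons, so that all variables become monadic. Valuations of the $k$ free monadic variables are encoded into the tree itself by moving to the enriched alphabet $\Sigma \times \{0,1\}^k$. I then induct on the MSO formula over this alphabet: atoms such as $U_x(y)$, label membership, and the parent atom $\mathsf{parent}(y) = z$ impose only local constraints at the marked positions, and each is easily recognised by a small deterministic threshold automaton whose state propagates up the tree just enough information about whether the marked positions have been seen and whether the required adjacency has been witnessed; boolean combinations use (\ref{closure-bc}); and existential monadic quantification $\exists X.\,\varphi$ uses (\ref{closure-relabel}) to project the $X$-component of the alphabet away. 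A final application of (\ref{determinisation}) yields the deterministic automaton $\aut A_\phi$ promised in the statement. I expect the main obstacle, shared across all four parts, to be the bookkeeping of thresholds: verifying that the truncation-at-$\tau$ semantics is preserved under subset construction, product, and projection, and checking that atomic MSO constraints involving the parent function and equality of first-order variables can be read off from a truncated multiset of children states. The latter holds because such atoms constrain only a bounded number of marked positions, and a bounded amount of information in the state suffices to track whether they have been matched.
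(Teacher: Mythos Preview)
Your proposal is correct and follows essentially the same approach as the paper: powerset determinisation for (\ref{determinisation}), product and complement for (\ref{closure-bc}), nondeterministic guessing of preimage labels for (\ref{closure-relabel}), and the standard induction on MSO formulas using (\ref{determinisation})--(\ref{closure-relabel}) for (\ref{formula-to-automata}). You supply more detail than the paper (in particular the threshold bookkeeping and the encoding of first-order variables as singleton monadic predicates), but the underlying constructions are identical.
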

\begin{proof}
  Assertion \ref{determinisation} follows by applying the standard
  powerset determinization construction.  For assertion
  \ref{closure-bc}, it follows from \ref{determinisation} that every
  threshold-regular language is recognized by a deterministic
  threshold tree automaton.  Then, for conjunctions we may use the
  standard product construction and for negation we may negate the
  accepting condition.  For assertion \ref{closure-relabel}, an
  automaton recognizing $f(L)$ can be constructed from an automaton
  recognizing $L$ by nondeterministically guessing labels from
  $\Sigma$ consistently with the given labels from $\Gamma$, so that
  the guessed $\Sigma$-labeling is accepted by the automaton
  recognizing $L$.  Now assertion \ref{formula-to-automata} follows
  from~\ref{determinisation},~\ref{closure-bc},
  and~\ref{closure-relabel} in a standard way, because every MSO
  formula can be constructed from atomic formulas using boolean
  combinations and existential quantification (which can be regarded
  as a relabeling $f$ that forgets the information about the
  quantified set).
\end{proof}


Let $X$ be a finite set of (first-order) variables and let
$\Sigma_X=\Sigma\times \pow{X}$.
Given a tree $t$ and a partial valuation $\nu\from X\partto V(t)$, let
$t\otimes \nu$ be the $\Sigma_X$-tree obtained from $t$, by replacing,
for each node $u$ of $t$, the label $a$ of $u$ by the pair $(a,Y)$
where $Y=\nu^{-1}(u)\subseteq X$.

Toward the proof of \cref{le:qe-forests}, consider a first-order
formula $\phi$ over $\Sigma$-labeled trees with free variables $X$.
We can easily rewrite $\varphi$ to a first-order sentence $\psi$ over
$\Sigma_X$-labeled trees such that $t,\nu\models \phi$ if and only if
$t\otimes \nu \models \psi$ for every $\Sigma$-labeled tree~$t$ and
valuation $\nu\colon X\to V(t)$.  By
\cref{lem:automata}\ref{formula-to-automata} there is a deterministic
threshold automaton $\aut A_{\psi}$ whose language is exactly the set
of $\Sigma_X$-labeled trees satisfying $\psi$.

Denote by $Q$ the set of states and by $K$ the threshold of
$\aut A_{\psi}$, and let $M=K+|X|$.  Denote by $\Delta$ the set of
$Q$-multisets in which every element occurs at most $M$ times.

Given a $\Sigma$-labeled tree $t$ and a partial valuation
$\nu\from X\partto V(t)$, define $\rho_\nu$ as the $Q$-labeling of $t$
which is the unique run of $\aut A_{\psi}$ over $t\otimes \nu$.  For a
node $u$ of $t$, let $C_\nu(u)$ be the $Q$-multiset defined as
follows:
    \begin{align*}    C_\nu(u)&=\set{\rho_\nu(w):\textit{ $w$ is a child of $u$ in $t$}}.
\end{align*}

Define a new set of labels $\Gamma=\Sigma\times \Delta$, and a
$\Gamma$-relabeling $\wh t$ of $t$ as follows: for each $u\in V(t)$,
say with label $a\in \Sigma$ in $t$, the label of $u$ in $\wh t$ is
the pair $(a,\res M{C_\emptyset(u)})$, where~$\emptyset$ is the
partial valuation that leaves all variables of $X$ unassigned.  Our
goal now is to prove that this relabeling $\wh t$ of $t$ satisfies the
conditions expressed in \cref{le:qe-forests}.  To this end, given a
valuation $\nu$ of $X$ in $\wh t$, let $\wh t|_\nu$ denote the
$\Gamma_X$-labeled tree obtained from $\wh t\otimes \nu$ by
restricting the node set to the set of ancestors of nodes in the image
$\nu(X)$ of $\nu$.

\begin{lemma}\label{cor:set-defines}
  There is a set of $\Gamma_X$-labeled trees $\cal R$ such that for
  every $\Sigma$-labeled tree $t$ and valuation $\nu$ of $X$
  in~$t$,
  $$t,\nu\models \phi\textit{\quad if and only if\quad}\wh t|_\nu\in
  \cal R.$$
\end{lemma}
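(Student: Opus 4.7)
\medskip
\noindent\textit{Proof plan.}
The plan is to simply define
\[
\mathcal{R} \;=\; \bigl\{\,\wh t|_\nu \;:\; t \text{ is a $\Sigma$-labeled tree},\ \nu\from X\to V(t),\ t,\nu\models\phi\,\bigr\},
\]
taken up to isomorphism of $\Gamma_X$-labeled trees. The forward implication is then immediate, and the converse will follow once I show that whether $t,\nu\models\phi$ depends only on $\wh t|_\nu$ up to isomorphism. Since $\aut{A}_\psi$ is deterministic, $t,\nu\models\phi$ is equivalent to $\rho_\nu(r)\in F$ at the root $r$, so it suffices to show that $\rho_\nu(u)$ can be recovered from $\wh t|_\nu$ for every $u\in V(\wh t|_\nu)$.

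I would prove this by induction on $u$, processing $\wh t|_\nu$ from its leaves upwards. Fix $u\in V(\wh t|_\nu)$; its children in $t$ split into the \emph{marked} ones (whose subtree meets $\nu(X)$, equivalently those that survive into $\wh t|_\nu$) and the \emph{pure} ones (whose subtrees contain no element of $\nu(X)$). For a pure child $w$ the restriction of $\nu$ to the subtree at $w$ is empty, so $\rho_\nu(w)=\rho_\emptyset(w)$; for a marked child $w$, $\rho_\nu(w)$ is available by induction. The multiset of children states under $\rho_\nu$ therefore equals
\[
\bigl\{\rho_\nu(w):w\text{ marked child}\bigr\}\;\sqcup\;\bigl(C_\emptyset(u)\setminus\{\rho_\emptyset(w):w\text{ marked child}\}\bigr),
\]
and I want its $K$-truncation. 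I recover $\rho_\emptyset(w)$ for each marked child $w$ as a single transition lookup: the label of $w$ in $\wh t|_\nu$ records $(t(w),\res M{C_\emptyset(w)})$, and since $M\ge K$ this determines $\res K C_\emptyset(w)$, hence $\rho_\emptyset(w)$ via $\delta$ applied to the label $(t(w),\emptyset)$ used in the run on $t\otimes\emptyset$. Finally, I apply $\delta$ to the label $(t(u),\nu^{-1}(u))$ (both components are encoded in the $\Gamma_X$-label of $u$) together with the $K$-truncated multiset above to obtain $\rho_\nu(u)$.

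The one place a careful check is needed, and which I view as the main (if minor) obstacle, is the arithmetic of truncations: we know only $\res M{C_\emptyset(u)}$, whereas the true multiset $C_\emptyset(u)$ may have larger multiplicities. The point is that at most $|X|$ children of $u$ can be marked, so the subtraction $C_\emptyset(u)\setminus\{\rho_\emptyset(w):w\text{ marked child}\}$ removes at most $|X|$ elements, and whenever some state has multiplicity exceeding $M=K+|X|$ in $C_\emptyset(u)$, the same state still has multiplicity at least $K$ in the resulting multiset, so $K$-truncation gives the correct answer regardless of whether we work with $C_\emptyset(u)$ or with $\res M C_\emptyset(u)$. This is precisely why the definition takes $M=K+|X|$. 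Once the reconstruction of $\rho_\nu$ is established, the well-definedness of $\mathcal{R}$, and hence the lemma, follows.
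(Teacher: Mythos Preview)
Your proposal is correct and follows essentially the same approach as the paper: both arguments reconstruct $\rho_\nu(u)$ bottom-up along $\wh t|_\nu$ using the decomposition $C_\nu(u)=(C_\emptyset(u)-N_\emptyset(u))+N_\nu(u)$ (your ``pure/marked'' split is exactly the paper's ``nonempty'' children), recover $\rho_\emptyset(w)$ for marked children from the $\Gamma$-label of $w$, and invoke the same truncation arithmetic with $M=K+|X|$ to pass from $\res M{C_\emptyset(u)}$ to $\res K{C_\nu(u)}$. Your identification of the threshold arithmetic as the one nontrivial point matches the paper's Equation~(F.3) precisely.
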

\begin{proof}
  Fix a tree $t$ and a valuation $\nu$ of $X$ in $t$. We say that a
  node $u$ of $t$ is \emph{nonempty} if it has a descendant which is
  in the image of $\nu$.  For node $u$ of $t$ define the following
  $Q$-multisets:
  \begin{align*}    N_\emptyset(u)&=\set{\rho_\emptyset(w):\textit{ $w$ is a nonempty child of $u$}},\\
    N_\nu(u)&=\set{\rho_\nu(w):\textit{
              $w$ is a nonempty child of $u$}}.
  \end{align*}
  Note that since there are at most $|X|$ nonempty children of a given
  node $u$, there is a finite set $Z$ independent of $t$ and $\nu$
  such that the functions $N_\nu$ and $N_\emptyset$ take values in
  $Z$. Fix a node $u$ of $t$.
	
  \begin{claim}\label{cl:depends}
    The state $\rho_\nu(u)$ is uniquely determined by the label of $u$
    in $t\otimes\nu$, and the $Q$-multisets
    $\res M{C_\emptyset(u)},N_\emptyset(u)$ and $N_\nu(u)$, i.e., 
    there is a function
    $f\from \Sigma_X\times \Delta\times Z\times Z\to Q$ such that for
    every tree $t$, valuation $\nu$ and node $u$,
    \begin{align}\label{eq:dyn}
      \rho_\nu(u)=f(\,\mathrm{label\ of }\ u\ \mathrm{ in }\ t\otimes\nu\, , \,\res M{C_\emptyset(u)}\, ,\, N_\emptyset(u)\, ,\, N_\nu(u)\, ). 
    \end{align}     
  \end{claim}
  \begin{proof}
    Clearly $N_\emptyset(u)\subset C_\emptyset(u)$, as multisets.
    Moreover, the following equality among multisets holds:
    \begin{align}\label{eq:multiset}
      C_\nu(u)&=(C_\emptyset(u)-N_\emptyset(u))+N_\nu(u).
    \end{align}
    This is because the automaton $\aut A_\psi$ is deterministic and
    therefore $\rho_\nu(w)=\rho_\emptyset(w)$ for all nodes $w$ which
    are not nonempty.
    From~Equation~\ref{eq:multiset}, the fact that $N_\emptyset(u)$
    has at most~$|X|$ elements and $M=K+|X|$, it follows that
    \begin{align}\label{eq:multiset-res}
      \res K{((\res M {C_\emptyset(u)}-N_\emptyset(u))+N_\nu(u))}=\res K {(C_\nu(u))}.  
    \end{align}
    By definition of the run of $\aut A_\psi$ on $t\otimes\nu$, the
    state $\rho_\nu(u)$ is determined by the label of $u$ in
    $t\otimes \nu$ and by $\res K {(C_\nu(u))}$.  It follows
    from~Equation~\ref{eq:multiset-res} that $\rho_\nu(u)$ is uniquely
    determined by the label of $u$ in $t\otimes \nu$,
    $\res M{(C_\emptyset(u))}$, and the $Q$-multisets $N_\emptyset(u)$
    and $N_\nu(u)$, proving the claim.
\end{proof}

From \cref{cl:depends} it follows that the state $\rho_\nu(r)$, where
$r$ is the root of $t$, depends only on the tree $\wh t|_\nu$.
Indeed, we can inductively compute the states $\rho_\nu(u)$ and
$\rho_\emptyset(u)$, moving from the leaves of $\wh t|_\nu$ towards
the root, as follows.  Suppose $u$ is a node of $\wh t|_\nu$ such
that~$\rho_\nu(v)$ and~$\rho_\emptyset(v)$ have been computed for all
the nonempty children $v$ of $u$ (in particular, this holds if $u$ is
a leaf of $\wh t|_\nu$).  Then, we can determine the
multisets~$N_\nu(u)$ and $N_\emptyset(u)$ using their definitions, and
consequently, we can determine~$\rho_\nu(u)$ by~Equation~\ref{eq:dyn},
whereas~$\rho_\emptyset(u)$ only depends on $\res K{C_\emptyset(u)}$
and on the label of $u$ in $t$.  Note that both the label of $u$ in
$t$ and the multiset $\res K{C_\emptyset(u)}$ are encoded in the label
of $u$ in $\wh t$.

As shown above, for any tree $t$ and valuation $\nu$, the state of
$\rho_\nu$ at the root depends only on $\wh t|_\nu$.  On the other
hand, $t,\nu\models\phi$ if and only if the state of $\rho_\nu(r)$ at
the root is an accepting state.  Hence, whether or not
$t,\nu\models \phi$, depends only on the tree $\wh t|_\nu$. This
proves the lemma.
\end{proof}

Finally, we observe the following.

\begin{lemma}\label{lem:induced-forest-check}
  For each $\Gamma_X$-labeled tree $s$ there exists a quantifier-free
  formula $\psi_s$ over the signature of $\Gamma$-labeled trees with
  free variables $X$ such that the following holds: for every
  $\Gamma$-labeled tree $t$ and valuation $\nu$ of $X$ in $t$, we have
$$t,\nu\models \psi_s\qquad\textrm{if and only if}\qquad\textrm{$t|_\nu$ is isomorphic to $s$.}$$
\end{lemma}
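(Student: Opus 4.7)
The plan is to exploit that, by definition, $V(t|_\nu)$ consists of the ancestors of $\nu(X)$ in $t$, so every node of $t|_\nu$ has the form $\parent^k(\nu(x))$ for some $x\in X$ and some $k\ge 0$. In particular, if $t|_\nu\cong s$ then $k\le h_s$, where $h_s$ denotes the height of $s$. Consequently, the isomorphism type of $t|_\nu$ is completely determined by the values and labels of the finitely many terms $\parent^k(x)$ for $x\in X$ and $0\le k\le h_s$, which is precisely the kind of information a quantifier-free formula with free variables $X$ can express.

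Concretely, if $s$ has some node with no descendant whose $\mathcal{P}(X)$-label contains a variable, then $s$ cannot arise as any $t|_\nu$ and I set $\psi_s:=\bot$. Otherwise, for each $v\in V(s)$ I pick a representative pair $(x_v,k_v)\in X\times\{0,\ldots,h_s\}$ such that $v$ is the $k_v$-th ancestor in $s$ of the node containing $x_v$ in its $\mathcal{P}(X)$-label (such a pair exists by the previous disposal), and associate to $v$ the term $\tau_v:=\parent^{k_v}(x_v)$, whose unique free variable is $x_v$. The formula $\psi_s$ is then the conjunction of the following quantifier-free assertions: first, for each $v\in V(s)$, the term $\tau_v$ is defined (expressible as $\dom_\parent$ applied to the appropriate subterm of $\tau_v$) and satisfies $L_{\gamma_v}(\tau_v)$, where $\gamma_v$ is the $\Gamma$-component of the label of $v$ in $s$; second, for all $v,v'\in V(s)$, $\tau_v=\tau_{v'}$ holds iff $v=v'$ in $s$; third, for each non-root $v$ of $s$ with parent $v'$ in $s$, $\parent(\tau_v)=\tau_{v'}$, and for each root $v$ of $s$, $\neg\dom_\parent(\tau_v)$; finally, for each $v\in V(s)$ with $\mathcal{P}(X)$-component of label $Y_v\subseteq X$ and each $x\in X$, $x=\tau_v$ iff $x\in Y_v$.

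It is then routine to verify that these conditions jointly characterize $t|_\nu\cong s$: assuming they all hold, the map $v\mapsto \tau_v^{t,\nu}$ is a well-defined bijection $V(s)\to V(t|_\nu)$---injective by the second clause, surjective because every node of $t|_\nu$ is of the form $\parent^k(\nu(x))$ for some $x,k$ and the third clause forbids climbing above the roots of $s$---that preserves the parent relation by the third clause and the $\Gamma_X$-labeling by the first and fourth. The only delicate aspect is the careful handling of the partial parent function: since atomic formulas involving an undefined term evaluate to false, each occurrence of $\tau_v$ must be properly guarded by $\dom_\parent$ on the relevant subterms in order to correctly separate ``undefined'' from ``defined with the wrong value''. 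This is pure bookkeeping, made trivial by the fact that the total number and depth of terms to be considered is bounded by $|V(s)|$ and $h_s$, and no more serious obstacle is anticipated.
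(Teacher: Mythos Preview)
Your approach is essentially the paper's: both exploit that every node of $t|_\nu$ is reachable from some $\nu(x)$ by iterating the parent function a bounded number of times, so a quantifier-free formula in the variables $X$ can describe the whole induced subtree. The paper's encoding is more compressed (it checks, for each $x\in X$, the depth of $\nu(x)$ and the labels along its root path, and for each pair $x,y\in X$ the depth of the least common ancestor of $\nu(x),\nu(y)$), whereas you explicitly name every node of $s$ by a term $\tau_v$; both encodings are fine.

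There is, however, a small but genuine gap in your surjectivity argument. Your disposal step handles the case where some node of $s$ has no descendant carrying a variable, but not the symmetric case where some variable $x\in X$ appears in \emph{no} $Y_v$. In that situation your clauses may all hold while $t|_\nu\not\cong s$. Concretely, take $X=\{x,y\}$ and let $s$ be a single node with $\mathcal{P}(X)$-component $\{y\}$ and $\Gamma$-component $\gamma$; your formula then reduces to ``$\nu(y)$ has label $\gamma$, $\nu(y)$ is the root, and $\nu(x)\neq\nu(y)$''. If $t$ is a two-node path with $\nu(y)$ at the root and $\nu(x)$ at the leaf, your $\psi_s$ holds, yet $t|_\nu$ has two nodes. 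Your surjectivity sketch silently assumes that each $\nu(x)$ equals some $\tau_v$, which your clause~4 only guarantees for those $x$ that occur in some $Y_v$. The fix is immediate: add to your disposal that if $\bigcup_{v\in V(s)} Y_v\neq X$ then $\psi_s:=\bot$ (such $s$ can never arise as $t|_\nu$, since in $t\otimes\nu$ every $x$ lies in exactly one label). With this amendment your argument goes through.
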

\begin{proof}
  Observe that the ancestors of nodes in $\nu(X)$ may be obtained by
  applying the parent function to them.  Thus, using a quantifier-free
  formula we may check whether each node of $\nu(X)$ lies at depth as
  prescribed by $s$, whether its ancestors have labels as prescribed
  by $s$, and whether the depth of the least common ancestor of every
  pair of nodes of $\nu(X)$ is as prescribed by $s$.  Then $t|_\nu$ is
  isomorphic to $s$ if and only if all these conditions hold.
\end{proof}

With all the tools prepared, we may prove \cref{le:qe-forests}.

\begin{proof}[of \cref{le:qe-forests}]
  Let $\cal R_h$ be the intersection of $\cal R$ with the class of
  trees of height at most $h$.  Since each tree from $\cal R$ has at
  most $|X|$ leaves by definition, $\cal R_h$ is finite and its size
  depends only on $|X|$ and $h$.  By \cref{cor:set-defines}, it now
  suffices to define~$\wh \phi$ as the disjunction of formulas
  $\psi_s$ provided by \cref{lem:induced-forest-check} over
  $s\in \cal R_h$.
\end{proof}

\end{document}